\theoremstyle{definition}
\newtheorem{definition}{\textbf{Definition}}[section]
\theoremstyle{remark}
\definecolor{codegreen}{rgb}{0,0.6,0}
\definecolor{codegray}{rgb}{0.5,0.5,0.5}
\definecolor{codepurple}{rgb}{0.58,0,0.82}
\definecolor{backcolour}{rgb}{0.95,0.95,0.92}
\definecolor{codeblue}{rgb}{0,0,1}
\definecolor{codegold}{rgb}{0.85,0.65,0.13}
\lstdefinestyle{mystyle}{
    backgroundcolor=\color{backcolour},   
    commentstyle=\color{codegreen},
    keywordstyle=\color{magenta},
    numberstyle=\tiny\color{codegray},
    stringstyle=\color{codepurple},
    basicstyle=\ttfamily\footnotesize,
    alsoletter={_},                 
    emphstyle=[2]\color{codeblue},   
    emphstyle=[3]\color{codegold},   
    emphstyle=[4]\color{magenta},
    breakatwhitespace=false,         
    breaklines=true,                 
    captionpos=b,                    
    keepspaces=true,                 
    numbers=left,                    
    numbersep=5pt,                  
    showspaces=false,                
    showstringspaces=false,
    showtabs=false,                  
    tabsize=2
}
\begin{document}

\title[Phaedrus: Predicting Dynamic Application Behavior with Lightweight Generative Models and LLMs]{Phaedrus: Predicting Dynamic Application Behavior with Lightweight Generative Models and LLMs}

\author{Bodhisatwa Chatterjee}
\orcid{0000-0003-3098-6256}
\email{bodhi@gatech.edu}
\affiliation{%
  \institution{Georgia Institute of Technology}
  \city{Atlanta}
  \country{USA}
}
\author{Neeraj Jadhav}
\orcid{0000-0003-3098-6256}
\email{njadhav35@gatech.edu}
\affiliation{%
  \institution{Georgia Institute of Technology}
  \city{Atlanta}
  \country{USA}
}

\author{Santosh Pande}
\orcid{0000-0001-6723-8062}
\email{santosh@cc.gatech.edu}
\affiliation{%
  \institution{Georgia Institute of Technology}
  \city{Atlanta}
  \country{USA}
}

\renewcommand{\shortauthors}{Chatterjee et al.}

\begin{abstract}
Application profiling is an indispensable technique for many software development tasks, such as code and memory layout optimizations, where optimization decisions are tailored to 
specific program profiles. Unfortunately, modern application codebases exhibit highly variant behavior across different inputs, creating challenges for conventional profiling approaches that rely on a single representative execution instance. In this paper, we propose \textbf{Phaedrus}, a new \textit{compiler-assisted deep learning framework} designed to predict dynamic program behaviors across varied execution instances, specifically focusing on dynamic function call prediction. These predicted call sequences are subsequently used to guide input-specific compiler optimizations, producing code specialized for each execution instance.

Traditional profile-guided optimization methods struggle with the input-dependent variability of modern applications, where profiling on different inputs yields divergent application behaviors. To address this, Phaedrus proposes two new approaches: \textit{Application Behavior Synthesis (Dynamis)}, a profile-less approach where Large Language Models (LLMs) directly infer dynamic functions based on source code \& static compiler analysis, bypassing the need for traditional profiling, and \textit{Application Profile Generalization (Morpheus)}, which uses generative models trained on compressed and augmented \textit{Whole Program Path} (WPP) based function profiles to predict application behavior under unseen inputs. Our experiments show that \textit{Phaedrus} accurately identifies the most frequently executed and runtime-dominated hotspot functions, accounting for up to 85–99\% of total execution time. Leveraging these predictions, \textit{Phaedrus} enables superior profile-guided optimizations, delivering an average speedup of 6\% (upto 25\%) and a binary size reduction of 5.19\% (upto 19\%), without any program execution. In addition, \textit{Phaedrus} reduces WPP function profile sizes by up to $10^{7}\times$.

\end{abstract}

\received{20 February 2007}
\received[revised]{12 March 2009}
\received[accepted]{5 June 2009}

 \maketitle

\section{Introduction}
\label{sec:intro}

Application profiling and execution tracing are crucial in the software development process. Popular performance enhancement techniques such as \textit{profile-guided optimization} (PGO) \cite{radigan1995integer,  grove1995profile, gupta1997resource} involve generating execution traces for applications and then performing optimization decisions specific to the collected profile, such as code transformation \cite{chang1991using, visochan2022method, sydow2020towards, joshi2004targeted}, data layout restructuring and reorganization \cite{pettis1990profile}. Application profiles and traces are also used to accomplish a wide variety of tasks such as estimating resource usage for \textit{scheduling and resource allocation} \cite{agelastos2015toward, song2009energy, viswanathan2011energy, rashti2015wattprof, agelastos2016continuous, kestor2013enabling}, \textit{memory management for scratchpads, caches, and other memory hierarchies} \cite{cho2009adaptive, ansari2009profiling, bruno2017polm2, agarwal2017thermostat, xu2024prompt}, \textit{memory simulation} \cite{uhlig1997trace, uhlig2001trace, rico2011trace, hassan2007synthetic} and \textit{debugging} \cite{arnold2007stack, nagarajan2012system,ning2017ninja,ning2018hardware,verma2020interactive}. Such profile-driven approaches have been successfully utilized and deployed for over two decades. 

Most profile-driven tasks rely on the users to manually select profiles representative of the application's intended execution scenario. This ensures the usefulness of the profile analysis and optimization decisions during the application's runtime across different inputs. However, recent work \cite{chatterjee2022cas, mururu2023beacons, ahmed2017leveraging} has shown that most modern applications are complex and exhibit a high amount of input dependence, resulting in {\it variable} application behavior across different execution instances on various input data sets. This variant behavior is exhibited in all program components, including program branches, loops, and function calls. As a result, the application profile collected during a particular execution instance on a given input data set can be entirely different from another execution instance collected using a different input data set. A naive approach to tackle this problem would be to profile the application on multiple inputs - in practice, however, this `excessive' profiling approach does not scale for workloads whose input domain is not numerable, such as graph applications. In addition, profile mixing can also be problematic for specific path-based optimizations that optimize along hot trace paths, pushing overheads to off-trace paths.

\textit{In this work, we study the problem of generating dynamic application behavior across different execution scenarios pertinent to different input data sets by capturing traces of application's function calls and control flow behavior}. Specifically, we explore the problem of \textbf{hot function call prediction}, where the objective is to obtain the set of computationally intensive (hot) functions called by the application during its runtime, specific to a given program input. In this paper, we propose \textit{Phaedrus}, a new compiler-assisted deep learning framework which employs two different approaches to predict this dynamic program behavior: (1) \textit{application behavior synthesis}, where a compiler-assisted LLM framework reasons about the applications' algorithmic properties and extrapolates the application's dynamic set of \textit{hot} functions without any profiles; (2) \textit{application profile generalization}, where lightweight generative models are synthesized to predict the application's dynamic function call chains by learning a specialized representation of the smallest application profile. \textit{Phaedrus}'s end goal is to use this predicted dynamic behavior to perform profile-guided optimization (PGO), such as changing the application function layout based on most executed functions to seek code size and performance enhancements.

\textbf{Phaedrus Framework:} \textit{In a nutshell, the idea of this work is to utilize generative models to predict the behavior of dynamic applications}. The predictions from these models are then further used to perform profile-guided optimization. Towards this, the first new concept introduced in this work is \textit{application behavior synthesis}, which attempts to reason about the dynamic program behavior without execution or collected profiles with the aid of Large-Language Models (LLMs). Owing to their remarkable successes in every domain, from natural language processing to software engineering, LLMs have been extensively used for generating and optimizing programs \cite{ugare2024improving, gu2023llm, lin2024llm, liu2024exploring}. However, the primary challenge that prevents the adoption of LLM in place of the mainstream compilation pipeline, is the absence of correctness guarantees \cite{zhong2024chatgptreplacestackoverflowstudy, spiess2024calibrationcorrectnesslanguagemodels}. Secondly, LLMs, being generative models at their core, cannot inherently perform static program analysis, which can enhance its decision-making capacities. Thus, naively querying the LLM  for most code reasoning tasks fails spectacularly (as shown in fig. \ref{fig:6}). 

\begin{figure*}[htbp]
\centering\includegraphics[width=0.9\textwidth]{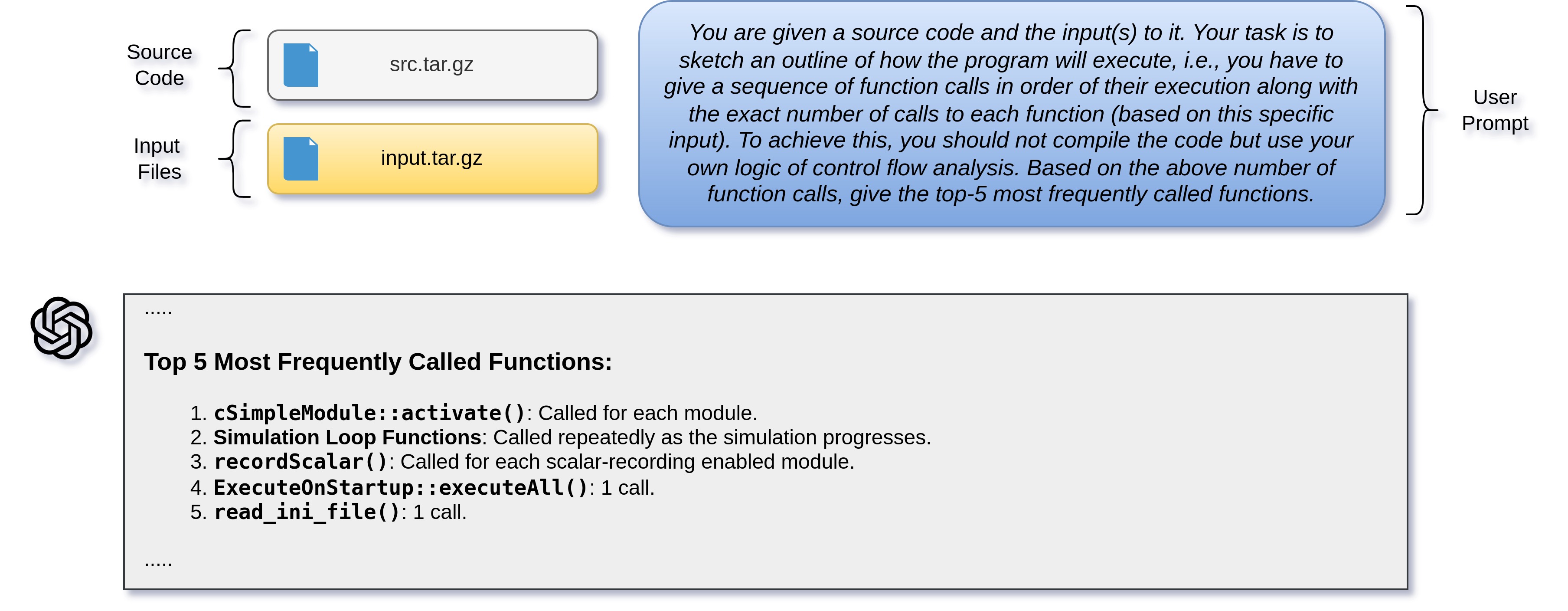}
\caption{An example of unsuccessful dynamic program analysis with the LLM, given the source code and input files (\textit{520.omnetpp\_r}). The LLM is unable to pinpoint \textit{hot} functions in the benchmark just by observing the source code \& input files, and the output is quite unsatisfactory.}
\label{fig:6}
\end{figure*}

\textbf{Compiler-Assisted LLM Reasoning}: \textit{The first contribution of this work is the utilization of Large-Language Models, supplemented by the knowledge of static program properties obtained by compiler analysis}. The \textit{Dynamis} framework operates in three broad phases: first, filtering the wide corpus of LLM knowledge through \textit{domain knowledge deduction} (\S \ref{sec:domain}); second, supplementing this understanding with \textit{program structural artifacts} such as the universe of functions and call graphs (\S \ref{sec:semantic}); and finally, fortifying the LLM’s reasoning with \textit{static compiler analysis} (\S \ref{sec:staticLLM}). While recent efforts have focused on using LLMs to generate optimized code \cite{gao2024searchbasedllmscodeoptimization, qiu2024efficientllmgeneratedcoderigorous, rosas2024aioptimizecodecomparative}, such code often requires extensive verification \cite{liu2023, tang2024, ngassom2024}. Unlike these approaches, \textit{Dynamis}  leverages LLMs only to identify input-specific profile-driven optimization opportunities, followed by traditional code generation techniques to elicit performance.

However, a downside of using LLMs for determining input-specific program behavior, is resource requirements needed for successful deployment and inference. For instance, deploying a state-of-art LLM such as \textit{LLaMA-2-70B} with 70 billion parameters for inference, can take multiple NVIDIA A100 GPUs \cite{zhou2024survey}. Secondly, LLMs often have a fixed context window that determines the amount of tokens that it can consider to answer a certain query (8192 tokens for GPT-4). Large applications with hundreds of KLOC can easily exceed this limit, and lead to undesirable results. Most importantly, with the set of prompts used in the current work, we observe that LLMs are unable to reason about the dynamic sequence of function calls satisfactorily (\S \ref{res:vs}). In the absence of accurate function call sequence, most path-based optimizations and execution forensics cannot be achieved.   

To tackle these issues, we can use application profiles that capture the entire dynamic program behavior. This leads to the second important contribution  of this work, \textit{application profile generalization}, where the primary idea is to profile the application on a convenient single (small) input, transform it by adding program context (such as knowledge of possible program paths), and then train generative models on the compressed profile, which can then predict the application behavior on an unseen execution scenario (new input).

As a lightweight alternative to the LLM-based approach, in this work, we leverage \textit{Whole Program Paths} (WPP) based function profile to make decisions based on the program's runtime execution behavior. \textit{Whole Program Path Profiles} \cite{larus1999whole, ball2000using, zhang2001timestamped,  tallam2005extended} captures the entire dynamic control flow of an application during its execution, and are superior in providing insight about the dynamic execution behavior than popular sampling-based trace collection methods \cite{martonosi1993effectiveness, kessler1994comparison, novillo2014samplepgo,liu2016sample}. However, the biggest challenge that prevents the adoption of \textit{WPP Profiles} for profile-driven tasks in modern workloads is its {\it intractability} with respect to data size and volume (\S \ref{res:morph}). Although compression of WPP profiles has been explored extensively in compiler literature \cite{larus1999whole, zhang2001timestamped, burtscher2003compressing, zhang2002path, law2003whole, milenkovic2003stream, zhang2004whole, lin2007design}, unfortunately such compression schemes can only be applied to a given WPP profile obtained from a single input data set at a time and cannot be used to get a generalized representation of profiles across different inputs (\S \ref{sec:back1}).

\textbf{Lightweight Prediction of Dynamic Program Behavior} (\S \ref{sec:loop} - \S\ref{sec:aug}): \textit{An important contribution of this work is a new profile compression scheme, which is lossless and is consistent across different inputs}. A \textit{WPP based function profile} of the application is first generated on a convenient (smallest) input. While collecting the profile, function calls present inside loops are handled with \textit{smart instrumentation} (\S \ref{sec:loop}) to avoid a high degree of redundant tokens. Compressible regions in the generated profile are identified by performing novel compiler analysis based on control-dependence and post-dominance relations (\S \ref{sec:static}). These analyses are performed statically at the IR level to obtain compressible regions that are consistent across different inputs. The compressed trace is then \textit{augmented} with additional possible control-flow paths and program context initially absent in the profile. This leads to a generalized application profile, which summarizes the program behavior across different execution settings. 

The unified \textit{WPP based function profile} serves as training data for this simple generative model, which, in turn, can predict program behavior specific to different inputs and execution phases. In this work, we show that these diverse execution behaviors can be learned with high prediction accuracy, by designing lightweight models that do not require any specialized hardware accelerators (GPU, TPU, etc.) for training / inference (\S \ref{res:morph}). We also demonstrate the utility of predicting the dynamic function sequence (and the function set) called during execution in determining hot functions for optimizations (\S \ref{res:morph}). The profile processing (generation, compression and augmentation), and the compiler analysis and model training, has been implemented as an end-to-end compiler-assisted deep-learning framework, \textit{Morpheus} (\S\ref{sec:comp}).

The experimental evaluation (\S \ref{sec:eval}) of \textit{Phaedrus} is done on the SPEC CPU 2017 benchmark suite \cite{bucek2018spec}. Our evaluation focuses on \textit{Dynamis} and \textit{application behavior synthesis}, we evaluate the prediction of hot functions and the coverage of several function calls (\S \ref{res:dyn}). We also show that \textit{Phaedrus} can achieve application binary size reduction of upto \textbf{65\%} (13.68\% on average) and performance improvement of upto \textbf{13.34\%} (2.8\% on average). For \textit{Morpheus} and \textit{application profile generalization}, our evaluation focuses on the order of compression for WPP based function profiles (upto $10^7 \times$ reduction), prediction of hot functions and its execution coverage and training overheads (\S \ref{res:morph}). To the best of our knowledge, this is the first work to explore dynamic application behavior across different execution scenarios and leverages compiler-assisted lightweight deep learning models and LLMs for predicting new program behaviors for profile-guided tasks.
.
\section{Background and Motivation}
\label{sec:back}

In this section, we first describe the describe the rationale behind leveraging LLMs for predicting dynamic application behavior. We then motivate the need for a lightweight profile-driven approach, and explain the necessity of utilizing \textit{whole program path} (WPP) profiles, and generative models for predicting dynamic application behavior. We also elaborate on the need of an \textit{input consistent} compression scheme, and \textit{incorporating program context} in \textit{WPP} profiles.

\subsection{Can LLMs reason about the dynamic program behavior?}

Large Language Models (LLMs) exhibit remarkable capabilities in terms of code understanding \cite{fang2024largelanguagemodelscode, nam2024usingllmhelpcode, wang2023codet5opencodelarge} and analysis \cite{fang2024largelanguagemodelscode, ma2024lmsunderstandingcodesyntax}, making them valuable tools for developers. One of their primary strengths lies in understanding code and describing behavior. LLMs can analyze and explain the logic of code snippets, breaking down complex functions into more comprehensible explanations for developers. They are also capable of predicting potential runtime behavior \cite{chen2024reasoningruntimebehaviorprogram} based on static code analysis, such as identifying loops, conditionals, and function calls that may impact the execution flow. By simulating execution paths leveraging their neural models, LLMs can infer what might happen under certain conditions, offering valuable insight into code behavior without running it. Furthermore, LLMs excel in reasoning about high-level code behavior \cite{ding2024reasoningplanningllm} by identifying concepts like state changes (e.g., variable updates or branching behavior) and evaluating algorithmic implications, such as performance considerations and potential impacts of code modifications. This makes them useful for predicting time complexity, spotting inefficiencies, and suggesting areas for optimization \cite{yang2024large, zhou2024surveyefficientinferencelarge}.

\begin{figure*}[htbp]
\centering\includegraphics[width=0.9\textwidth]{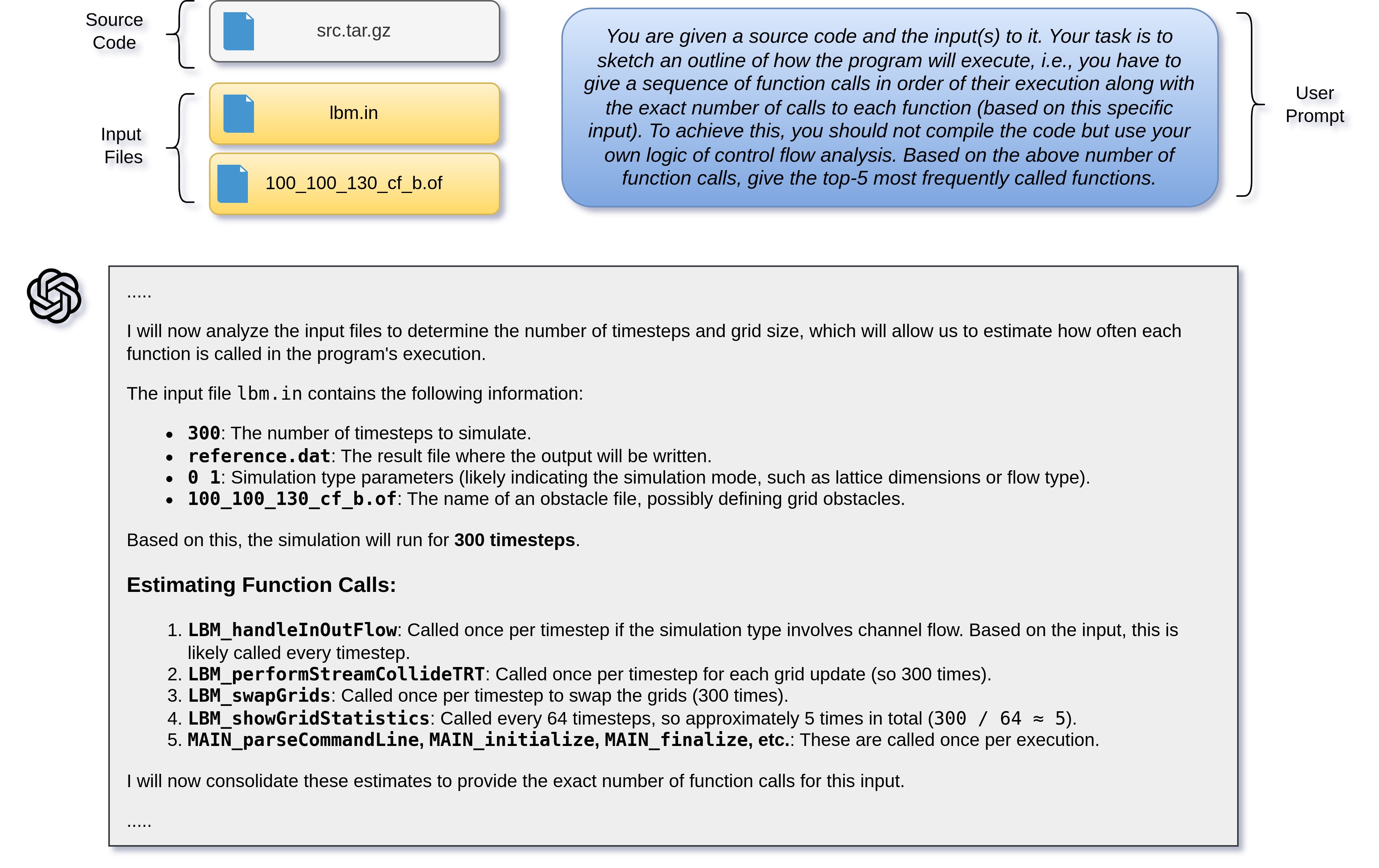}
\caption{An example of successful dynamic program analysis with the LLM, given the source code (\textit{519.lbm\_r}) and input files. The LLM is successfully able to pinpoint \textit{hot} functions in the benchmark for the particular input file, along with the exact number of function calls.}
\label{fig:5}
\end{figure*}

However, LLMs have limitations that stem from their inability to perform deep dynamic analysis. Because they do not execute code \cite{xue2024, yang2024llm}, they cannot analyze real-time state changes, such as tracking memory usage during execution or handling complex input/output scenarios dynamically. Their pre-trained nature also makes it challenging to reason about concurrency and parallelism, which involves understanding multi-threaded behavior, race conditions, or potential deadlocks. This type of analysis \cite{sun2024llmruntimeerrorhandler, chen2024reasoningruntimebehaviorprogram, krishna2024codellmdevkitframeworkcontextualizingcode} requires direct code execution and monitoring of parallel processes, which LLMs cannot achieve. Furthermore, LLMs lack the capability to handle real-time data changes or interact with code based on live data, as they primarily operate on static representations and their training knowledge. This limitation prevents them from fully simulating execution environments or adapting to live input changes, highlighting an important gap compared to runtime analysis tools.


Although LLMs are inherently limited in performing extensive dynamic analysis, particularly in complex scenarios involving real-time state changes, concurrency, and live data interactions, our framework demonstrates a path to extend their utility through strategic assistance. By combining static analysis techniques with the logical reasoning \cite{parmar2024, li2024relational} and domain knowledge that LLMs inherently possess, we can bridge this gap effectively.

\subsection{Compiler-Assisted LLM Reasoning: Combining World Knowledge with Static Analysis}

Static compiler analysis provides a depth of insight that LLMs relying solely on static inference from code often struggle to match. By meticulously examining the structural components of code, static analysis can prioritize and highlight important functions based on their position in the control flow hierarchy and their interactions across various program components. For example, identifying functions with self or mutually recursive calls or functions nested deeply within loops that span multiple function calls can indicate computational hotspots. This hierarchical understanding, which includes data on call frequencies, loop nesting levels, and data dependencies, offers a precise view of performance-critical regions within a codebase. Such insights are valuable for understanding which functions might contribute the most to the overall execution time of an application.

When these compiler generated static artifacts are paired with an LLM’s capabilities in reasoning \cite{sun2024llmruntimeerrorhandler, chen2024reasoningruntimebehaviorprogram, krishna2024codellmdevkitframeworkcontextualizingcode} about control flow, algorithmic complexity, and domain-specific behavior, a comprehensive analysis of code behavior emerges. The LLM can use static analysis insights to simulate runtime behaviors, assess interactions, and predict the impact of changes more effectively. For instance, a call graph might reveal complex function dependencies, while the LLM can interpret how these dependencies influence runtime performance or suggest optimizations. However, performing such analysis within the LLM itself can be computationally expensive, shifting focus from estimating \textit{hot} functions to more generalized codebase analysis. Since it is an interactive tool, LLMs typically budget a certain time interval for getting the response back to the user and within that budget different phases within the LLM must fit. To offload the burden of analysis and to utilize that time for some other inferencing phases, we assist the LLM with logical cues such as the set of all functions used within the benchmark, function call graphs, and some \textit{hotness} indicators such as whether or not called within loops etc. This approach reduces the model’s search space, enabling more efficient and targeted inference by emphasizing functions that are statistically and structurally significant. This combination of static analysis and probabilistic guidance empowers the LLM to deliver more accurate and efficient code insights, bridging gaps in its inherent capabilities \cite{hadi2024}.

\subsection{Alternative Approach: Profile Generalization \& Generative Models}
\label{sec:back1}

As an lightweight alternative to the inference costs and context length restrictions of LLM, we explore \textit{whole program paths} as a means to reason about application behavior across different inputs. The \textit{Whole Program Path} (WPP) based function profile represents the trace of complete control-flow path exhibited by an application during its execution. The program path, can be recorded either in terms of low-level program entities such as basic blocks and instructions, or in terms of aggregate artifacts such as branch outcomes. In this work, we collect the \textit{WPP} based function profile for the application's dynamic function calls, on a single input. Fig. \ref{fig:0} represents the overall function call and control-flow behavior for a simple benchmark (519.lbm\_r) in SPEC2k17 suite. The corresponding WPP based function profile data is also depicted in the figure. This data can be interpreted as a stream of \textit{tokens}, where each token represents a dynamic function call at a particular timestep in the program execution. For a given execution timestep, the function call token is dependent on the previous state(s) of the execution. For instance, in the given example (Fig. \ref{fig:0}), function call token $7$ (\textit{init\_channel}) is invoked by the another function call $2$ (\textit{main\_initialize}) which occurred seven execution timesteps previously. Thus, in a given WPP based function profile sequence, the prediction of upcoming tokens requires historical information (previous execution timesteps), and involves handling of variable context length.

\begin{figure*}[htbp]
\centering\includegraphics[width=0.9\textwidth]{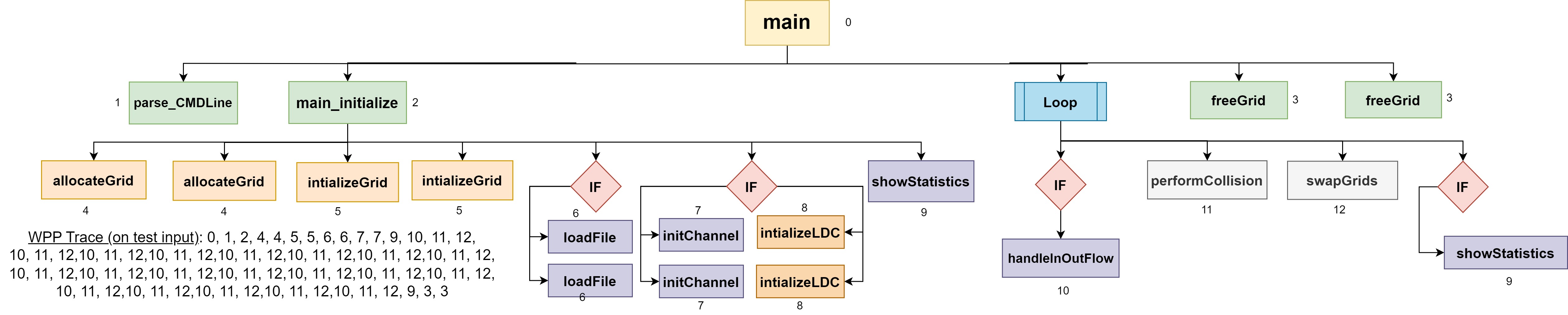}
\caption{A simple outline of control-flow behaviour and function calls in 519.lbm\_r (SPEC2k17). The WPP based function profile data on the small input is also depicted. }
\label{fig:0}
\end{figure*}

In order to utilize this WPP based function profile data for profile generalization and accomplishing profile-driven tasks, we train \textit{generative models} on \textit{WPP} based function profile data, which predicts the upcoming tokens (function calls) based on a prefix of tokens, and simultaneously adjusts to other execution scenarios (different inputs). Deep-learning generative models such as \textit{Recurrent Neural Networks} (RNNs) can estimate the probability of the next token (at timestep $t$) in a sequence, as function of the previous token (at timestep $t-1$) and an internal short-term memory (which propagates information across the sequence). The utility of RNNs stems from the fact that the output token in the previous execution timestep acts as the input token for the current execution timestep, which simplifies the process of querying and inferencing the model. Popular generative models such as GANs \cite{goodfellow2020generative}, Diffusion Models \cite{yang2023diffusion} etc, are extremely compute-heavy ($\sim 10^{20}$ FLOPS per epoch), and require extensive resources for training (100s of GPU hours). In contrast, we show that simple RNNs with one hidden layer (vanilla RNNs) are capable of learning the dynamic function call chains (\S \ref{res:morph}).

However, there are two major challenges in directly adapting such an approach. The first issue is the scalability aspect of WPP  based function profile traces - the disk size of function call WPP  based function profile data ranges from a few GBs to $10^2$, which can lead upto billion of function call tokens (Table \ref{tab1}). For training RNN models, this data needs to be tokenized and encoded into tensors, which makes the dataset size unscalable (Fig. \ref{fig:motiv1a}). 

\textbf{Need for an input-consistent compression scheme}: In order to tackle this scalability issue of WPP  based function profiles, it's essential to compress and remove redundancies. A natural approach for compressing \textit{WPP} based function profiles would be to simply encode frequent substrings of tokens into specific IDs, and then replace their occurrence in the profile accordingly. For example, in Fig. \ref{fig:0}, the token substring $\{10 \to 11 \to 12\}$ inside the loop can be encoded as token $\{10'\}$, reducing the size of WPP based function profile trace. Unfortunately, such substring-based strategies eliminates the control-flow semantics from the WPP based function profile during the training process. For instance, during the model's testing, if the path inside the loop is $\{11 \to 12\}$ (both \textit{if}-statements false), for a different program input, then the compression encoding would not be useful, since substring $\{10 \to 11 \to 12\}$ is absent. Furthermore, the path encoding must be consistent across WPP based function profiles generated on different inputs, since the trained model needs to be tested on unseen, compressed WPP based function profiles. Thus, for successful compaction and generalization of WPP based function profiles, the compression scheme must not be dependent on a specific program input.

\textbf{Incorporating Program Context}: The second challenge arises from the fact that in the its current format, the \textit{WPP} based function profilee does not include the full context of the program. For instance, in the example provided in Fig. \ref{fig:0}, the conditional program path $\{8 \to 8\}$ (\textit{if}-statement false) is not reflected in the \textit{WPP} based function profile trace generated on the small input (\textit{SPEC-test}). In general, the possible program paths which are not expressed in a given \textit{WPP} based function profile, needs to be provided in the training data for the generative model. Without the knowledge of program semantics and possible program paths, the raw \textit{WPP} based function profile cannot be used directly for training generative models, and profile generalization. 

\textit{Morpheus} tackles the first challenge by devising a compression scheme that compacts tokens based on their control-flow behavior. It performs static branch analysis to determine tokens that can be chained together (compressible regions) for path encoding. In order to obtain input-consistent path encoding, \textit{Morpheus} performs the path analysis directly on the source code. For instance, in the above example (Fig. \ref{fig:0}), loops paths $\{8\}$, $\{11 \to 12\}$ and $\{9\}$ would be given separate path encoding, because of the \textit{if}-statements inside the loops. \textit{Morpheus} also performs depth-first traversals on the function-call graph to handle cases with nested conditionals. Secondly, the lack of program context in the \textit{WPP} based function profile is solved by \textit{augmenting possible program paths} in the generated \textit{WPP} based function profile. In the current example, sub-path $\{8 \to 8\}$ is encoded and added in the compressed \textit{WPP} based function profile to provide the complete context of the program to the generative models. The complete workflow of \textit{Morpheus}, including its front-end which is responsible for compression \& augmentation of \textit{WPP} based function profile, and the back-end which trains the generative models, is presented in Fig. \ref{fig:1}.
\begin{figure*}[htbp]
\centering\includegraphics[width=1.0\textwidth]{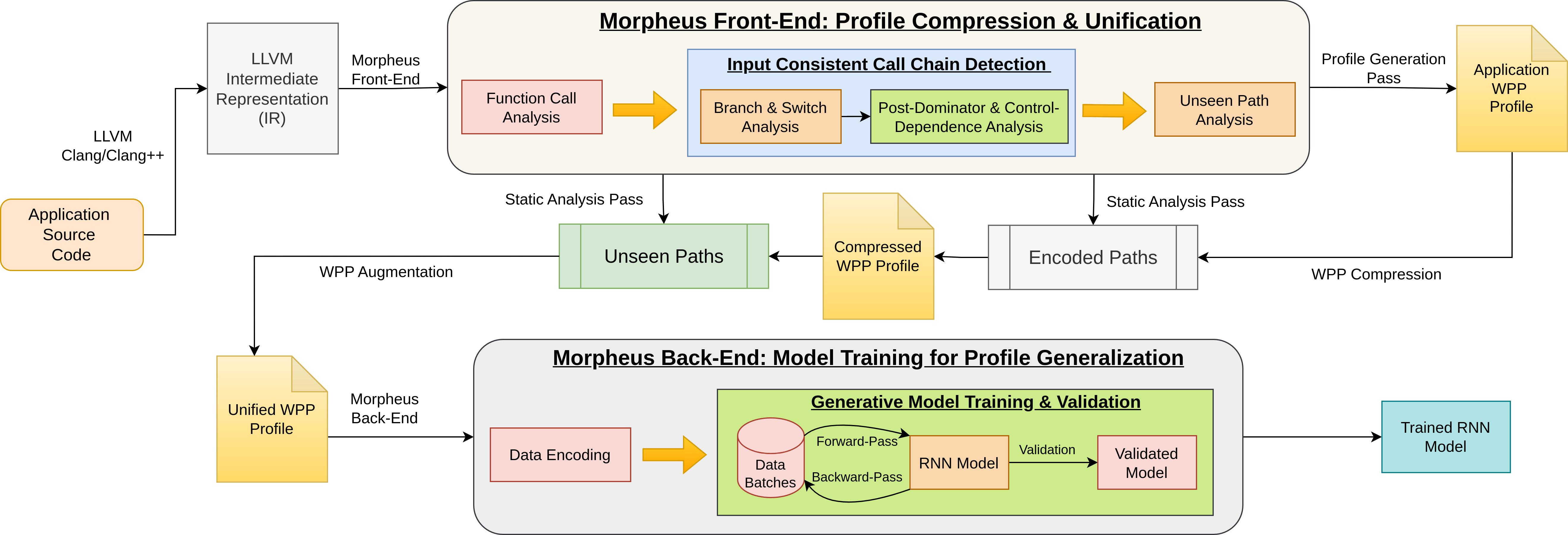}
\caption{\textbf{\textit{Morpheus} Compilation Framework}. The front-end of \textit{Morpheus} is build as integrated set of LLVM compiler passes, while the back-end leverages deep learning framework \textit{Pytorch} to data processing and training the generative model.}
\label{fig:1}
\end{figure*} 

\subsection{Basic Terminologies}
\label{sec:back2}

In this section, we provide a formal definition of two classes of \textit{hot} functions, and some fundamental compiler terminology that is used extensively throughout the paper. Please see appendix A. 

\section{Dynamis: Using LLMs to Predict Program Behavior without Application Profiles}
\label{sec:dynamis}

\begin{figure*}[htbp]
\centering\includegraphics[width=0.9\textwidth]{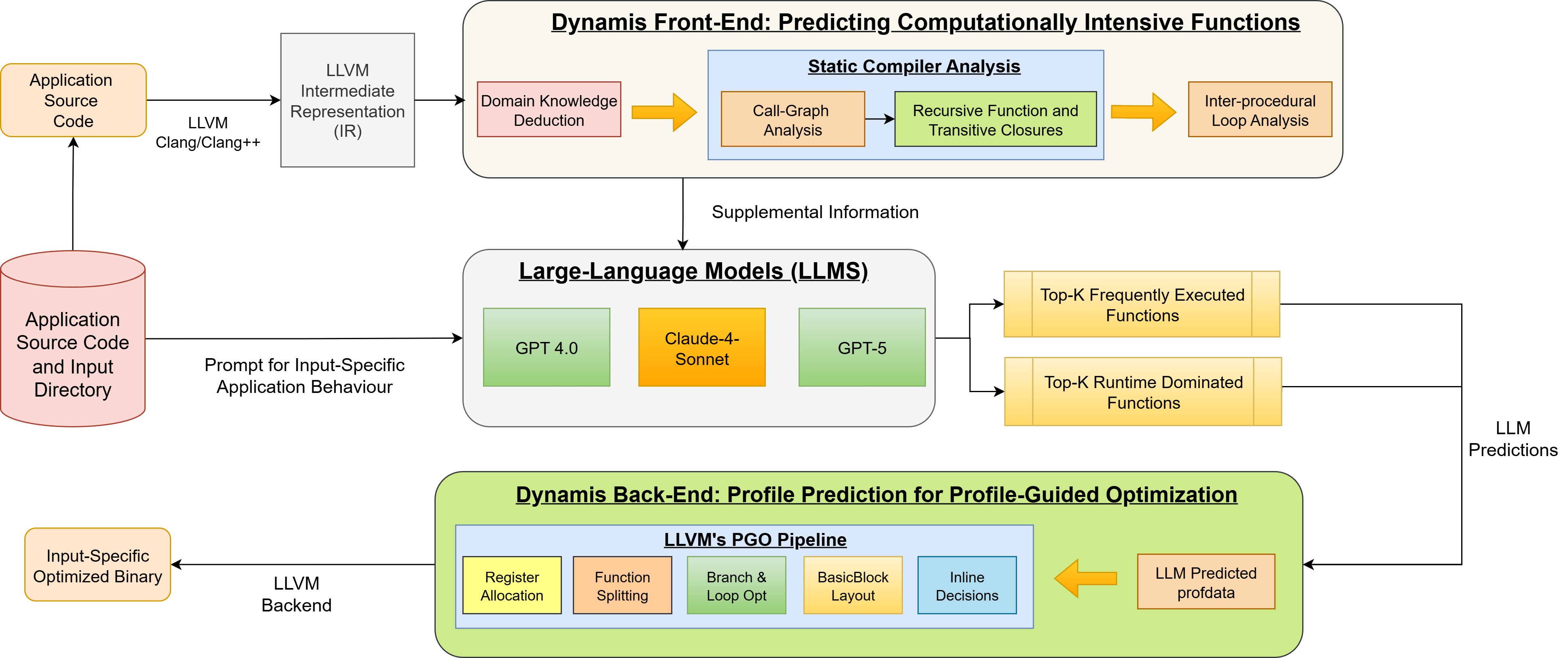}
\caption{\textbf{\textit{Dynamis}: A compiler-assisted LLM framework for predicting dynamic application behavior without any execution}. The front-end of \textit{Dynamis} provides the supplemental compiler analysis required by LLMs to reason about input-specific hot functions, while the backend involves leveraging those predictions to perform optimized compilation using LLVM's profile-guided optimization (PGO) pipeline.}
\label{fig:dynamis}
\end{figure*}

This section introduces \textbf{Dynamis}, compiler-assisted LLM framework which predicts the applications dynamic \textit{hot} functions for a given input, without execution. The front-end of \textit{Dynamis} consists of a set of LLVM-based compiler passes that generates the program artifacts required for LLM reasoning, while the back-end focuses on utilizing these predictions for performing superior profile-guided optimizations. In this paper, we are exploring \textit{Dynamis}' capabilities to determine two independent classes of \textit{hot functions}: \text{most frequently executed hot functions} and \textit{runtime dominated hotspot functions}. 

\subsection{Phase 1: Filtering Search Space with Domain Knowledge Deduction}
\label{sec:domain}

The first challenge in applying LLM-based reasoning across whole applications is navigating the overwhelming search space of thousands of functions, among which certain computation kernels dominate the application runtime. In the absence of additional guidance, it is difficult to distinguish between the actual kernels which are computationally intensive and the driver/utility functions which encapsulate the program hotspots. This makes it challenging for an LLM to isolate the true bottlenecks for optimization. \textit{Dynamis} addresses this challenge by first deducing the domain of the given application (such as \textit{molecular dynamics}, \textit{fluid mechanics}, etc) with prompting. Once the domain is identified, algorithmic cost models and domain-specific principles are applied to predict which operations are likely to be the true bottlenecks. These deductions are then used to filter the effective set of candidate hotspot functions, ensuring that subsequent analysis focuses on the most computationally relevant regions of the code.

\noindent\begin{minipage}{.45\textwidth}
\begin{lstlisting}[language=C,caption={Finite Element Solver code snippet from 510.parest\_r in SPEC2017 \cite{bucek2018spec}. While the wrapper dominates inclusive time, the true hotspots are the inner quadrature and scatter kernels, based on domain knowledge.},label=code1,frame=tlrb]
void assemble_matrix() {
 for (cell : mesh) {
  for (q_point : quad_points) {
   double phi  = fe_values.s_val(.);
   Tensor grad = fe_values.s_grad(.);
   l_mat(i,j) += phi * grad;}
   cpy_matrix(l_mat, g_mat);}}
\end{lstlisting}
\end{minipage}\hfill
\begin{minipage}{.45\textwidth}
\begin{lstlisting}[language=C,caption={Primal Simplex code snippet from solving minimum cost flow in 505.mcf\_r in SPEC2017 \cite{bucek2018spec}. Despite enclosing the optimization loop, the runtime is dominated by the inner pricing and sorting routines},label=code2,frame=tlrb]
void primal_net_simplex() {
 while (!optimal) {
   primal_bea_mpp();      
   spec_qsort(.., cost_compare);
   update_tree(); 
   price_out_impl(); }}
\end{lstlisting}
\end{minipage}

Our initial experiments showed that once the application domain is established, a direct correlation can be established between algorithm-specific cost models, and candidate hotspot functions. For instance, $510.parest\_r$ from SPEC2k17 \cite{bucek2018spec} implements a \textit{finite element solver} for biomedical imaging (Listing \ref{code1}). Based on this domain knowledge, the LLM can immediately deduce that quadrature evaluations, gradient computations, and scatter operations are computationally intensive. This leads to the isolation of \textit{s\_val}, \textit{s\_grad}, and \textit{cpy\_matrix} as potential hotspots, sidestepping the high-level \textit{assemble\_matrix} wrapper. Similarly, $505.mcf\_r$ from the same benchmark suite adopts a \textit{primal network simplex algorithm} to solve the \textit{minimum cost flow problem} (Listing \ref{code2}). This deduction enlightened the LLM to attribute runtime to the domain specific constructs such as pricing (\textit{cost\_compare}), sorting (\textit{spec\_qsort}), and implicit-arc maintenance (\textit{bea\_is\_dual\_infeasible}) functions that dominate execution at scale, instead of the top-level driver \textit{primal\_net\_simplex}. We observed a similar pattern in other application domains, such as \textit{molecular dynamics} (neighbor list construction and electrostatics kernels), \textit{fluid mechanics} (collide-and-stream sweeps), and \textit{image processing} (convolution, and resampling), where domain knowledge consistently filtered out high-level drivers in favor of the computational kernels.

Furthermore, leveraging application domain knowledge allows \textit{Dynamis} to apply algorithm-specific cost models, such as $O(nz)$ for sparse matrix–vector products, $O(N^{2})$ for neighbor-list generation, and $O(N \cdot r^{2})$ for convolutional operators, etc. This enables \textit{Dynamis} to reason about scaling behavior and identify algorithmic bottlenecks across a wide range of applications. Note that the domain itself is inferred from several programmatic constructs, such as file structures, function signatures, symbolic constants, algorithmic patterns, and control-flow context. To validate this, we performed experiments with obfuscated function names and removed comments - the LLM still managed to infer the right domain by relying on algorithmic structure and control-flow regularities. 

\subsection{Phase 2: Supplementing with Program Semantic/Structural Artifacts}
\label{sec:semantic}

The second challenge in leveraging LLMs for arises when there is an ambiguity between the `moderately hot' functions and dormant functions, which do not contribute to the program execution. Optimizing an application often requires a fine-grained view of program's execution - from the dominant hotspots to secondary routines that still contribute meaningfully to runtime. The inclusion of domain knowledge narrows the space the to plausible kernels, but it can lead to overemphasizing a small set of domain-specific kernels or overlooking moderately expensive functions that accumulate cost across iterations. To address this, \textit{Dynamis} augments its analysis with semantic and structural artifacts derived from static analysis, which includes complete function list, the program call graph, and the set of recursive functions with their transitive closures. 

\begin{wrapfigure}{r}{0.45\textwidth}
\vspace{-8pt}
  \begin{lstlisting}[language=C,caption={Simplified Code snippet from imagick in SPEC2017 \cite{bucek2018spec}. The call-graph traversal from hot kernels allows \textit{Dynamis} to isolate `moderately' hot functions},label=code3,frame=tlrb]
void ResizeImage(Image *img) {
 HorFilter(img); // medium hot
 VertFilter(img); // medium hot
 SyncCacheViewAutPix(); // cold
}
\end{lstlisting}
  \vspace{-18pt}
\end{wrapfigure}

The call-graph sheds light on caller–callee relationships, clarifying how the apparent cost of high-level drivers is distributed among their callees, and distinguishing moderately hot kernels that accumulate significant work from functions that remain largely dormant. This relationship is further refined with the inclusion of recursive functions, and helps the LLM to account for \textit{call multiplicity}, which essentially identifies functions due to repeated invocation along multiple paths. For instance, Listing \ref{code3} shows a simplified code snippet from $538.imagick\_r$, where the call-graph highlights how filter kernels (\textit{HorFilter} and \textit{VertFilter}) and cache operation (\textit{SyncCacheViewAutPix}) from the domain specific hotspot function \textit{ResizeImage}. In this case, the both the filter operations are moderately hot for the large input, while the cache sync remains relatively insignificant. By traversing these caller–callee relationships, the call graph provides a graded view of functions—from dominant hotspots to moderately hot kernels to dormant routines, that better guides downstream optimization opportunities. 

\subsection{Phase 3: Fortifying LLM with Static Compiler Analysis}
\label{sec:staticLLM}

The addition of function call graphs and the recursive functions helps the LLM narrow down the function hotspots through caller–callee relationship, but does not quantify how often a kernel is actually exercised across different execution paths. For instance, two callee functions may appear equally central in the call-graph, but their effective runtime contributions can diverge sharply across different iteration of a loop nest. To address this issue, \textit{Dynamis} augments the LLM with a static measure of \textit{interprocedural loop depth}, which tracks how deeply a function is nested in loops both within its own body and across its transitive callers (definition \ref{def:interloopiness}). This provides an architecture-agnostic way to estimate repetition: functions invoked at deeper nesting levels or through recursive structures are more likely to dominate execution time. 

\textit{Dynamis}' algorithm to estimate the inter-procedural loop depth is presented in the appendix (Algorithm A1). It computes the interprocedural loop nesting depth of functions by first recording intra-procedural nesting at each call site, then propagating these depths across the call graph using a longest-path calculation from the program’s main entry point. In practice, our experiments showed that combining all three phases puts \textit{Dynamis} in the best position to determine computation hotspots of an application. For instance, in $519.lbm\_r$ from SPEC CPU2017, a lattice–Boltzmann solver, the repeated full-grid sweeps consistently surfaced the collide–stream kernel as the dominant hotspot, with interprocedural loop depth highlighting its repeated execution across timesteps. By weighting call-graph edges with loop depth, \textit{Dynamis} was able to filter out shallow wrappers and library calls, while consistently surfacing the kernels that dominate execution across specific input sets.

\subsection{Phase 4: Performing Profile-Guided Optimization}

Equipped with the supplemental materials from the earlier phases, semantic program artifacts (function lists, call graphs, recursion sets) and static analysis, \textit{Dynamis} prompts the LLM to generate a refined set of top-k computationally hot functions for the application under a given input. As mentioned earlier, \textit{Dynamis} generated two classes of independent predictions: \textit{Top-k frequently executed functions} and \textit{Top-k runtime dominated hotspot functions}. Both classes of predictions are required for contrasting the LLM's predictive power, since they serve different optimization goals—one characterizes control-flow frequency, while the other identifies performance-critical bottlenecks.

After generating these predictions, \textit{Dynamis} translates them into LLVM \textit{profdata} files, which contains function hashes, basic block execution counts, branch probabilities in an indexed profile format. \textit{Dynamis} searches for each hash corresponding to a predicted hot function, and then manipulates the counters based on the function ranking. This LLM-modified \textit{profdata} is then passed onto LLVM's profile-guided optimization framework, where it acts as a guideline for several optimization decisions such as function inlining, basic block placement (block layout), branch probability optimization, loop optimizations (unrolling, vectorization and peeling), function splitting (hot/cold splitting), register allocation and speculative optimization. \textit{Dynamis}'\textit{profdata} file ensures that the functions predicted by the LLM are boosted during PGO in the order of their predicted hotness.


\section{Morpheus: Generalizing Whole Program Path Profiles}
\label{sec:comp}

This section introduces the \textbf{Morpheus Compiler Framework}, which predicts the application dynamic function calls by generalizing its whole program path profiles (Fig. \ref{fig:1}). The front-end of \textit{Morpheus} is a LLVM-based instrumentation framework that generates the \textit{WPP based function profile} in a compacted format, and performs static analysis to identify compressible regions in the \textit{WPP} profile, and determines the paths for incorporating program context. The compacted WPP based function profile is generated by instrumenting the program loops, and the performing various online compression strategies. This allows \textit{Morpheus} to replace the redundant function call sequences, which in turn, reduces the bias in WPP based function profiles. The static analysis for determining compressible regions is performed at the intermediate-representation (IR) level in order to obtain a compression scheme $\{path:code\}$, which can be applied on profiles generated by any program input. We refer to this scheme as \textit{input-consistent} compression scheme. 

After applying the compression scheme, the unseen program paths are augmented to generate a compressed, unified version of the WPP based function profile. This unified WPP based function profile serves as the training data for the \textit{Morpheus}'s back-end. The profile is encoded into tensors, and then used to train lightweight generative models. The trained generative model can be leveraged to predict the application's behavior on unseen inputs.

\subsection{Collecting Whole Program Profiles: Smart Loop Instrumentation}
\label{sec:loop}

The primary reason behind the intractability of WPP based function profiles in terms of their size, is the presence of highly redundant tokens and sub-paths. These redundancies are the result of the iterative behavior (deeply nested loops) of the program during its execution. Such data redundancies create extreme imbalances in the token distribution, which further causes significant bias in the WPP based function profile. This inhibits the downstream task of training generative models on the WPP based function profile. \textit{Morpheus} tackles this problem by performing \textit{smart-loop-instrumentation} and \textit{online compaction}, during the generation of WPP based function profiles. The intuition here is that instead of directly logging function tokens in the WPP based function profile during nested loop executions, they are first compacted and summarized into lossless representations, while preserving their relative order in the program context.

\textit{Morpheus} implements the \textit{smart-loop-instrumentation} scheme by first instrumenting the loop headers, and the loops exit blocks, which trigger the activation and de-activation of a loop phase, with the help of special flags. Any function call tokens produced during loop  phase are then stored into a temporary buffer. The contents of this buffer are subjected to online compaction whenever all loop phases in the application are deactivated. The online compaction scheme is described in Appendix (Algorithm 2). The distribution of the temporary buffers is first analyzed to see if it has been observed previously (lines 3 - 6). For compacting the buffer contents, the longest, repeating, unique and non-overlapping function call sequence is obtained, by designing a dynamic programming-based approach. This call sequence is then removed from buffer, and a special priority number is attached to the call sequence . The purpose of this counter is to determine the relative order of the longest call sequences. The algorithm is repeatedly invoked in the temporary buffer, until the content of the buffer are empty. Each longest, repeating call sequence is then logged in the WPP based function profile (in their respective order), with a begin and end markers to indicate the loop phase.

\subsection{Identifying Compressible Regions: Static Program Analysis}
\label{sec:static}

The compressible regions in the WPP based function profile represents a sequence of function calls that can be grouped together into a single call. In absence of any control-flow (program branch), which can lead to variant program behavior, the entire sequence of function call tokens can be replaced with a single token. A naive approach to achieve this would be to simply group tokens which are not affected by any program branch. However, such an approach would miss compression opportunities for tokens which are nested inside branches. For determining the compressible regions in the WPP based function profile, \textit{Morpheus} leverages the following insight regarding the function calls and branch targets, which can be stated and proved as a lemma:

\begin{lemma}\label{l1}
    In the absence of inter-procedural branches, function calls that are control-dependent on the same branch, and are present on the path of same branch target, can be compressed into a single token. 
\end{lemma}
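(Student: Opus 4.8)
The plan is to reduce the statement to two facts about the static control-flow graph of the enclosing function: (C1) whenever the branch in question transfers control to the designated target, all of the named call sites fire, each exactly once, in a fixed relative order; and (C2) that order — and the fact that they all fire together — depends only on the branch outcome, not on the program input. Together these say that the token subsequence generated by the group each time the branch target is taken is one fixed string (and is empty otherwise), so replacing every occurrence of it by a single fresh token, together with a $\{path:code\}$ table entry recording the expansion for lossless decompression, is exactly the ``compress into a single token'' operation that \textit{Morpheus} performs.

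For (C1), let $X$ be the branch node and $(X,Z)$ the edge to the target; let $B_1,\dots,B_k$ be the basic blocks holding the $k$ call sites. By the definition of control dependence on an edge, each $B_i$ post-dominates $Z$ while no $B_i$ (strictly) post-dominates $X$. First I would fix the order: since every $B_i$ post-dominates $Z$, all the $B_i$ lie on the post-dominator chain of $Z$ (the path from $Z$ up to $\mathit{exit}$ in the post-dominator tree), and that chain is totally ordered by post-dominance; relabel so that $B_1$ is post-dominated by $B_2$, which is post-dominated by $B_3$, and so on. On any control-flow path from $Z$ to $\mathit{exit}$ the blocks are then visited in the order $B_1,B_2,\dots,B_k$, so the calls fire in that order. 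Next I would argue that all of them fire once edge $(X,Z)$ is taken: control is then at $Z$, and because there are no inter-procedural branches every callee returns and control continues along a genuine CFG path, so by post-domination each $B_i$ is reached and its call executed. For ``exactly once'' and for the fact that the group fires \emph{only} when $(X,Z)$ is taken, I would use the structure of the control-dependence region: since no $B_i$ post-dominates $X$, the sibling edges out of $X$ admit paths avoiding each $B_i$, and in a reducible (structured) CFG the region control-dependent on $(X,Z)$ is single-entry and disjoint from the regions of the sibling edges, so these $k$ calls are neither reached ``for free'' from another target of $X$ nor executed partially. The case where the region itself contains a loop is orthogonal and is already handled by the smart loop instrumentation of \S\ref{sec:loop}; here I restrict attention to the acyclic branch structure, where ``fires'' and ``fires exactly once'' coincide.

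For (C2), I would simply observe that the entire argument for (C1) is phrased purely in terms of static objects — the CFG, its post-dominator tree, and the control-dependence relation — together with the one dynamic hypothesis ``no inter-procedural branches'', which is itself input-independent; no appeal is made to branch frequencies, loop trip counts, or any other input-sensitive quantity. Hence on every input the same group of call sites emits the same string (the callee identities in post-dominator order) when the branch target is taken and nothing otherwise, so the rewrite is lossless and produces the same $\{path:code\}$ mapping regardless of the profiling input, which is precisely the input-consistency \textit{Morpheus} requires.

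The step I expect to be the main obstacle is making precise the claim that being control-dependent on $(X,Z)$ makes ``$X$ branches to $Z$'' the \emph{exact} execution condition of the whole group — i.e., ruling out the group firing partially or being reached via some other branch. The clean route is to invoke the single-entry/single-exit region structure of the program structure tree for reducible CFGs, so that the $Z$-subregion is entered and left as a unit and is disjoint from its sibling subregions, and to note that nested inner branches only refine \emph{where within the group} additional tokens are inserted — material that \textit{Morpheus} compresses recursively via its call-graph depth-first traversal rather than as part of this group. The remaining ingredients (the post-dominator chain fixing the order, and the no-inter-procedural-branch hypothesis making the intra-procedural CFG faithful) are routine.
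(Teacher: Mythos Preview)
The paper relegates its proof to Appendix~C1, which is not reproduced in the provided source, so a line-by-line comparison is not possible. From the surrounding discussion (\S\ref{sec:static}--\S\ref{sec:comp1} and the description of Algorithm~4), however, it is clear that the paper's argument rests on the same post-dominator/control-dependence machinery you invoke: the bit-vector test is precisely ``$B_i$ post-dominates target $Z$ but not every successor of $X$'', and the recursive call-graph DFS of \S\ref{sec:comp1} handles the inter-procedural and nested-branch cases you explicitly set aside. Your (C1)/(C2) decomposition, the post-dominator-chain argument fixing the relative order of the $B_i$, and the appeal to the no-inter-procedural-branches hypothesis to make the intra-procedural CFG faithful are all consistent with this, and the proposal is essentially sound.

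The one place where you appear to do more work than the lemma demands is the step you yourself flag as the main obstacle --- showing that the group fires \emph{only} when $(X,Z)$ is taken. For the substring-replacement compression of \S\ref{sec:comp1} to be lossless and input-consistent you need only the forward direction: every traversal of $(X,Z)$ emits the fixed string $B_1\cdots B_k$ (your (C1)), and that string is determined statically (your (C2)). You already have the ingredient that rules out \emph{partial} emission from within the group: once $B_1$ is reached, the remaining $B_i$ follow by post-dominance, so no proper prefix $B_1\cdots B_j$ with $j<k$ ever appears alone. Whether some $B_i$ is also reachable along an unrelated path --- producing a suffix or an isolated token --- is immaterial to the rewrite, since the $\{path:code\}$ match simply does not fire on those occurrences. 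Dropping the converse lets you dispense with the SESE/program-structure-tree apparatus and leaves a bare post-dominance argument, which, judging from the text framing the lemma, is most likely all the appendix contains.
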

\begin{proof}
Please see appendix C1. 
\end{proof}

In order to determine control-dependence relationships, \textit{Morpheus} first analyzes all function calls, branches (including \textit{switches}) in the source code, and constructs the \textit{post-dominator} tree which is used for detecting control dependence. The control dependence of each function call is checked with both the nearest branch and switch statements.

For simultaneously computing the control-dependence between function calls and branches, and identifying the specific branch targets on which the control-dependence property holds, \textit{Morpheus} implements a \textit{bit-vector based detection mechanism} (Algorithm 4). The algorithm treats each branch target as an element in a bit-vector (line 5). If a function call post-dominates the $i^{th}$ branch target, then position $i$ of the bit vector is set to 1  (lines 10 - 11). Thus, if a function call post-dominates all (or none) targets of a branch, then the \textit{bitwise exclusive-OR} (lines 15 - 20) of the bit-vector would be 0. In that case, the function call is not control-dependent on the branch (lines 24 - 26). Otherwise, the function call dominates atleast one branch target (but not all), and is control-dependent on the branch (lines 21- 23). The output of Algorithm 4 is a mapping between \textit{every function call} in the program and the \textit{corresponding branch target} on which they are control dependent on (control independence is denoted as $0$).  

\subsection{Input-Consistent Compression}
\label{sec:comp1}

The compressible regions in \textit{WPP based function profile} profile can be identified by observing control-dependent function calls and the corresponding branch targets which they post-dominate. However, lemma \ref{l1} assumes the absence of any control-flow branches, which can potentially be present in callee functions, called by control-dependent functions. In order to reason about \textit{interprocedural branches} for compression, \textit{Morpheus} performs a modified depth-first traversal of functions, starting from the \textit{main} function. This process is highlighted in Algorithm (Appendix C). For every intermediate callee function called by the caller function, this traversal takes into account the branch targets of caller functions, and checks if the callee is further dependent on some other branch. This processes is repeated recursively for all caller-callee functions (lines 8-9). After handling inter-procedural branches, \textit{Morpheus} then checks the regions of the call stack which are dependent on the same branch targets, and encloses them into a single path. 

The path encoding are then matched with sub-paths in the given \textit{WPP based function profile}, where each instances of the sub-path are replaced with the corresponding path code. Our experiments have shown that in practice, this does not affect the compression drastically. The \textit{input-consistency} of this compression scheme results from the establishing the path encoding directly from the application intermediate representation, instead of analyzing individual program profiles.


\subsection{Augmenting Whole Program Profiles}
\label{sec:aug}

\begin{wrapfigure}{r}{0.45\textwidth}
\vspace{-20pt}
  \begin{center}
    \includegraphics[width=0.45\columnwidth]{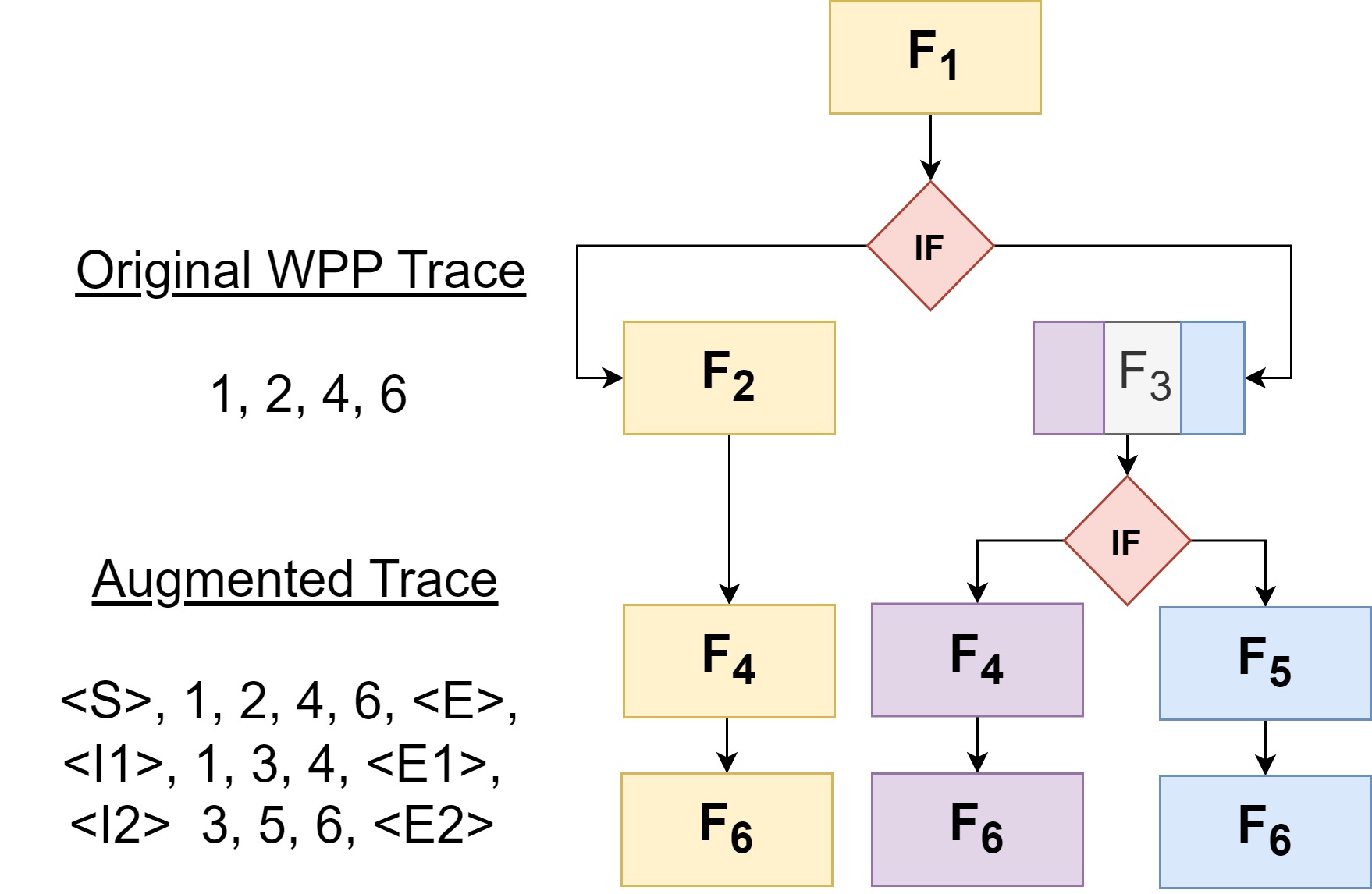}
\caption{\small Simple example of augmenting unseen program paths in the WPP based function profile. The paths containing the set of unseen function $\{F_4, F_5\}$, are obtained from traversing the call-graph starting from each unseen node. The traversal stops at the first seen node.}
\label{fig:aug1}
  \end{center}
  \vspace{-15pt}
\end{wrapfigure}

The final step in the creation of an unified, compressed profile is the inclusion of program context in compressed profile. A single WPP based function profile represents the control-flow paths taken by the program for a particular input. It inherently lacks the knowledge of program branches, and conditionals, and the program context, which can lead to diverse program behavior. \textit{Morpheus} tackles this issue by performing path analysis, and then augmenting the unseen paths in the compressed profile. 

To achieve this, \textit{Morpheus} first statically obtains a list of all the functions which are present in the application. For a given WPP based function profile, it then computes the set difference between the static function list and the observed functions in the WPP based function profile. For each (unseen) function in the set difference, it then performs a depth-first traversal of the function callgraph. The traversal terminates at the first function that is already present in the WPP based function profile. This operation results in a chain of unseen functions, which are then appended at the end of the compressed WPP based function profile. Fig. \ref{fig:aug1} depicts a single example where the WPP based function profile is augmented with the unseen program paths. Each augmented path is appended with \textit{start} and \textit{end} marker tokens, which preserves the context of the original trace. These marker tokens are also intended for indicating the generative model to adapt to input-specific path behavior.

\subsection{Deducing Dynamic Function Calls: Sequence Prediction Problem}
\label{sec:func}

Dynamic function call prediction involves computing the sequence of function calls during a program execution. Specifically, the goal is to develop a prediction mechanism that can determine the function call $F_{t+1}$ at a timestep $t + 1$, based on the program's program execution state $h_t$, at timestep $t$. This can be expressed as sequence prediction problem with learnable weights $\mathbb{W}$, and bias $\mathbb{B}$:

\begin{equation}
   \underbrace{F_{t+1}}_\text{next call} = \mathbb{W} \underbrace{h_{t}}_\text{program state} + ~~\mathbb{B} 
\end{equation}

For predicting the program's dynamic function calls, the program's execution state at timestep $t+1$, can be captured as a combination of function call $F_t$ at previous timestep $t$, and the program's historical execution context (function calls $F_0, F_1, ... , F_{t-1}$). The program state $h_{t+1}$ can then be expressed the following recurrence relation $f_W(h_t, F_t)$:

\begin{equation}
  h_{t+1} = \underbrace{f_W}_\text{activation function}(h_{t}, ~~F_t) 
\end{equation}

In Deep Learning theory, the function $f_W$, which models the next program state based on previous state and function call, is known as \textit{activation function}. \textit{Morpheus} uses a rectified linear-unit (ReLU) as its activation function. The next function call token $F_{t+1}$, and the program state $h_{t+1}$ at timestep $t+1$ can thus be estimated with the help of \textit{Recurrent Neural Network} (RNN). A typical RNN consists of neurons which can be distributed into \textit{input} layer, \textit{output} layer, and \textit{hidden} layers. The function call tokens are first encoded into vectors, and passed on to the input layer. The hidden layer is responsible for modelling the program state by storing relevant information from previous timesteps. In this work, \textit{Morpheus} uses RNNs with a single hidden layer, which is one of simplest neural architecture in theory (\textit{vanilla RNN}).

The utility of RNN also stems from the fact that can be adapted to predict function call sequences in a variety of scenarios. For instance, in case of \textit{one-to-one} prediction scenarios, the RNN accepts a new token at a single timestep, and outputs a single token. For next timestep, this output token can be re-used as an input token, which can give it \textbf{one-to-many} flavor, meaning that with multiple tokens could be obtained from RNN with a single input token. For function call predictions, the single input token would be the token corresponding \textit{main} function of the application.     
\section{Evaluation}
\label{sec:eval}

The evaluation of \textit{Phaedrus} is two-fold. The first set of experiments are designed to evaluate \textit{Dynamis}'s \textit{application behavior synthesis}, while the second set of experiments evaluate the application profile generalization approach by \textit{Morpheus}. Specifically, we evaluate \textit{Phaedrus} to answer the following set of research questions:

\begin{itemize}
    \item Can \textit{Dynamis}'s LLM-based generative model predict most frequently executed dynamic functions for given inputs? Can \textit{Dynamis} be extended to predict the runtime dominated hotspot functions? How do these predictions impact performance and code size?
    \item What sources of additional information should be provided to \textit{Dynamis} for achieving close-to-ground truth results? Finally, what fraction of function calls do the predicted functions cover compared to the ground truth when provided with additional inputs?
 \item How effective is \textit{Morpheus}'s smart-loop instrumentation mechanism in compressing the WPP based function profiles? Is it able to reduce redundancies in a lossless manner?
    \item Can \textit{Morpheus}'s RNN-based generative model predict most frequently executed dynamic functions for unseen inputs? How ``lightweight'' is the entire approach, in terms of training time and model size? 
    \item How much performance improvement \& binary size reduction can be achieved by profile-guided optimization by using function hot sets generated by \textit{Morpheus} \& \textit{Dynamis}? What are the tradeoffs between these two approaches? 
\end{itemize}

\textbf{Experimental Setup}: The experiments were conducted on a Dell 7820 Precision Tower with Intel Xeon Gold 6230R processors. The Intel system has 2 sockets (13 cores each), 36 MB L3-cache, running Ubuntu 20.04.6 LTS. The compiler passes and algorithms in front-end of \textit{Morpheus} \& \textit{Dynamis} were implemented in LLVM 15.0.0, while Pytorch version 2.4.0, was used in \textit{Morpheus}'s deep-learning back-end with Python 3.8.10. GPT-5,  Claude-4-Sonnet and GPT 4.0 were used as the LLM in \textit{Dynamis}. The LLMs were invoked in an interactive environment by \textit{Cursor} \cite{cursor_ai_2024} .   

\textbf{Hyper-Parameters \& Baselines}: The RNN-based model used in \textit{Morpheus} consists of single-layer, with 1000 hidden units. The training batch size was set to 4, 32, and 64 for benchmarks with $\leq50, \leq1000, >1000$ tokens, respectively. The learning rate was set to be $0.001$, dropout $0.2$, and each RNN was trained for 20 epochs.

\textbf{PGO Baseline}: To compare the PGO benefits of \textit{Phaedrus}, we use two different baselines (1) \textit{O3-baseline}: applications compiled normally with O3-flag (2) \textit{traditional PGO}, where an application is first profiled on representative input(s), and then compiler transformations are performed based on the given profile. For this baseline, we profile the application on small input and use that profile to perform PGO.

\subsection{Evaluating Dynamis: Application Behavior Synthesis}
\label{res:dyn}

Our experiment of extrapolating the application domain by \textit{Dynamis} showed a very high (almost perfect) success. With \textit{GPT 4.0}, we observed that except \textit{510.parest\_r}, the deduction is very close to the actual domain. However, when this experiment was repeated with GPT-5 and Claude-4-Sonnet, we observe that the domain deduction is always correct. This makes way for accurate reasoning of the algorithmic behavior of domain-specific applications. Table 1 in appendix shows the complete extrapolation of different domains by GPT 4.0-equipped \textit{Dynamis} in Phase 1 (\S \ref{sec:domain}) across different benchmarks. 

The effects of prompting the LLM to first infer domain knowledge is highlighted in Fig. \ref{fig:phaseres}. The most notable improvement was in benchmark \textit{531.deepsjeng\_r} improvement, which immediately led to the identification of computationally intensive functions. 

\begin{figure*}[ht]
\begin{tabular}{p{0.5\textwidth}p{0.5\textwidth}}
    \begin{minipage}{.49\textwidth}
    \centering\includegraphics[width=1.0\textwidth]{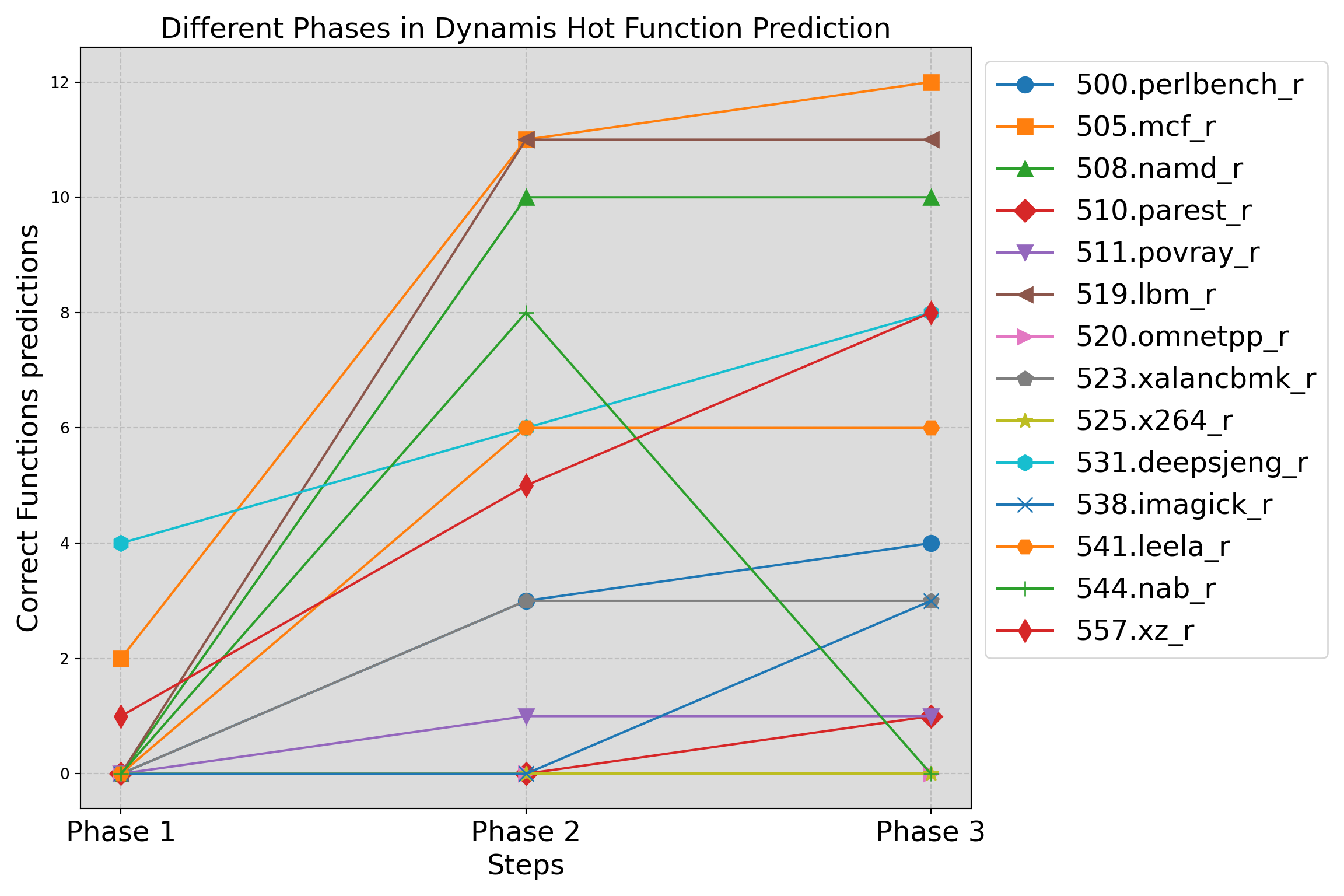}
\caption{\textit{Dynamis}'s prediction of top-15 \textit{most frequently executed} functions in the three phases. The trend how \textit{Dynamis} refines it's predictions, once equipped with finer granularity of compiler knowledge.} 
\label{fig:phaseres}
    \end{minipage}
    &
    \begin{minipage}{.49\textwidth}
    \centering\includegraphics[width=1.0\textwidth]{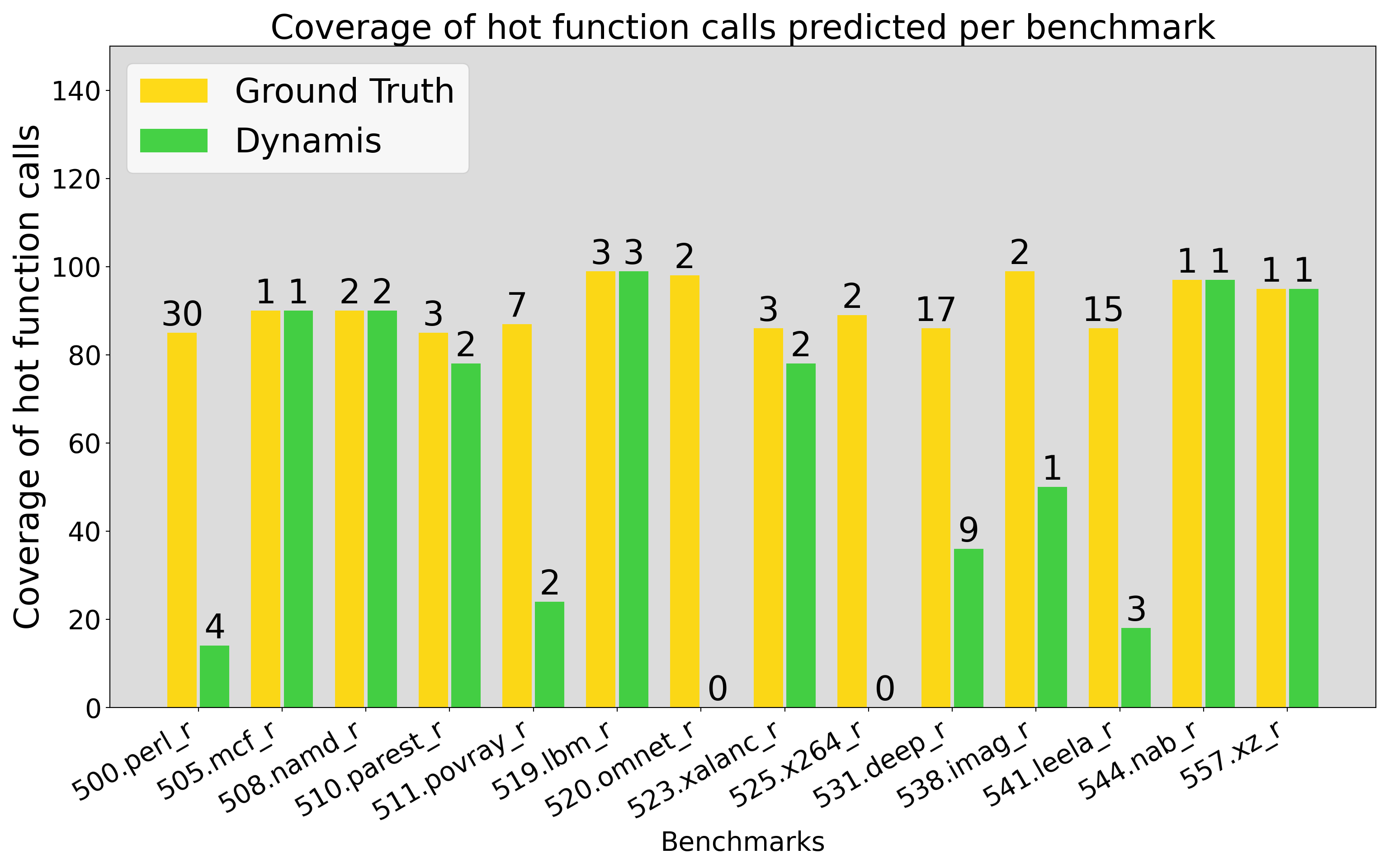}
\caption{Percentage of most frequently executed function calls predicted by \textit{Dynamis}, and compared with the actual execution (ground truth).} 
    \label{fig:funccoverage}
\end{minipage}
\end{tabular}
\end{figure*}

The predictions of \textit{most frequently executed} hot functions improved significantly in phase 2 (\S \ref{sec:semantic}), where the LLM was provided with program semantic artifacts. Specifically, providing function universe improved the results by 8.33\%, notably in benchmarks \textit{544.nab\_r, 508.namd\_r} by 43.33\%, while call graph, recursive functions and transitive closures enhanced the output by 2.8\%, 5.56\%, and 12.22\% respectively. The function universe reduces the search space, while call graphs exposes caller-callee relationships, aiding in statically tracing control flow. We observed that recursive functions also lead to high-execution hotspots and transitive closures uncover long-range dependencies. These elements enhanced \textit{Dynamis}' ability to prioritize and predict computationally intensive functions.

Finally, in phase 3 (\S\ref{sec:staticLLM}),  we observe inter-procedural loop depth, further refines LLM estimates by indicating function \textit{hotness} based on computational complexity. However, estimating the size of loop bound to a particular depth remains challenging. Although \textit{Dynamis} works well on most benchmarks, limitations persist for \textit{520.omnetpp\_r} and \textit{525.x264\_r}, which lack correlations between static analysis cues and runtime behavior, severely making it harder for LLM to determine the most frequently executed functions. 

\subsection{Expanding Dynamis: Exploring Modularity with new LLMs}
\label{res:hotpotpred}
In order to investigate \textit{Dynamis}' performance with different LLMs, we repeated the experiments with two current state-of-the art "thinking" LLMs: \textit{claude-4-sonnet} and \textit{GPT-5}. In these experiments, we designed the prompt to account for the "\textbf{functions where the applications spend the most amount of execution time}". This is different from our previous experiments, since we are shifting the focus from the set of \textit{most frequently executed hot functions} to \textit{runtime dominated hotspot functions}. Predicting runtime-dominated hotspot functions is a harder problem because execution time depends on complex interactions (such as memory patterns, data dependencies) rather than function invocation counts. 

Fig. \ref{fig:hotdyn1} shows \textit{Dynamis}' results on predicting the \textit{runtime-dominated hotspot functions}, and Fig. \ref{fig:hottime1} illustrates the percentage of application execution runtime covered by these hotspot functions. In our experiments, we considered any function accounting to more than 5\% of the total application runtime as a `hotspot' function. As observed, \textit{Dynamis}' hotspot prediction accuracies and coverage can broadly be divided into three categories: 

\begin{figure*}[htbp]
\centering\includegraphics[width=0.8\textwidth]{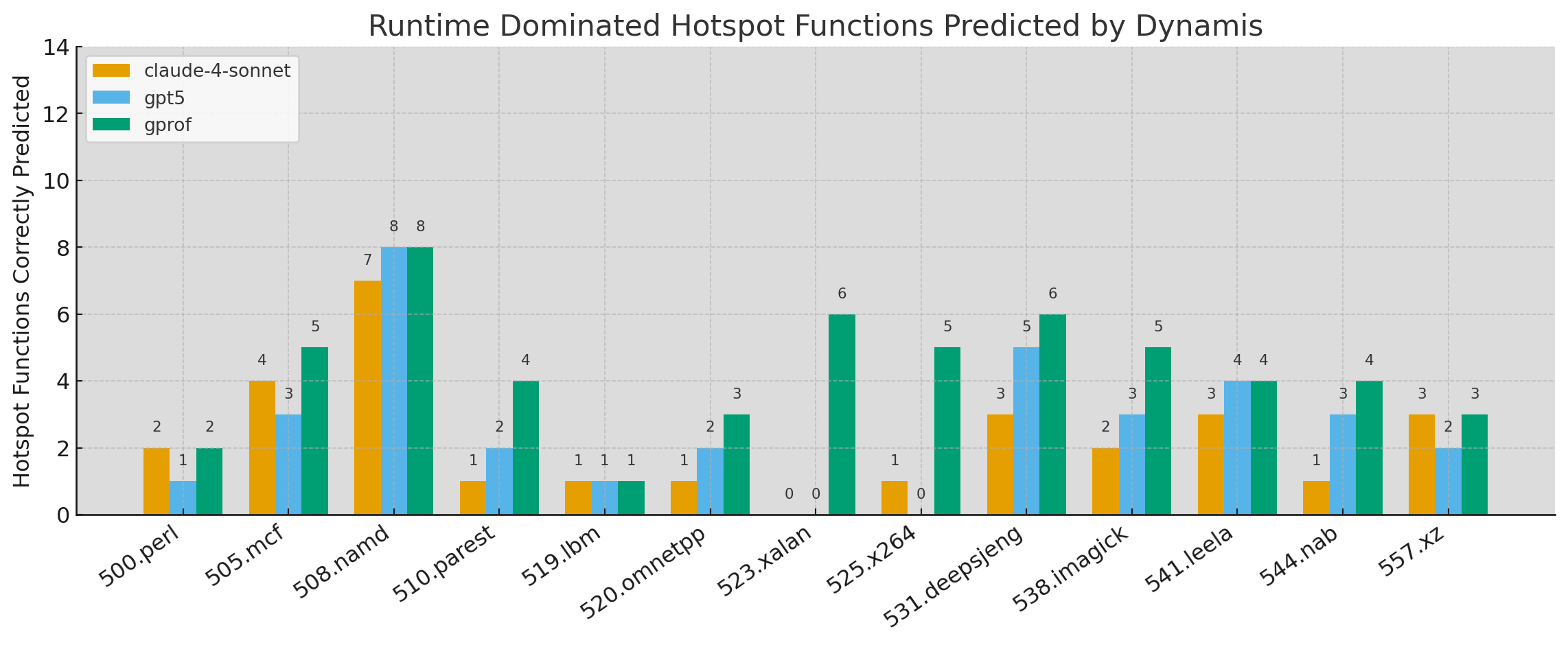}
\caption{\textit{Dynamis}' prediction of hotspot functions in the SPEC2017 suite for REF input. Functions that amount for more than 5\% of the total application execution time are considered here. With the aid of domain knowledge and static compiler analysis, \textit{Dynamis} is able to \textbf{infer the computationally intensive hotspots for large applications without profiling or executing.} }
\label{fig:hotdyn1}
\end{figure*}

$\blacksquare $ \textbf{Near Perfect Coverage} ($> 80\%$ execution predicted): In 6 out of 13 benchmarks from SPEC2017, \textit{GPT5}-equipped \textit{dynamis} was able to extrapolate almost all of the runtime program hotspot functions. These are majorly compute-bound physical/chemical simulation benchmarks, where there exists a direct correlation between domain knowledge and the algorithmic complexity. For instance, in $508.namd\_r$, the LLM first identified the application domain as molecular dynamics, and based on that, it \textit{knew} that ``atoms interact through \textit{long-ranged forces}, which involves atom-pairs", leading to $O (n^2)$ scaling. It was further reinforced with the high inter-procedural loop depth ($\sim 214$) for the aforementioned computation. Similarly, in $519.lbm\_r$, the LLM extrapolated the algorithmic complexity $O(N \cdot T)$, after inferring that the application implements the \textit{Lattice-Boltzmann Method} from fluid mechanics. The rest of the hotspots were then obtained by mapping the call graph to the corresponding functions, and incorporating the loop depths to highlight potential bottlenecks. Also, note that benchmarks from this category range from a few KLOC to $\sim 250$ KLOC.

\begin{figure*}[ht]
\begin{tabular}{p{0.5\textwidth}p{0.5\textwidth}}
    \begin{minipage}{.49\textwidth}
    \centering\includegraphics[width=1.0\textwidth]{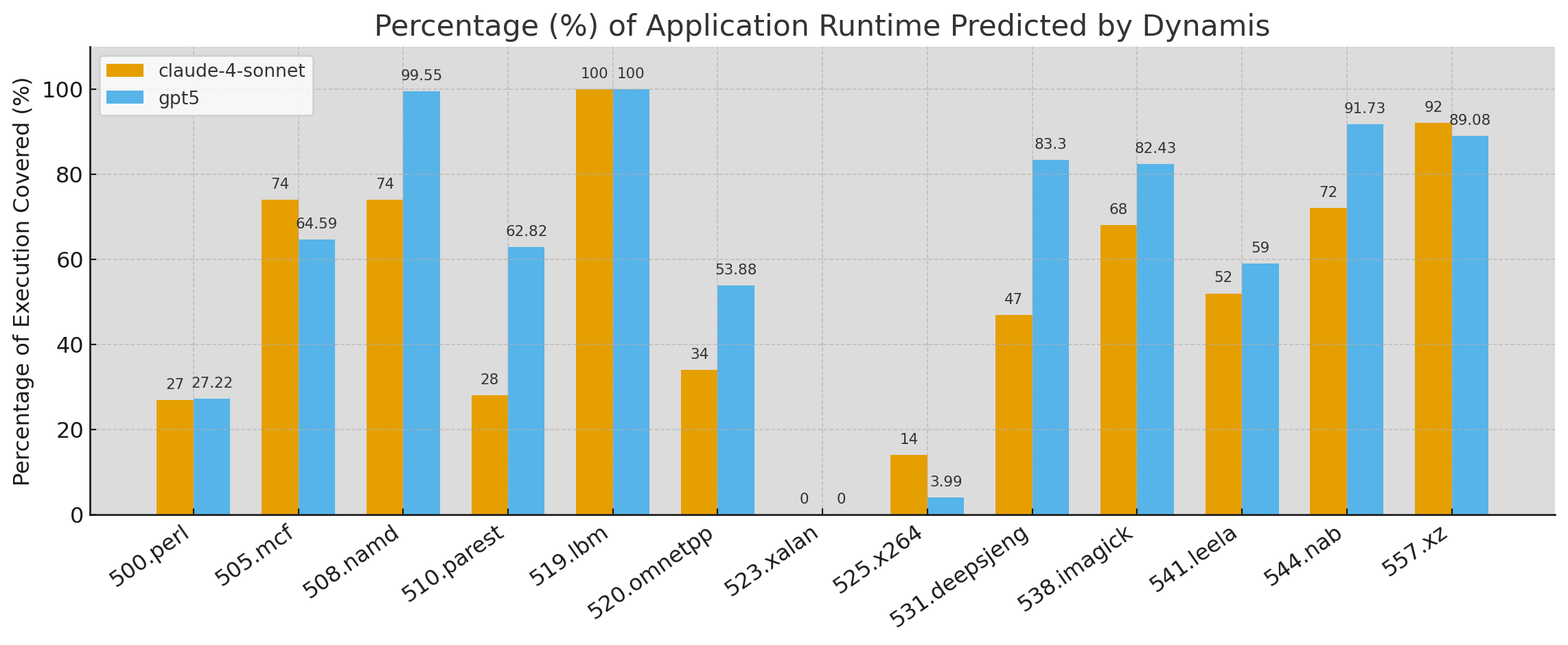}
    \caption{\small Percentage of application execution runtime covered by \textit{Dynamis}'s hotspot predictions for SPEC2017's REF input. \textit{The highest coverage is obtained on compute-intensive applications, while memory-bound applications have lower prediction accuracy}.} 
    \label{fig:hottime1}
    \end{minipage}
    &
    \begin{minipage}{.49\textwidth}
    \centering\includegraphics[width=1.0\textwidth]{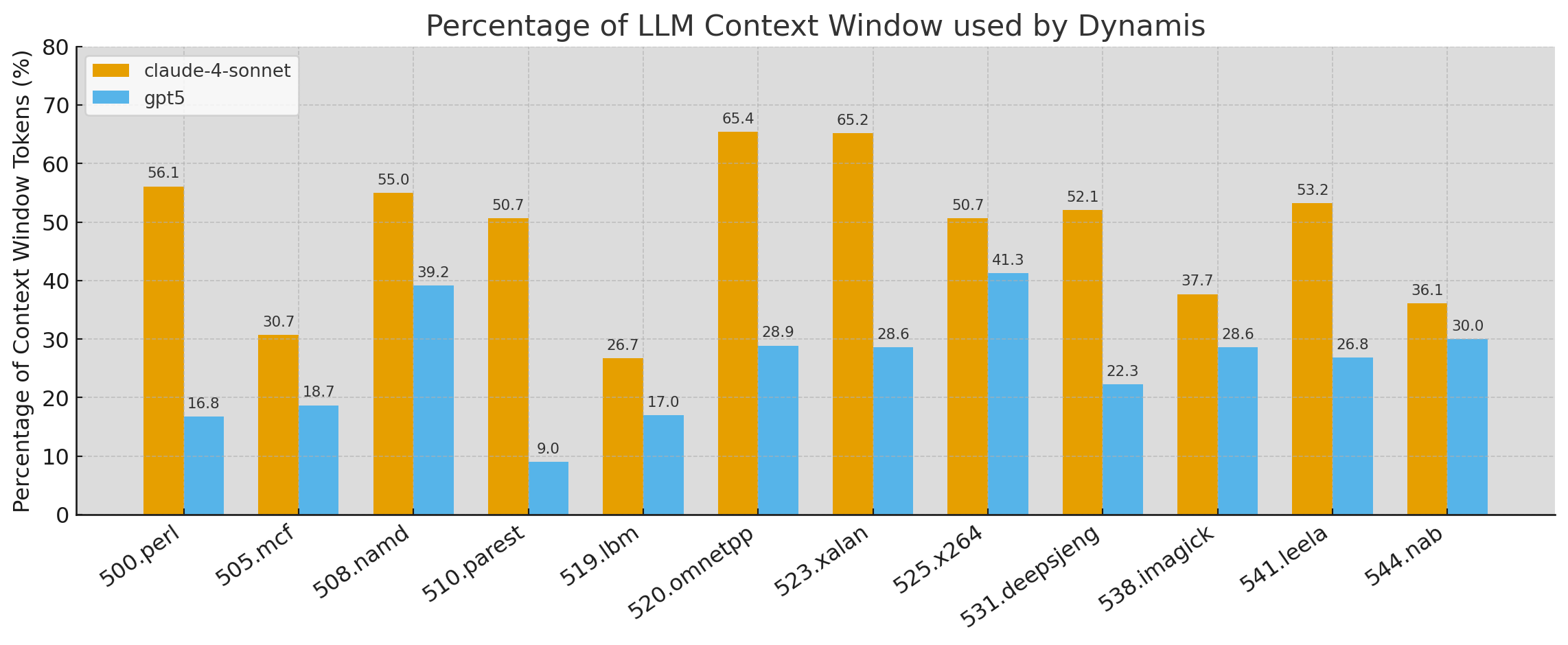}
    \caption{\small Percentage of context window used by \textit{Dynamis} for predicting hotspot functions. A state of the art LLM such as GPT5 \textbf{barely requires a third of its context window to predict input-specific dynamic behavior.}} 
    \label{fig:context1}
    \end{minipage}
\end{tabular}
\end{figure*}

$\blacksquare $ \textbf{Moderate Coverage} ($50 - 80\%$): 4 benchmarks ($500.mcf\_r$, $510.parest\_r$, $520.omnetpp\_r$, $541.leela\_r$) exhibited an execution coverage of 60\% on average. These benchmarks are applications where the hotspot functions are not confined to the primary algorithmic kernels but also include auxiliary and system-level routines that dominate at scale. For instance, in $505.mcf\_r$, the domain knowledge combined with the semantic program artifacts correctly identified 4/5 hotspot functions by zeroing in on the core algorithmic kernels. However, both the LLMs missed the utility sorting function $spec\_qsort$, which dominates at scale. Similarly, in $510.parest\_r$, the LLM was able to identify algorithmic hotpots, but failed to reason about the various C++ template variants missing a couple of hotspots. One possible solution to improve coverage for such benchmarks would be to incorporate finer-grained static analyses—such as resolving function pointers and template instantiations.   

$\blacksquare $ \textbf{Minimal Coverage} ($0 - 35\%$): The runtime hotspots in $523.xalanc\_r$, are dominated by memory allocation/deallocation patterns, and object lifecycle management. This makes the application memory-bound, and the LLM's domain expertise, which focused on determining algorithmic complexity, did not translate to actual hotspots. On the other hand, $500.perlbench\_r$, exhibits a near-uniform distribution of execution time ($\sim 1-2\%$ each), with two outlier functions amounting to only a third of the runtime. \textit{Dynamis} was able to infer the dominant outliers ($S\_regmatch$, $perl\_hv\_common$), leading to low coverage. $525.x264\_r$ has a similar distribution, with the longest function only lasting 4 secs. 

\subsection{Leveraging Phaedrus: Profile-Guided Optimization}
\label{res:pgo}

$\star$ \textbf{Runtime-Dominated Hotspot Functions}: We also used \textit{Dynamis}' prediction of hotspot functions (\S \ref{res:hotpotpred}) to perform PGO. The performance improvement over a normalized O3-baseline is shown in Fig. \ref{fig:hotspottime2}. \textit{Dynamis} achieves a performance improvement of \textit{6}\% and \textit{3.63\%} over the traditional PGO baseline with claude-4-sonnet and GPT-5 respectively. An important point to note is that even though \textit{Dynamis}’ predictions covered only 50–60\% (Fig.\ref{fig:hotdyn1}) of the total execution time (and less than 30\% in some cases), its inferred hotspots were still broadly consistent with those observed in the ground-truth \textit{profdata} profiles. This is particularly interesting because it suggests that \textit{even with runtime partial coverage, the framework captured the core computational patterns responsible for most of the runtime improvement}.

To investigate this effect further, we analyzed the contents of the predicted \textit{profdata} and the ground truth closely. Profiling tool \textit{gprof} highlights expensive compute-bound routines that dominate wall-clock time, while \texttt{profdata} emphasizes fast but frequently executed functions. It became evident that functions dominating runtime are not always those most amenable to direct optimization. In some cases, performance can be improved indirectly—by transforming other functions that invoke these heavy kernels. For instance, in $520.omnetpp\_r$, the low-level function \textit{cMessageHeap::shiftup} predicted by \textit{Dyanmis} appeared trivial in \textit{gprof}, yet it dominated execution counts in \texttt{profdata}. We observed that \textit{Dynamis}’ reasoning consistently balanced between capturing both computationally intensive and optimization-relevant routines. \textit{In spite of being prompted to identify runtime-dominated hotspots, \textit{Dynamis} still surfaced functions that were most likely to yield measurable performance improvement when optimized.} This effect was also observed in the application binary sizes, where \textit{Dyanmis} achieved an improvement of \textit{5.19} over O3. This is achieved by transforming the function layout within the application binary. This optimization reduces the binary size by enabling superior function compression \cite{hoag2024reordering}, and can potentially improve performance as the new layout would minimize page faults by appropriately ordering them. 

\begin{figure*}[ht]
\begin{tabular}{p{0.5\textwidth}p{0.5\textwidth}}
    \begin{minipage}{.49\textwidth}
    \centering\includegraphics[width=1.0\textwidth]{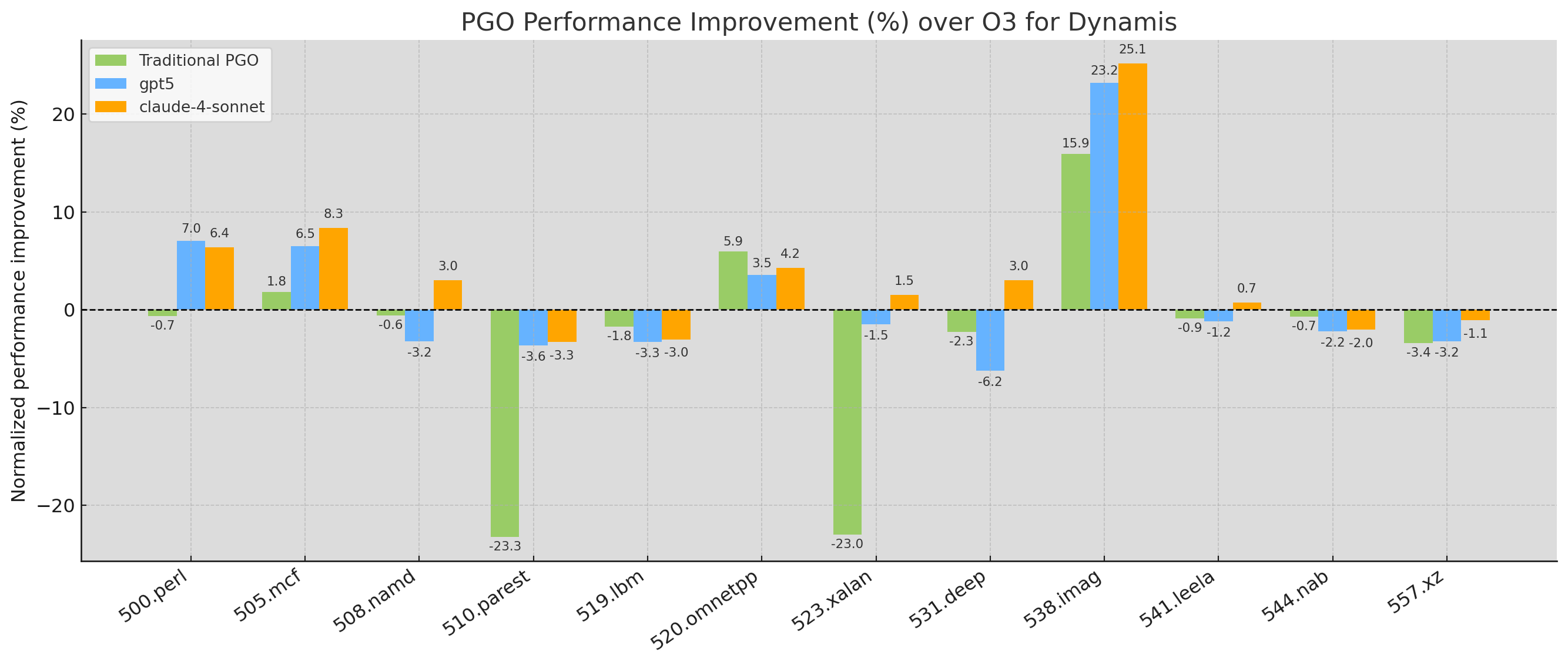}
    \caption{\small \textit{Dynamis}' hotspot predictions achieves an average performance improvement of 6.08\% over traditional PGO, without any profiling. This is achieved by capturing both computationally intensive and optimization-relevant routines by \textit{Dynamis}. The time shown is normalized with respect to O3.} 
    \label{fig:hotspottime2}
    \end{minipage}
    &
    \begin{minipage}{.49\textwidth}
    \centering\includegraphics[width=1.0\textwidth]{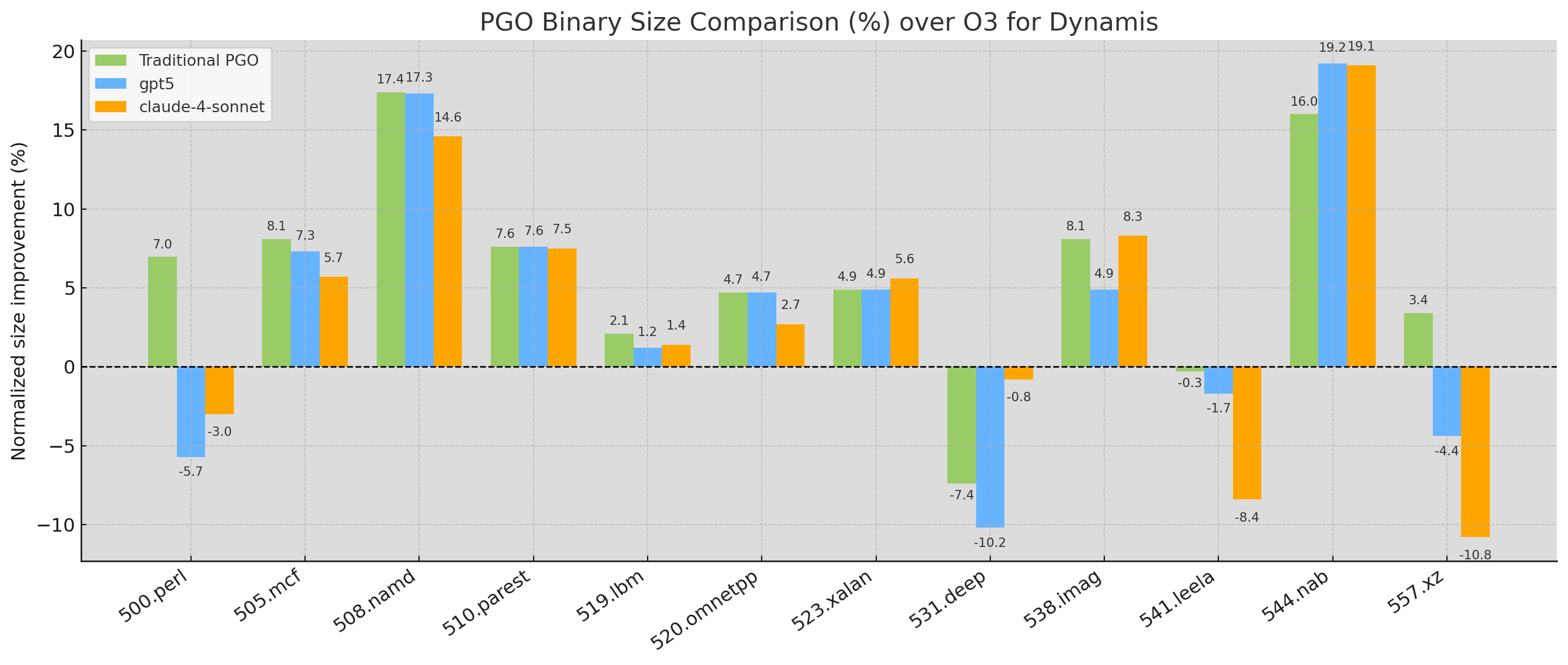}
    \caption{\small \textit{Dyanmis} hotspot predictions achieves average codesize reduction of 4.53\% \& 3.26\% over O3 with claude-4-sonnet and GPT-5 respectively. This is achieved by function re-layout transformation which enables superior function compression. The size shown is normalized with respect to O3.} 
    \label{fig:bin1}
    \end{minipage}
\end{tabular}
\end{figure*}

$\star$ \textbf{Frequently Executed Functions}: We also leveraged \textit{Phaedrus}'s outputs (both \textit{Morpheus} \& \textit{Dynamis}) of \textit{most frequently executed functions} to perform PGO. Fig. \ref{fig:perfmorph} \& Fig. \ref{fig:perfllm} shows the effect of performing PGO with \textit{Phaedrus}' outputs. As observed, we notice performance improvements comparable to traditional PGO (0.8\% on average and upto 19.18\% at max) on most benchmarks. 

On taking a closer look into the performance trends, we find that on average \textit{Morpheus} outperforms \textit{Dynamis}' frequently executed functions setting by 2\% on average for both baselines. This is explained by the fact that \textit{Morpheus}'s RNN model generates the unified trace in a timestep manner; rather than extrapolating the computationally hot functions, as \textit{Dynamis} does. This highlights a fundamental difference in the methodology of both approaches: \textit{application behavior synthesis, a profile-less approach, is approximate than application profile generalization, which involves profile generation, compression and unification}. This intuition is clearly reflected in Fig. \ref{fig:perfmorph} \& \ref{fig:perfllm}. 

Fig. \ref{fig:pgoresdyn} highlights the binary size reduction for \textit{Dynamis} over both baseline. \textit{Dynamis} outperforms the normal O3 baseline (no PGO) by 14\% on average with up to 65\% in $519.lbm\_r$. As mentioned earlier, the primary reason for size reduction is the high compression obtained by appropriate function layout by LLVM. Fig \ref{fig:pgoresmorp} depicts the binary size reduction over both baseline - in this case, we don't notice any observable difference.  

\begin{figure*}[ht]
\begin{tabular}{p{0.5\textwidth}p{0.5\textwidth}}
    \begin{minipage}{.49\textwidth}
        \centering\includegraphics[width=1.0\textwidth]{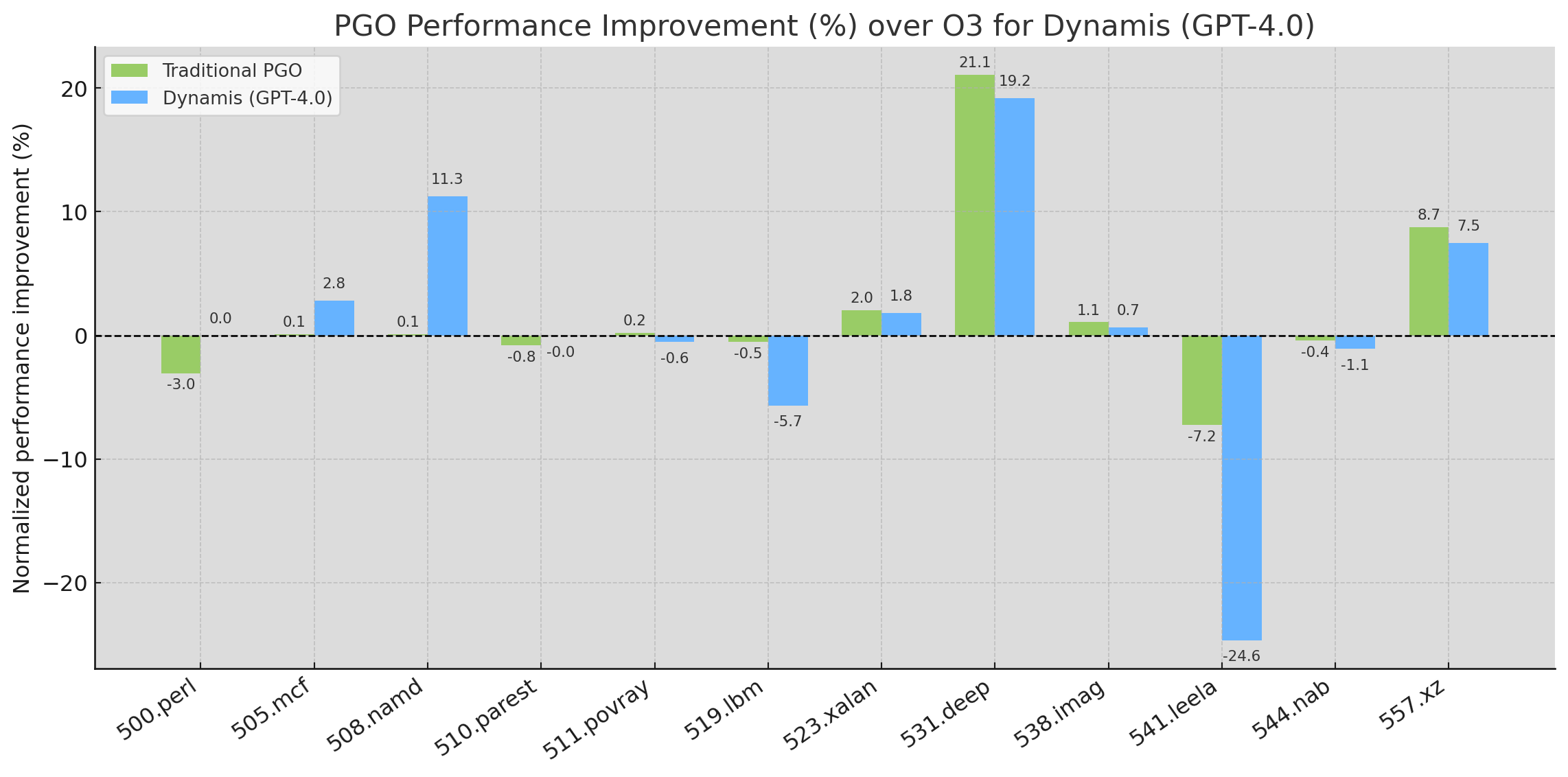}
    \caption{\small \textit{Dynamis}' frequently executed predictions achieves around 0.8\% average improvement over O3. Being a legacy model, GPT 4.0 often missed the exact computational hotspots in large benchmarks. } 
    \label{fig:perfllm}
    \end{minipage}
    &
    \begin{minipage}{.49\textwidth}
    
    \centering\includegraphics[width=1.0\textwidth]{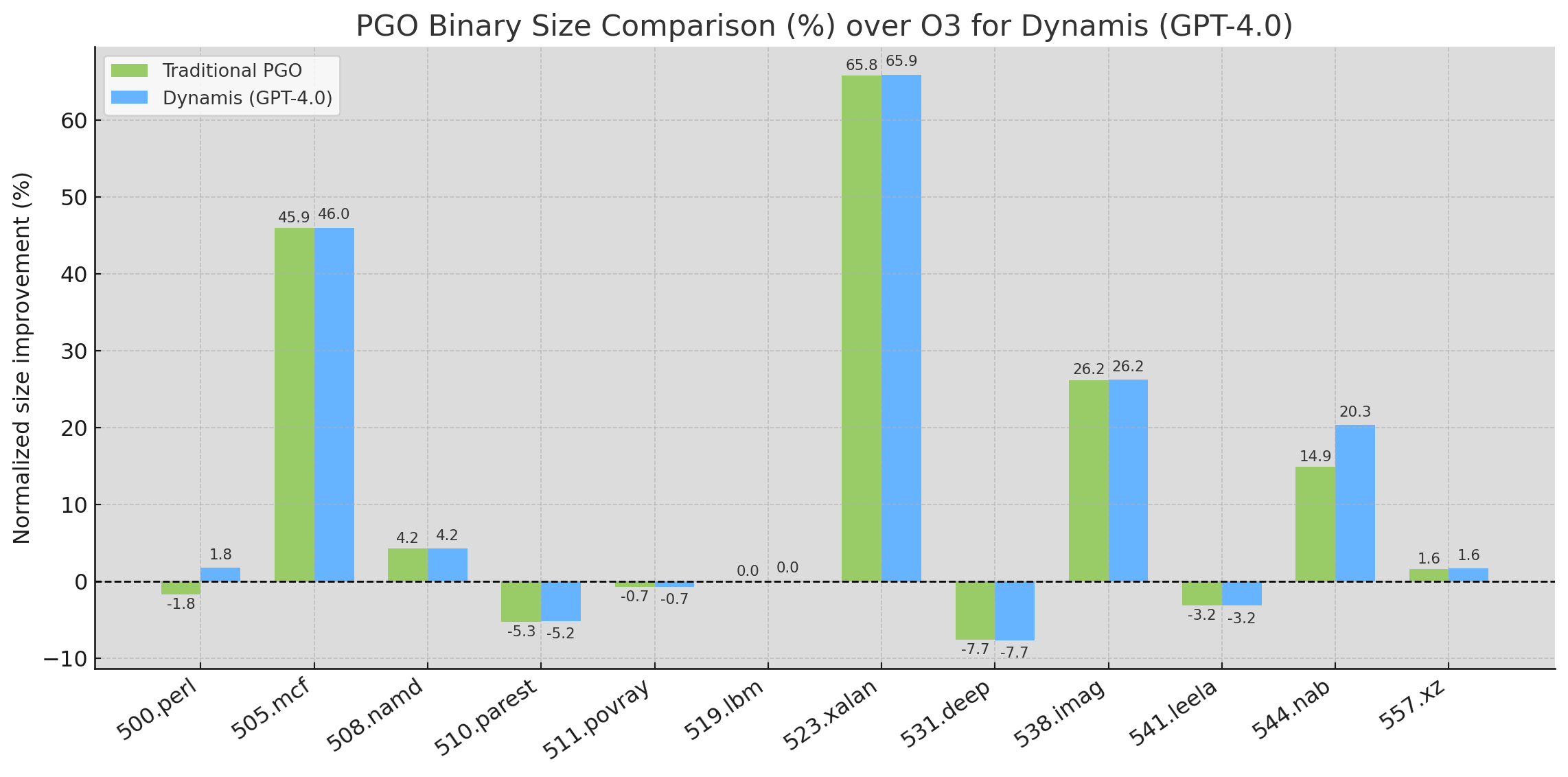}
    \caption{\small \textit{Dynamis}' frequently executed predictions reduced codesize by an average of 13.68\% over O3, outperfoming the hotspot predictions.} 
    \label{fig:pgoresdyn}
    \end{minipage}
\end{tabular}
\end{figure*}

\begin{figure*}[htbp]
\begin{tabular}{p{0.5\textwidth}p{0.5\textwidth}}
    \begin{minipage}{.49\textwidth}
    \centering\includegraphics[width=1.0\textwidth]{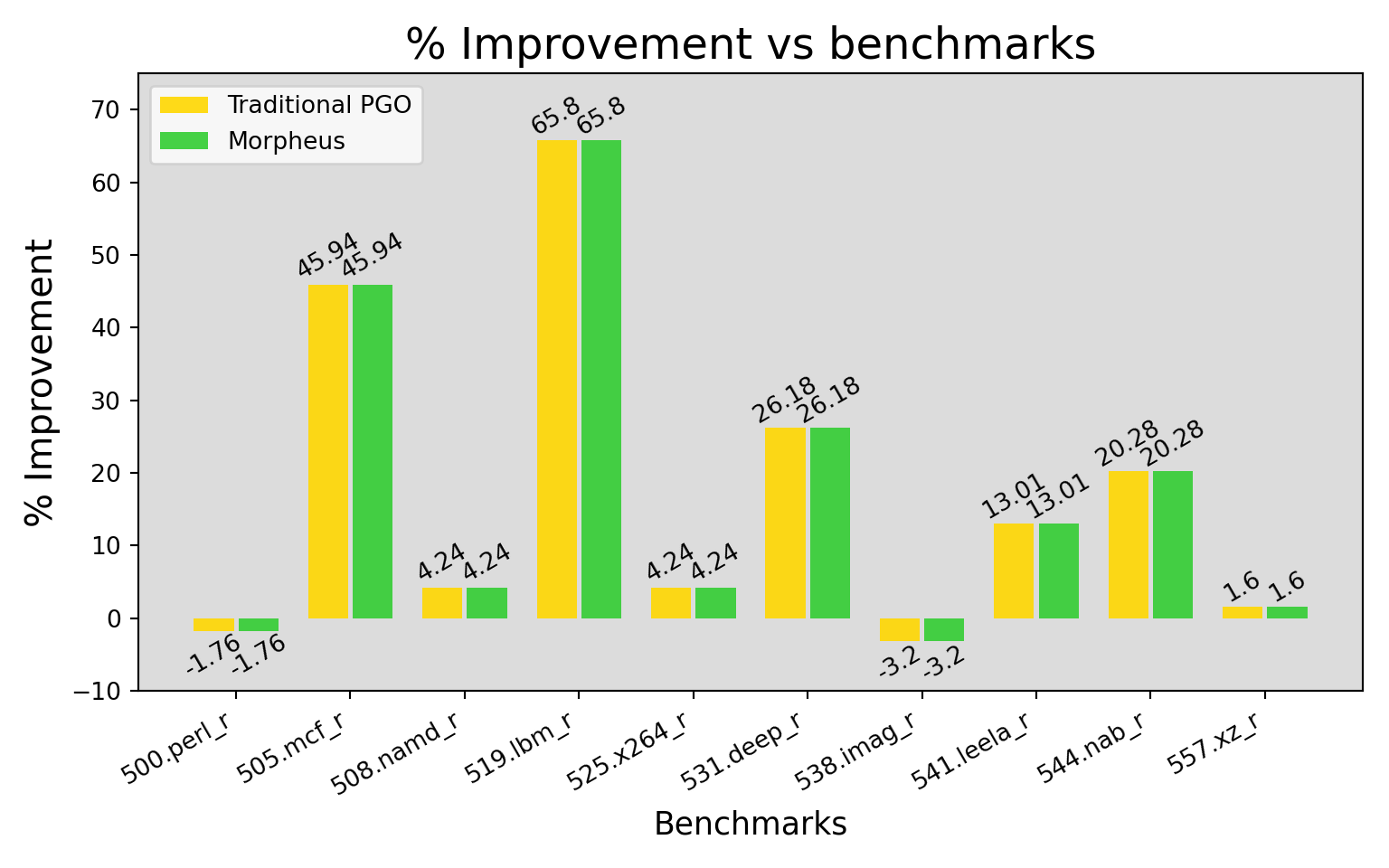}
    \caption{\small \textit{Morpheus}'s results for codesize reduction. An average of 17.63\% improvement is observed, just as good as traditional PGO} 
    \label{fig:pgoresmorp}
    \end{minipage}
    &
    \begin{minipage}{.49\textwidth}
    \centering\includegraphics[width=1.0\textwidth]{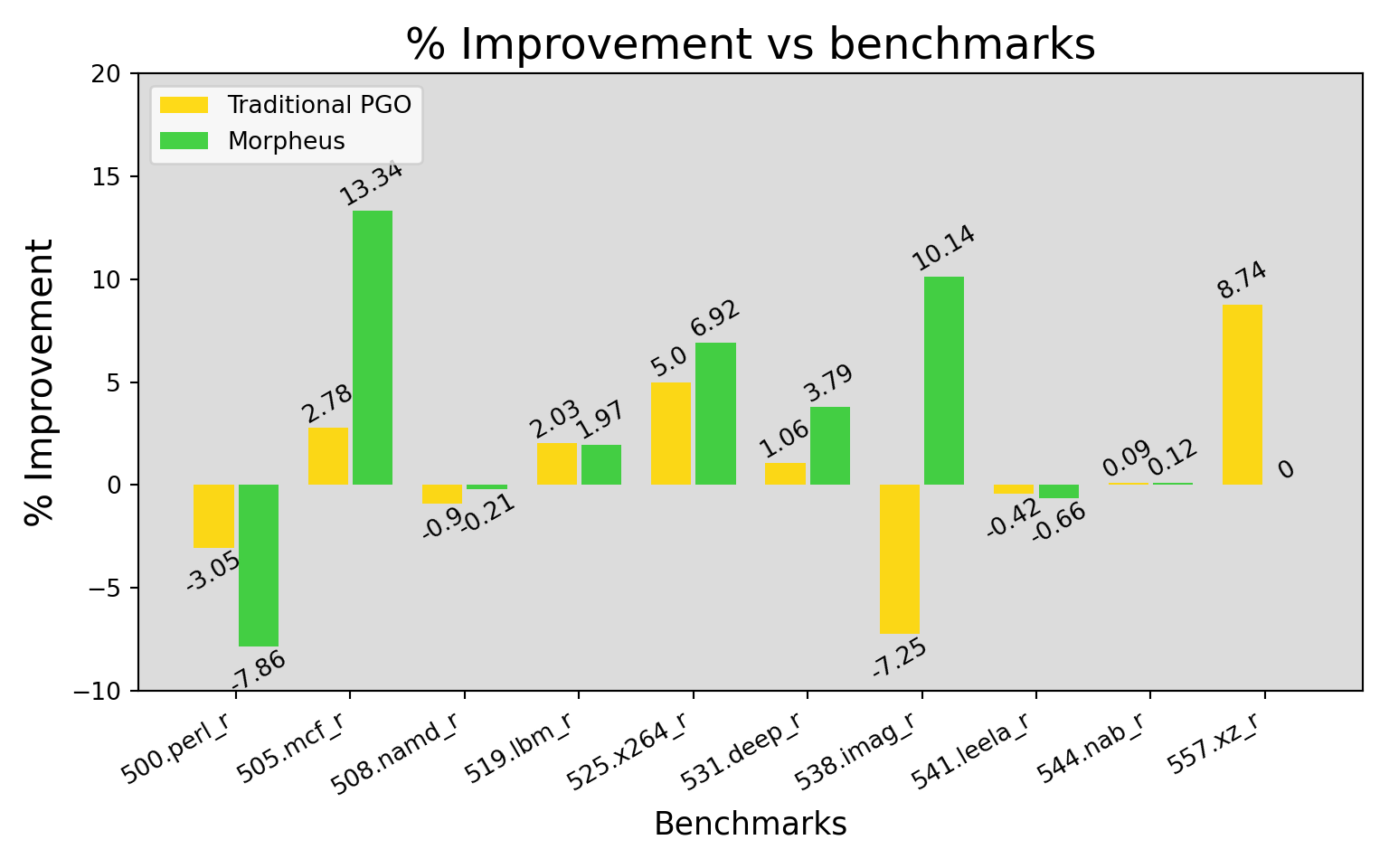}
    \caption{\small \textit{Morpheus}'s performance improvements observed across different benchmarks. Baseline here is normal compilation without profiling. \textit{Morpheus} achieves around 2.8\% average improvement over the baseline.} 
    \label{fig:perfmorph}
    \end{minipage}
\end{tabular}
\end{figure*}

\subsection{Case Study: Real-World Benchmarks}

In order to better demonstrate the capabilities of \textit{Dynamis} on larger real world programs, we evaluated the framework on 8 popular graph processing problems (\cite{beamer2017gapbenchmarksuite}) (BFS, CC, CC\_SV, TC, PR, PR\_SPMV, SSSP) with an undirected graph of degree 15 with 16777216 nodes and 268435177 undirected edges, and also with the popular compiler benchmark \textit{GCC}, which has 1304 KLOC. Fig. \ref{fig:hotspottime3} shows that \textit{Dynamis} outperforms O3 baseline by 4.45\% for GCC. It navigated the huge codebase by starting with the program inputs \& flags, based on which it inferred the dominant compiler passes. Then it identified the "known" heavy kernels within those passes (bitmaps and graph operations, points-to analysis, etc), which were then ranked by their expected algorithmic cost. This shows how \textit{Dynamis} is able to adapt its reasoning for large codebase by tracing its thinking from the inputs.

\begin{figure*}[ht]
\begin{tabular}{p{0.5\textwidth}p{0.5\textwidth}}
    \begin{minipage}{.49\textwidth}
    \centering\includegraphics[width=1.0\textwidth]{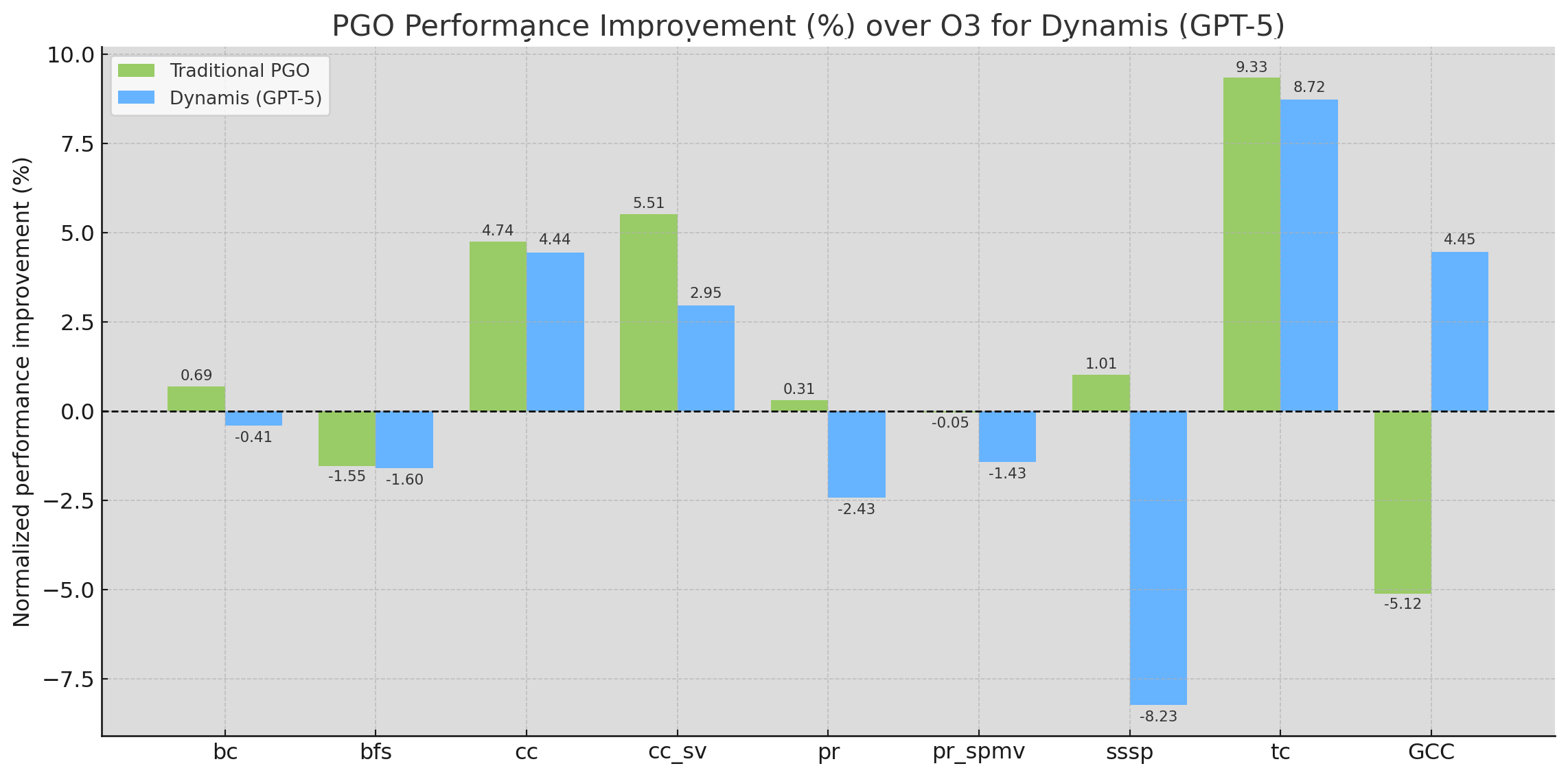}
    \caption{\small \textit{Dynamis}' hotspot predictions on real-world programs showed that it can achieves a performance improvement of 4.45\% and 9.09\% over O3 and traditional PGO in GCC. The average improvement is -0.92\% over O3, which is a slight slowdown.} 
    \label{fig:hotspottime3}
    \end{minipage}
    &
    \begin{minipage}{.49\textwidth}
    \centering\includegraphics[width=1.0\textwidth]{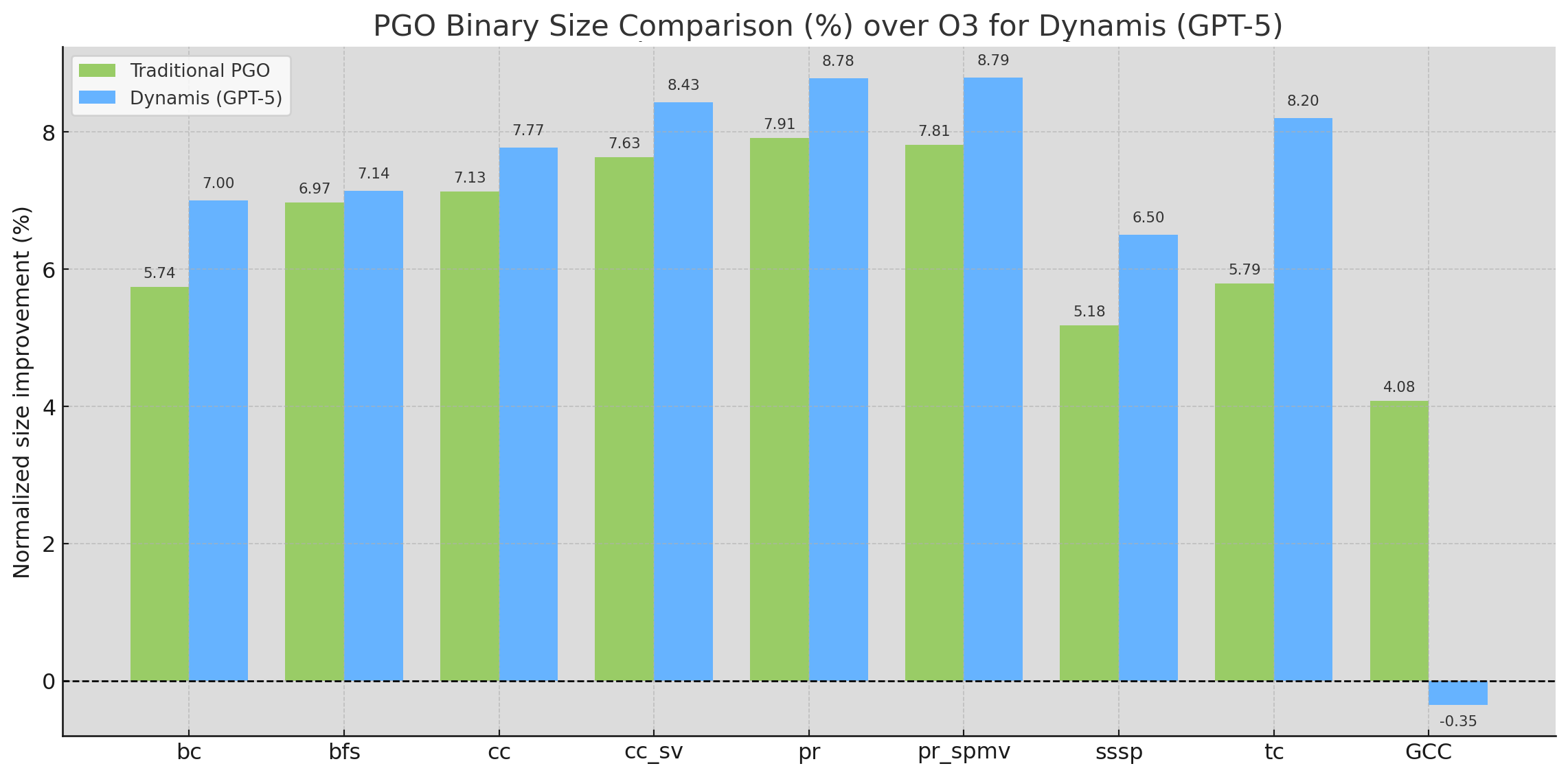}
    \caption{\small \textit{Dynamis}' hotspot predictions on real-world showed an average improvement of 6.95\% over O3, and is slightly better than traditional PGO on average. The graph benchmarks showed the most reduction in size.} 
    \label{fig:bin4}
    \end{minipage}
\end{tabular}
\end{figure*}

On the other hand, in the graph workloads, \textit{Dynamis} switched back its default strategy of first leveraging domain knowledge (such as union-find and path compression in CC) to identify compute-heavy kernels, and then mapping this theoretical knowledge to the source codes. It also leveraged the call graph to filter low-level STL calls, then determined a ranking based on the given input. In this case, \textit{Dynamis} was able to achieve an average reduction in the application codesize of 7.83\% over O3, but it did not translate to significant performance improvement ($0.37\%$ over O3). A possible reason for this is that these programs BFS/PR/SSSP/CC are memory-bound and spend time on random memory access; reducing code size doesn’t impact bandwidth/latency.

\noindent

\subsection{Evaluating Morpheus: Application Profile Generalization}
\label{res:morph}

$\star$ \textbf{Need for Compressing WPP based Function Profiles}: Table \ref{tab1} shows the total tokens and disk size for WPP based function profiles expressed in terms of application function calls for SPEC2k17 \cite{bucek2018spec} benchmarks. As shown, the total function call tokens can range up to $10^{10}$ tokens per profile while incurring disk sizes of a few GB to $10^2$ GB. Without the compression techniques employed by \textit{Morpheus}, such profiles cannot be analyzed. For instance, Fig. \ref{fig:motiv1a} summarizes the result of training a simple RNN model with one hidden layer on the uncompressed WPP data generated from the smallest program input, in SPEC2k17 Benchmarks. Out of 14 benchmarks, the training process on 7 benchmarks failed because of exceeding the tensor's default memory capacity in the \textit{Pytorch} framework, while 5 benchmarks had \textit{out-of-memory} errors, and were terminated by the OS scheduler. Furthermore, Fig. \ref{fig:motiv1b} depicts the low ratio of unique tokens in WPP based function profiles. This indicates a significantly high rate of duplicate (and redundant) tokens, which can lead to excessive bias and unnecessary computations during training.  

\begin{figure*}[!ht]
\begin{tabular}{p{0.51\textwidth}p{0.49\textwidth}}
    \begin{minipage}{.50\textwidth}
    \footnotesize
    \centering
    \resizebox{\columnwidth}{!}{
    \begin{tabular}{|l|lll|lll|}
\hline
\multirow{2}{*}{\textbf{Benchmark}} & \multicolumn{3}{c|}{\textbf{WPP based function profile Disk Size}} & \multicolumn{3}{c|}{\textbf{Function Tokens}} \\ \cline{2-7} 
 & \multicolumn{1}{l|}{\textbf{Small}} & \multicolumn{1}{l|}{\textbf{Medium}} & \textbf{Large} & \multicolumn{1}{l|}{\textbf{Small}} & \multicolumn{1}{l|}{\textbf{Medium}} & \textbf{Large} \\ \hline
500.perl\_r & \multicolumn{1}{l|}{305 MB} & \multicolumn{1}{l|}{887 MB} & 8.2 GB & \multicolumn{1}{l|}{66.5 M} & \multicolumn{1}{l|}{198.2 M} & 1.8 B \\ \hline
502.gcc\_r & \multicolumn{1}{l|}{1.3 MB} & \multicolumn{1}{l|}{11 GB} & {19 GB} & \multicolumn{1}{l|}{248 K} & \multicolumn{1}{l|}{2.2 B} & {4.3 B} \\ \hline
505.mcf\_r & \multicolumn{1}{l|}{2 GB} & \multicolumn{1}{l|}{11 GB} & 58 GB & \multicolumn{1}{l|}{687 M} & \multicolumn{1}{l|}{3.5 B} & 20.5 B \\ \hline
508.namd\_r & \multicolumn{1}{l|}{25 MB} & \multicolumn{1}{l|}{73 MB} & 1.6 GB & \multicolumn{1}{l|}{8.6 M} & \multicolumn{1}{l|}{60.1 M} & 558 M \\ \hline
510.parest\_r & \multicolumn{1}{l|}{110 MB} & \multicolumn{1}{l|}{692 MB} & 38 GB & \multicolumn{1}{l|}{6 M} & \multicolumn{1}{l|}{35 M} & 2.2 B \\ \hline
519.lbm\_r & \multicolumn{1}{l|}{4 KB} & \multicolumn{1}{l|}{4 KB} & 16 KB & \multicolumn{1}{l|}{75} & \multicolumn{1}{l|}{919} & 6056 \\ \hline
520.omnetpp\_r & \multicolumn{1}{l|}{853 MB} & \multicolumn{1}{l|}{14 GB} & 124 GB & \multicolumn{1}{l|}{4.4 M} & \multicolumn{1}{l|}{43 M} & {378 M} \\ \hline
523.xalanc\_r & \multicolumn{1}{l|}{2.2 MB} & \multicolumn{1}{l|}{854 MB} & 95 GB & \multicolumn{1}{l|}{421 K} & \multicolumn{1}{l|}{127 M} & {18.9 B} \\ \hline
525.x264\_r & \multicolumn{1}{l|}{25 MB} & \multicolumn{1}{l|}{25 MB} & 387 MB & \multicolumn{1}{l|}{3.8 M} & \multicolumn{1}{l|}{3.8 M} & 40 M \\ \hline
526.blender\_r & \multicolumn{1}{l|}{56 MB} & \multicolumn{1}{l|}{8.8 GB} & 27 GB & \multicolumn{1}{l|}{10.6 M} & \multicolumn{1}{l|}{1.8 B} & {5.3 B} \\ \hline
531.deep\_r & \multicolumn{1}{l|}{2.4 GB} & \multicolumn{1}{l|}{20 GB} & 110 GB & \multicolumn{1}{l|}{510 M} & \multicolumn{1}{l|}{5.1 B} & {39 B} \\ \hline
538.imagick\_r & \multicolumn{1}{l|}{248 KB} & \multicolumn{1}{l|}{183 MB} & 90 GB & \multicolumn{1}{l|}{55 K} & \multicolumn{1}{l|}{47 M} & {24 B} \\ \hline
541.leela\_r & \multicolumn{1}{l|}{2.3 GB} & \multicolumn{1}{l|}{38 GB} & 169 GB & \multicolumn{1}{l|}{454 M} & \multicolumn{1}{l|}{7.9 B} & {58 B} \\ \hline
544.nab\_r & \multicolumn{1}{l|}{4.7 MB} & \multicolumn{1}{l|}{25 MB} & 2.6 GB & \multicolumn{1}{l|}{76 K} & \multicolumn{1}{l|}{185 K} & 1.6 M \\ \hline
557.xz\_r & \multicolumn{1}{l|}{252 KB} & \multicolumn{1}{l|}{297 MB} & 2 GB & \multicolumn{1}{l|}{11.8 K} & \multicolumn{1}{l|}{76 M} & 537 M \\ \hline
\end{tabular}}
\caption{\small Total Tokens and their corresponding Disk Sizes, for WPP based function profiles consisting of function calls in SPEC2k17 (K = $10^3$, M = $10^6$, B = $10^9$)}
\label{tab1}
    \end{minipage}
    &
    \begin{minipage}{.48\textwidth}
    \centering\includegraphics[width=0.9\textwidth]{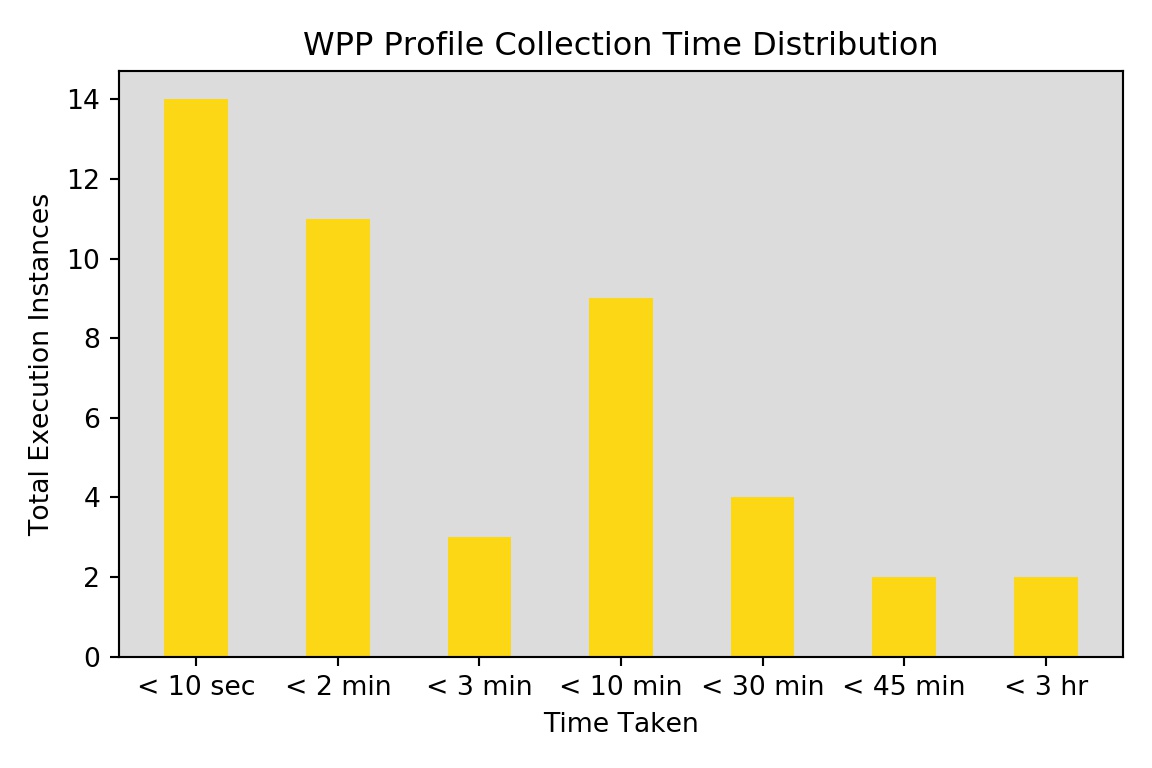}
\caption{\small Time Taken for Collecting WPP based function profiles for different Input Sizes (small, medium, large) in SPEC Suite. In most cases, WPP based function profile on a convenient input can be generated within 10 secs.} 
    \label{fig:time1}
\end{minipage}
\end{tabular}
\end{figure*}

\begin{figure*}[ht]
\begin{tabular}{p{0.5\textwidth}p{0.5\textwidth}}
    \begin{minipage}{.49\textwidth}
    \centering\includegraphics[width=1.0\textwidth]{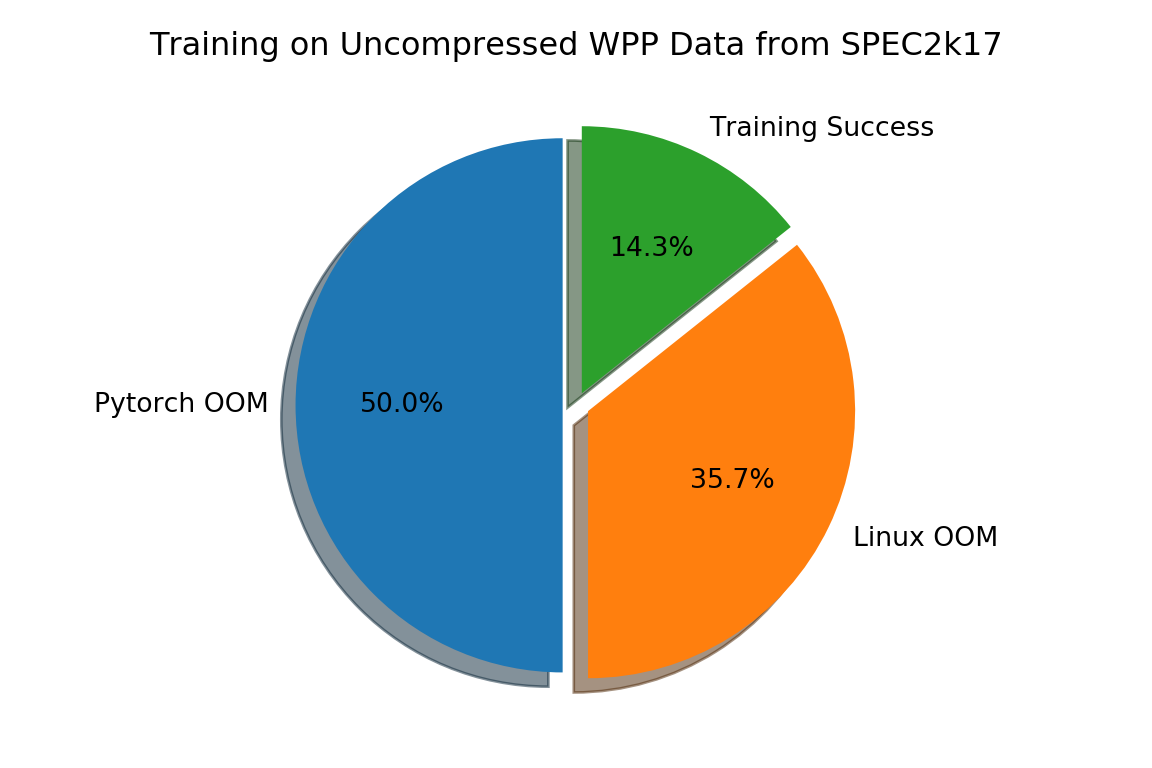}
\caption{Results of training RNN models directly on the uncompressed WPP data in SPEC2k17 suite. The training was successful only on 2 benchmarks, while the rest (13 benchmarks) led to out of memory and memory allocation errors.} 
\label{fig:motiv1a}
    \end{minipage}
    &
    \begin{minipage}{.49\textwidth}
    \centering\includegraphics[width=1.0\textwidth]{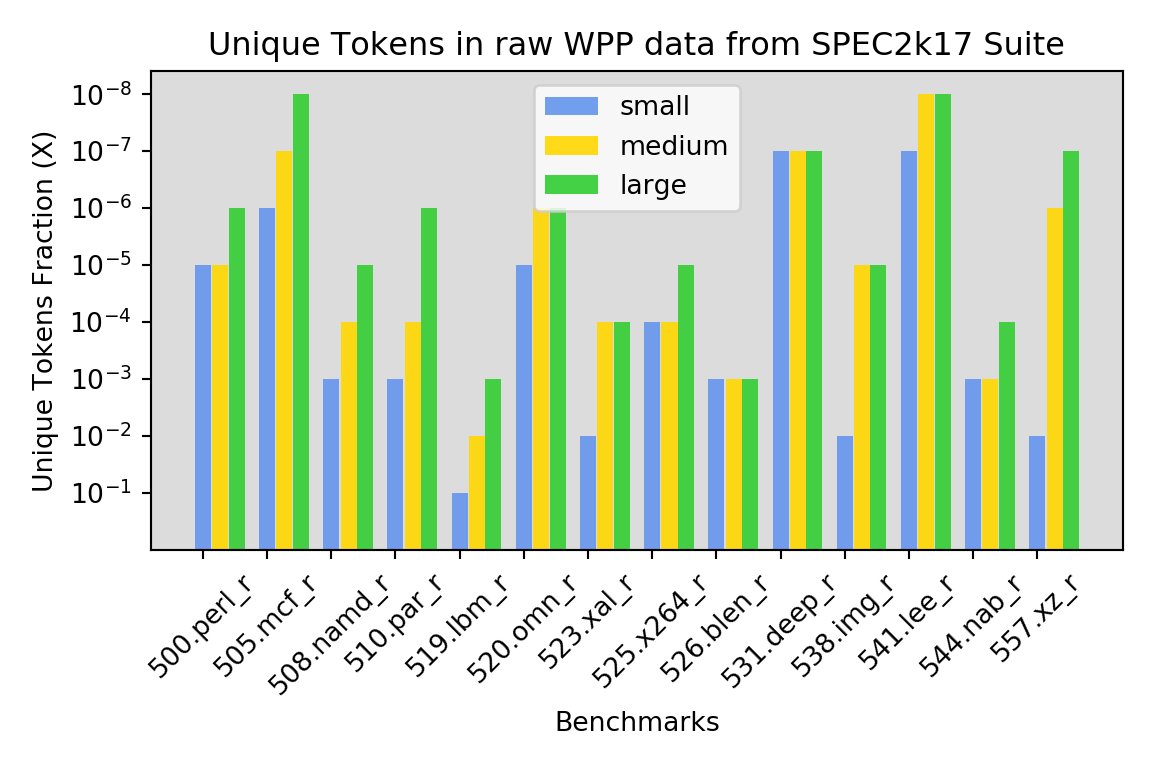}
    \vspace{-15pt}
\caption{Fraction of unique tokens in WPP data on all inputs in SPEC2k17. The plot indicates a very low fraction of unique tokens in the WPP based function profile, which is also inversely proportional to the input size. } 
    \label{fig:motiv1b}
\end{minipage}
\end{tabular}
\end{figure*}

$\star$ \textbf{Smart-Loop Instrumentation}:  Table \ref{res:tab1} depicts the size of WPP based function profiles collected after the smart-loop instrumentation scheme. As observed, the original WPP based function profiles that were originally in GBs, were reduced to mere KBs in most cases. Fig. \ref{fig:res2} shows the ratio of reduction achieved in WPP based function profiles for all input sizes. For certain large profiles (\textit{531.deepsjeng\_r, 538.imagick\_r, 541.leela\_r}), the compression ratio is upto $10^7\times$, which points to the high degree of redundant tokens. This makes the WPP based function profiles scalable, and makes the way for further analysis for training generative models.

\begin{figure*}[!ht]
\begin{tabular}{p{0.48\textwidth}p{0.52\textwidth}}
    \begin{minipage}{.47\textwidth}
    \footnotesize
    \centering
    \begin{tabular}{|l|lll|}
\hline
\multicolumn{1}{|c|}{\multirow{2}{*}{\textbf{Benchmark}}} & \multicolumn{3}{c|}{\textbf{\begin{tabular}[c]{@{}c@{}}Smart Loop Instrumented \\ WPP based function profile Disk Size\end{tabular}}} \\ \cline{2-4} 
\multicolumn{1}{|c|}{} & \multicolumn{1}{l|}{Small} & \multicolumn{1}{l|}{Medium} & Large \\ \hline
500.perl\_r & \multicolumn{1}{l|}{16 KB} & \multicolumn{1}{l|}{248 KB} & 600 KB \\ \hline
502.gcc\_r & \multicolumn{1}{l|}{220 KB} & \multicolumn{1}{l|}{2.3 MB} & 4.5 MB \\ \hline
505.mcf\_r & \multicolumn{1}{l|}{4 KB} & \multicolumn{1}{l|}{4 KB} & 4 KB \\ \hline
508.namd\_r & \multicolumn{1}{l|}{4 KB} & \multicolumn{1}{l|}{4 KB} & 4 KB \\ \hline
519.lbm\_r & \multicolumn{1}{l|}{4 KB} & \multicolumn{1}{l|}{4 KB} & 4 KB \\ \hline
520.omnet\_r & \multicolumn{1}{l|}{21 MB} & \multicolumn{1}{l|}{206 MB} & 1.8 GB \\ \hline
523.xalanc\_r & \multicolumn{1}{l|}{116 KB} & \multicolumn{1}{l|}{25 MB} & 4.9 GB \\ \hline
525.x264\_r & \multicolumn{1}{l|}{8 KB} & \multicolumn{1}{l|}{8 KB} & 20 KB \\ \hline
526.blender\_r & \multicolumn{1}{l|}{260 KB} & \multicolumn{1}{l|}{312 KB} & 320 KB \\ \hline
531.deep\_r & \multicolumn{1}{l|}{8 KB} & \multicolumn{1}{l|}{16 KB} & 16 KB \\ \hline
538.imag\_r & \multicolumn{1}{l|}{8 KB} & \multicolumn{1}{l|}{12 KB} & 16 KB \\ \hline
541.leela\_r & \multicolumn{1}{l|}{12 KB} & \multicolumn{1}{l|}{12 KB} & 16 KB \\ \hline
544.nab\_r & \multicolumn{1}{l|}{4 KB} & \multicolumn{1}{l|}{4 KB} & 4 KB \\ \hline
557.xz\_r & \multicolumn{1}{l|}{4 KB} & \multicolumn{1}{l|}{4 KB} & 4 KB \\ \hline
\end{tabular}
\caption{WPP based function profile Disk Size after \textit{Morpheus}'s Smart-Loop Instrumentation. In majority of the benchmarks, the profile sizes are reduced to KBs from GBs.}
\label{res:tab1}
    \end{minipage}
    &
    \begin{minipage}{.51\textwidth}
    \centering\includegraphics[width=1.0\textwidth]{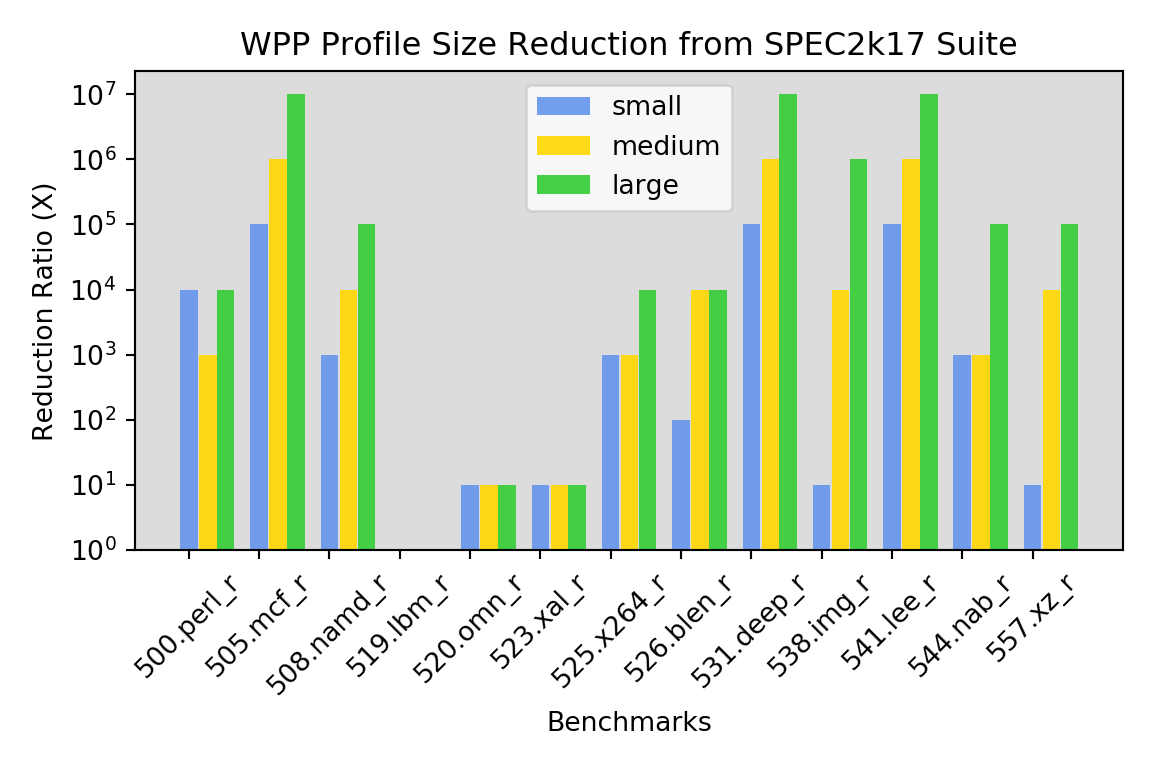}
\caption{Ratio of Reduction in Profile Size achieved by Smart-loop Instrumentation over Normal WPP based function profiles. The trend shows that \textit{Morpheus} achieves higher degree of compression for largest input sizes, which is in line with the higher redundancies in larger profiles.} 
    \label{fig:res2}
\end{minipage}
\end{tabular}
\end{figure*}

$\star$ \textbf{Input Consistent-Compression \& Augmentation}: Fig. \ref{fig:res3a} depicts the results of the input-consistent compression scheme, where each compressible region is statically assigned an encoding. In majority of the benchmarks, the total regions are below 20. The largest number of compressible regions identified in a benchmark was 498 (\textit{526.blender\_r}). The average function call length of these regions was nearly 3 function calls per compressible regions across all benchmarks (Fig. \ref{fig:res3b}). 
\begin{figure*}[ht]
\begin{tabular}{p{0.5\textwidth}p{0.5\textwidth}}
    \begin{minipage}{.49\textwidth}
    \centering\includegraphics[width=1.0\textwidth]{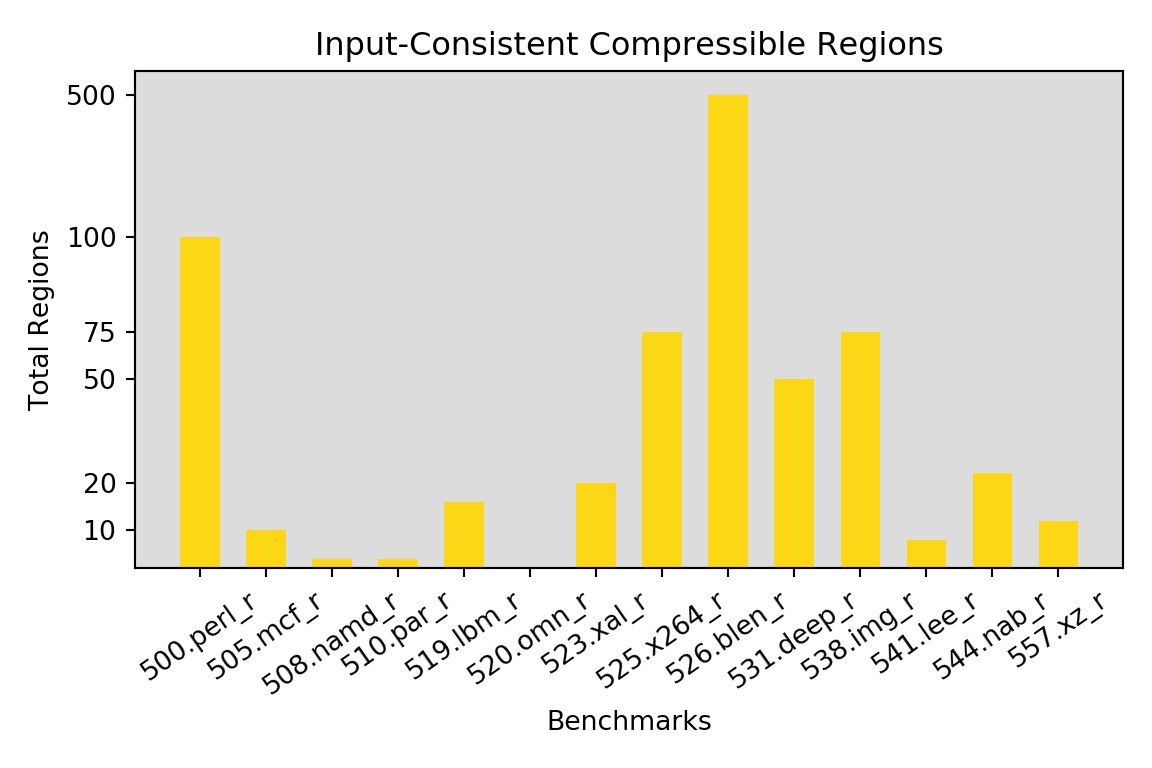}
\caption{Input-consistent compressible regions identified statically from SPEC benchmarks. } 
\label{fig:res3a}
    \end{minipage}
    &
    \begin{minipage}{.49\textwidth}
    \centering\includegraphics[width=1.0\textwidth]{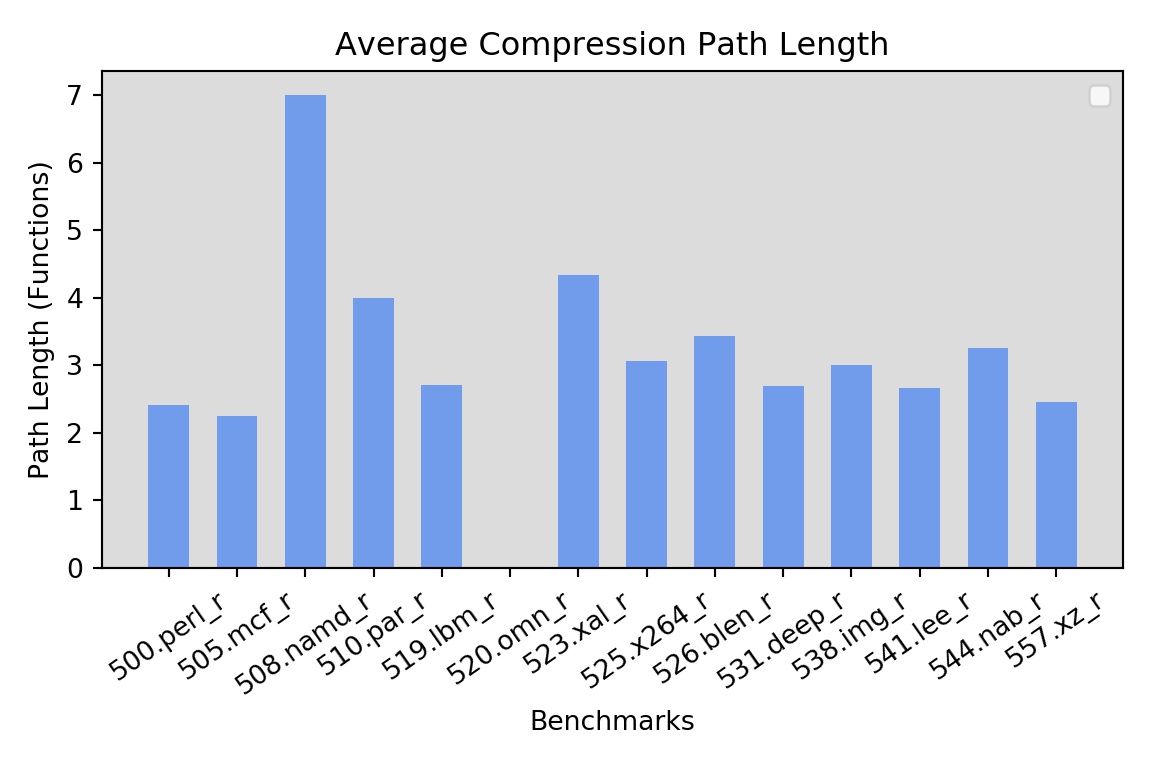}
\caption{Average length of input-consistent compressible paths for each benchmark. } 
    \label{fig:res3b}
\end{minipage}
\end{tabular}
\end{figure*}

\begin{figure*}[htbp]
\begin{tabular}{p{0.5\textwidth}p{0.5\textwidth}}
    \begin{minipage}{.49\textwidth}
    \centering\includegraphics[width=1.0\textwidth]{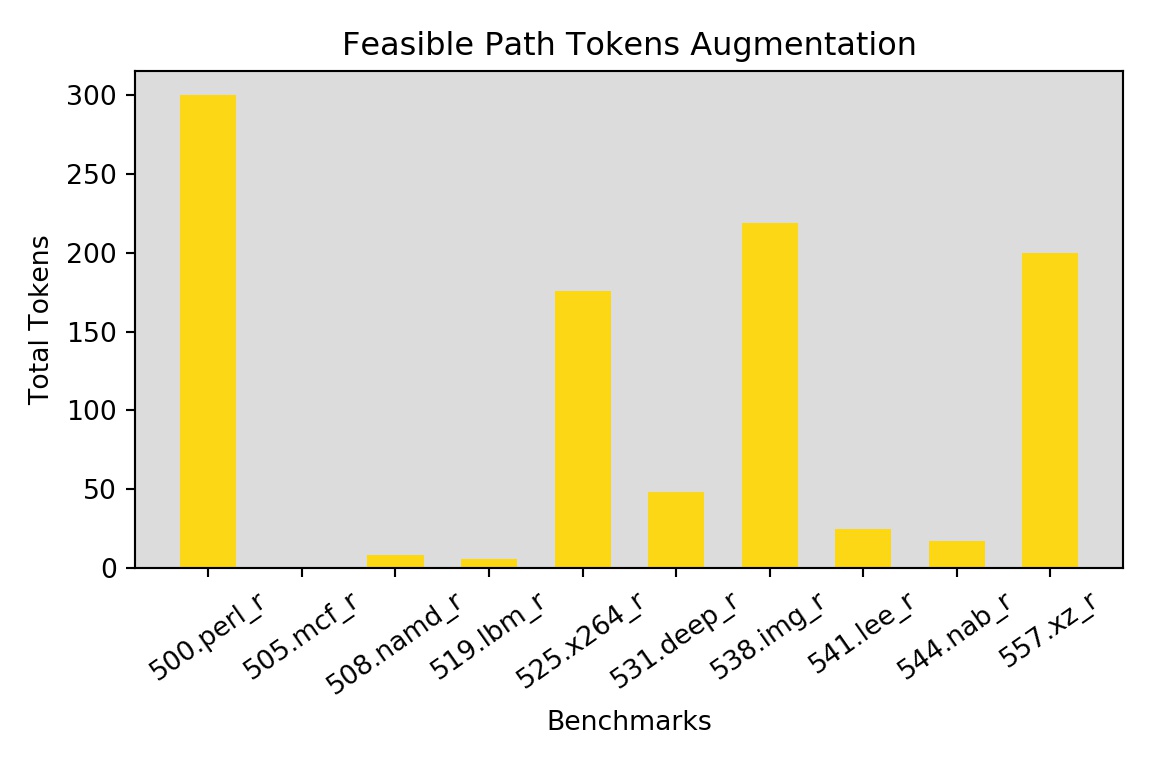}
\caption{Total tokens augmented in WPP based function profile for capturing program context.} 
\label{fig:res3c}
    \end{minipage}
    &
    \begin{minipage}{.49\textwidth}
    \centering\includegraphics[width=1.0\textwidth]{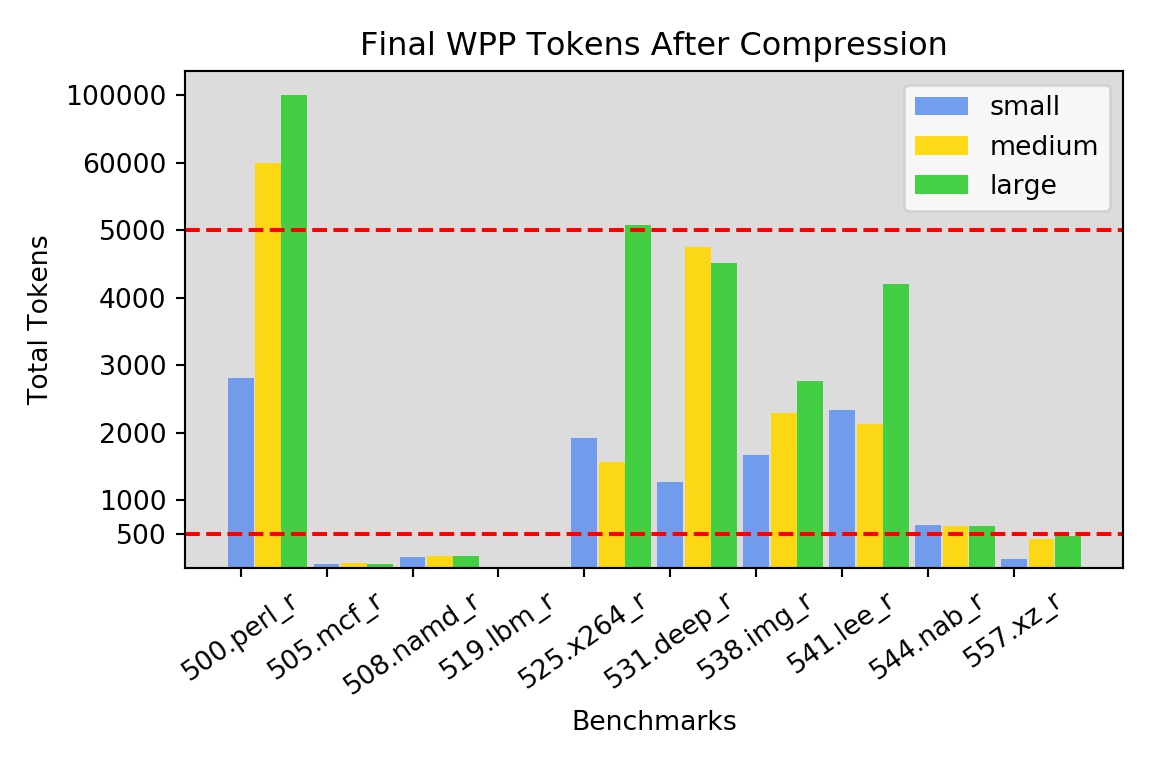}
\caption{Final Token Count in WPP based function profile after input-consistent compression.} 
    \label{fig:res3d}
\end{minipage}
\end{tabular}
\end{figure*}

$\star$ \textbf{Determining Computationally Frequent Functions}: Fig. \ref{fig:res4a} shows the computationally \textit{hot} functions predicted by \textit{Morpheus} on unseen program inputs. As observed, the in all cases, \textit{Morpheus} is able to predict close to 99\% of the total function calls in the program, by only predicting a fraction of total functions present in the application. This is explained by the fact that the RNN-based model can effective capture the sequence of functions captured in loop-intensive regions, and is adapts to the unseen inputs by activating the tokens corresponding to the unseen augmented paths. 

Fig. \ref{fig:res4b} shows the training time required by \textit{Morpheus} to learn the generalized WPP based function profiles. As observed, \textit{Morpheus} can complete the training process in minutes without deploying any specialized hardware. The fast training is a direct consequence of \textit{Morpheus} pre-processing the WPP based function profile into input-consistent compressible regions, which reduces the total tokens. Furthermore, the redundancy removal by the smart loop instrumentation scheme reduces bias, thereby preventing the model results from being skewed towards certain functions.   

\begin{figure*}[ht]
\begin{tabular}{p{0.5\textwidth}p{0.5\textwidth}}
    \begin{minipage}{.49\textwidth}
    \centering\includegraphics[width=1.0\textwidth]{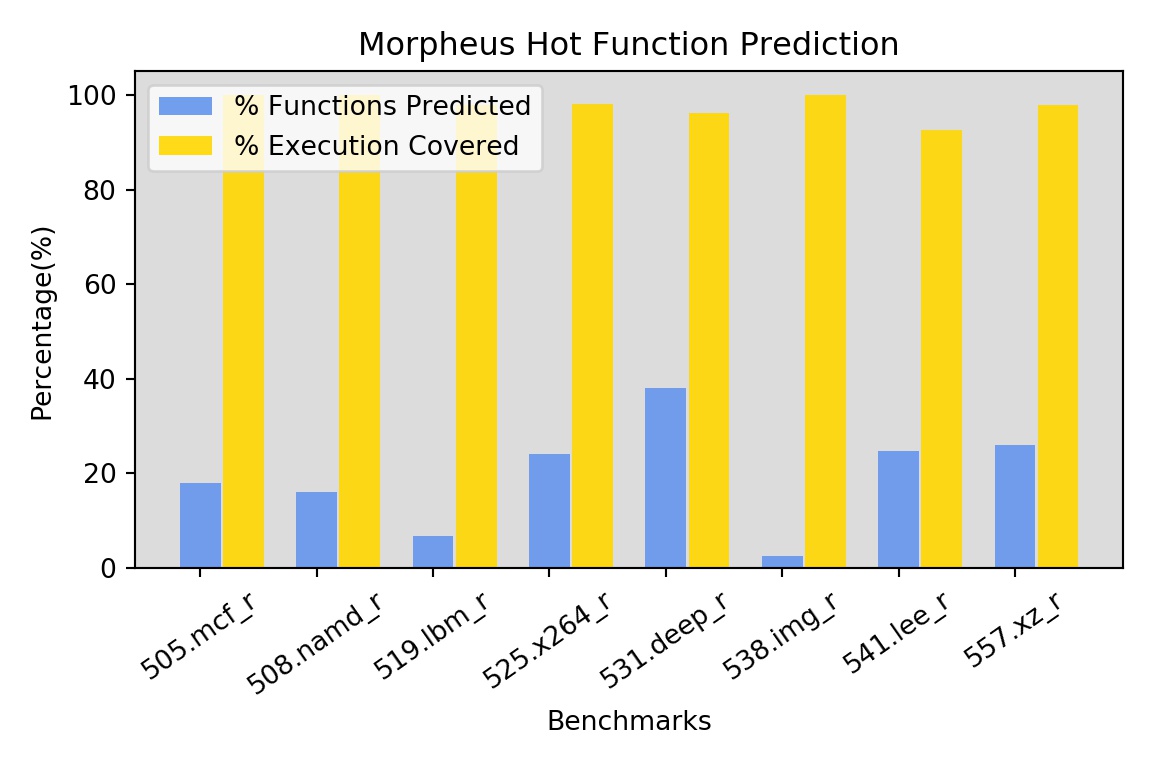}
\caption{\textit{Morpheus} predicting of computationally \textit{most frequently executed} functions, as fraction of total functions and execution covered. The trend shows that \textit{Morpheus} is able to predict almost 99\% of program's execution with only a fraction of functions.} 
\label{fig:res4a}
    \end{minipage}
    &
    \begin{minipage}{.49\textwidth}
    \centering\includegraphics[width=1.0\textwidth]{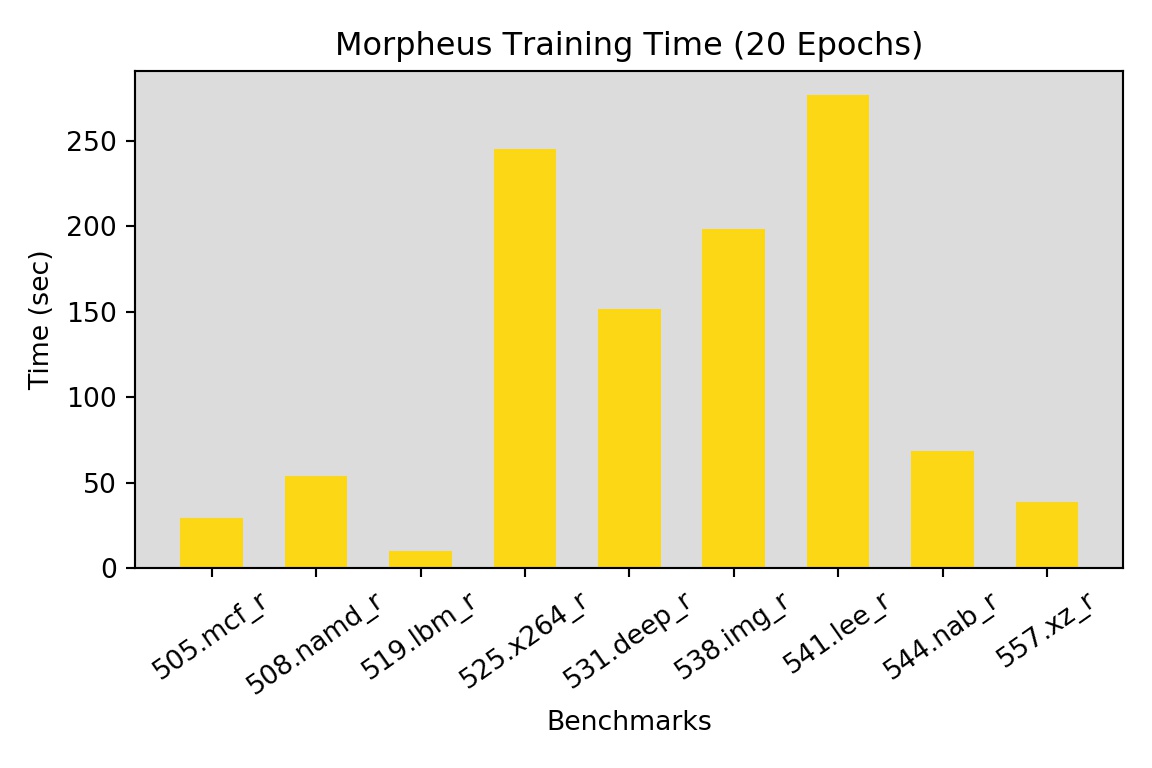}
\caption{\textit{Morpheus}'s training time for generative models. The RNNs were trained for 20 epochs in each case, with varying batch sizes ranging from 4 to 64. \textit{Morpheus} is able to complete training within minutes, without any specialized accelerators.} 
    \label{fig:res4b}
\end{minipage}
\end{tabular}
\vspace{-1em}
\end{figure*}

$\star$ \textbf{Scaling of Morpheus with different application sizes}: Fig. \ref{fig:scale} shows Morpheus’s compressed profile generation time across applications of increasing size. As observed, the compression time grows log-linearly with code size. This is especially highlighted for large codebases such as 526.blender\_r, where the major bottleneck is the online compaction scheme of detecting longest repeating call chains. Across all benchmarks, larger input datasets consistently take longer, confirming that Morpheus’s cost scales with both program and input size.

\subsection{Dynamis Vs Morpheus: performance tradeoffs and considerations}
\label{res:vs}

In this subsection, we explore the utility and performance tradeoffs of both \textit{Dynamis} and \textit{Morpheus}; compared against one another. The primary necessity of a single profile-driven approach such as \textit{Morpheus} arises from the fact that with the default set of prompts and model weights, the LLM is unable to provide any sequential ordering of the function calls during runtime. For instance, querying GPT 4.0 to reason about the dynamic call for the smallest input in \textit{519.lbm\_r} generates the following trace (missing tokens represented in {\color{red}red}):
\begin{align*}
 \text{LLM-generated Trace}: 0, 1, 2, 4, 5, 6, 8, \{10,11,12\}\times 20 ~~~~~\\
 \text{Actual Trace}: 0, 1, 2, 4, {\color{red}4}, 5, {\color{red}5}, 6, {\color{red}6, 7, 7, 9,} \{10,11,12\} \times 20, {\color{red} 9, 3, 3}
\end{align*}

\begin{wrapfigure}{r}{0.45\textwidth}
  \begin{center}
    \includegraphics[width=0.45\columnwidth]{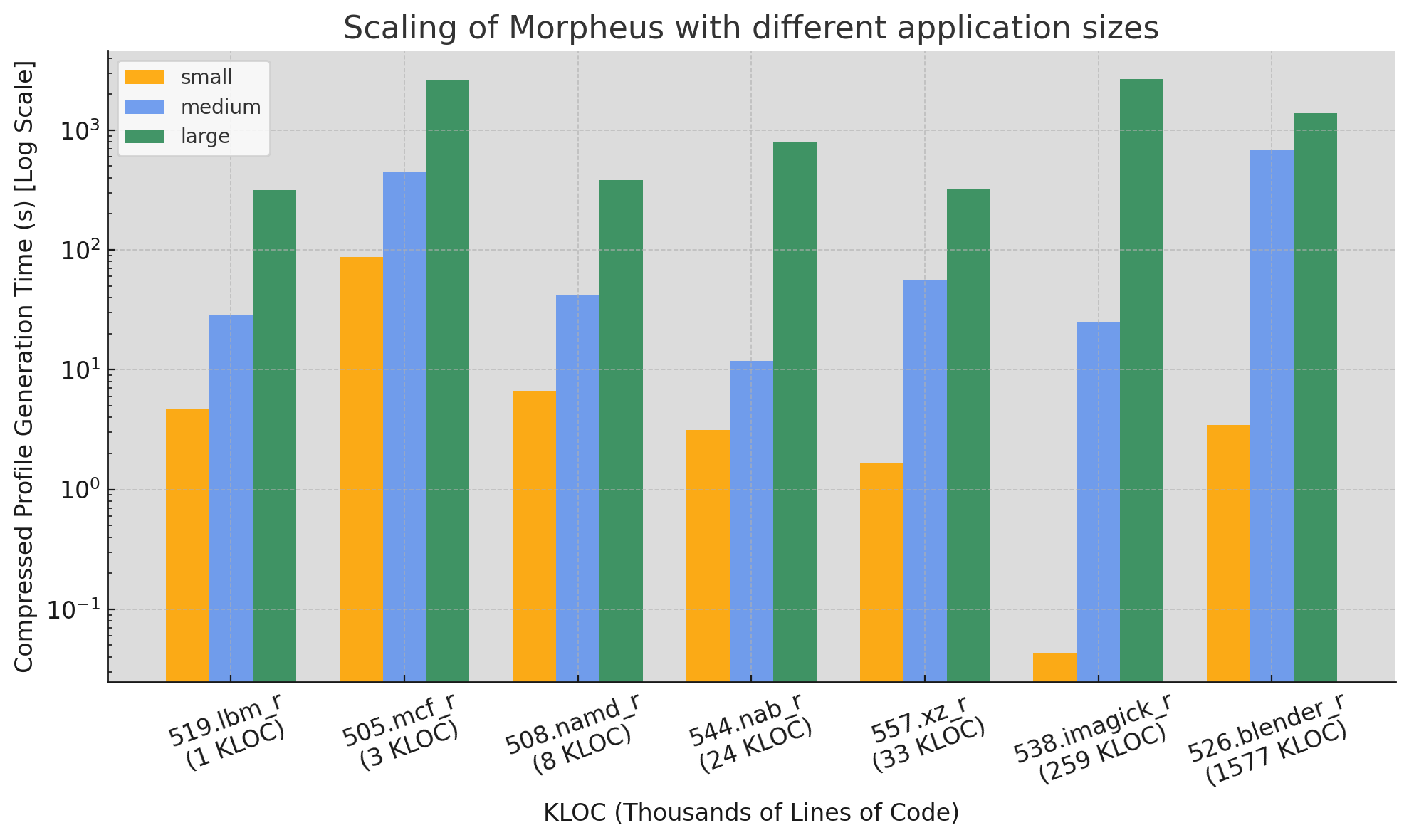}
\caption{\small \textit{Morpheus}' Log-linear scaling of compression time with larger inputs and bigger codebases}
\label{fig:scale}
  \end{center}
  \vspace{-15pt}
\end{wrapfigure}

For reference, the entire control flow is represented in Fig. \ref{fig:0}. Thus, in order to obtain a clearer picture of an application's dynamic sequence of function calls, we employ profile generalization. As described earlier in \S\ref{res:morph}, \textit{Morpheus} is able to achieve upto 99\% of the program's execution. However, achieving \textit{application profile generation} and successfully leveraging it for profile-driven tasks still requires the users to generate a single application profile by executing the program on a representative input. Furthermore, there is an implicit assumption about the user's necessary expertise to determine the convenient input set on which the application profile needs to be collected. For specific long-running applications, the task of program execution and data collection could potentially result in a bottleneck. For instance, Fig. \ref{fig:time1} shows the distribution of time required to collect WPP traces. As observed in the figure, profiling certain execution instances can even take up to 3 hours (\textit{541.leela\_r}), which can complicate the process of profile generalization. Thus, we still need to employ profile-less approach such as \textit{Dynamis}. Another noteworthy point in the previous example is that inspite of the obvious inaccuracies in the program trace, LLM was not only able to predict the hot functions (10, 11, 12), but also the frequencies accurately. We anticipate that with better prompting and sophisticated fine-tuning, LLMs can be used to bridge the gap in function call chain prediction as well. 

\section{Related works}
\label{sec:related}


Recent research in code optimization has transitioned from rule-based compiler heuristics to large language model (LLM)–driven code transformation and refactoring \cite{gao2024searchbasedllmscodeoptimization, rosas2024aioptimizecodecomparative, ma2024llamocoinstructiontuninglarge, pomian2024furtherllmsidestatic, cordeiro2024empiricalstudycoderefactoring}. LLM frameworks combine sequence generation with iterative refinement using evolutionary search, adaptive retrieval, and specialized prompting \cite{liu2023promptingframeworkslargelanguage, wu2024evolutionarycomputationeralarge}. Comparative studies show that LLMs can rival or surpass traditional compilers such as CETUS and PLUTO in optimization quality but often struggle with correctness and scalability on large real-world programs. Instruction-tuned models \cite{ma2024llamocoinstructiontuninglarge} further address these issues through optimization-aware fine-tuning, improving reliability and consistency across diverse workloads.

In parallel, \textit{machine learning–based static profiling} has emerged as an alternative to dynamic PGO, eliminating the need for costly runtime instrumentation and representative input collection. VESPA \cite{vespa2021moreira} pioneered static frequency prediction for binary optimization in Meta’s BOLT \cite{panchenko2018boltpracticalbinaryoptimizer}, achieving up to 6\% speedup over Clang –O3 by estimating basic-block frequencies directly from static features. \textit{GraaSP} \cite{CUGUROVIC2024112058} extended this concept to compiler integration within \textit{GraalVM} \cite{graalvm2020}, embedding ML inference during compilation to remove the two-phase PGO cycle and delivering a 7.5\% runtime gain. GraalNN \cite{milikic2025graalnn} advanced the field further by employing Graph Neural Networks to model call-graph context, refining predictions across parsing and inlining phases for greater accuracy. Building on these insights, our work explores input-specific static profiling without execution, using LLMs to infer hot regions based on algorithmic structure, input size, and termination behavior. Unlike prior systems targeting speed in Java or binary infrastructures, our framework focuses on code optimization in C/C++ workloads, evaluated against PGO baselines on SPEC CPU 2017. Though not directly comparable, our results demonstrate that LLMs can approximate—and often surpass—traditional profiling accuracy while offering fine-grained, input-aware guidance.

On the other hand, the concept of capturing comprehensive dynamic program behavior through Whole Program Paths (WPP) has been extensively explored in the compiler literature \cite{larus1999whole, zhang2001timestamped, burtscher2003compressing, zhang2002path, law2003whole, milenkovic2003stream, zhang2004whole, lin2007design}. WPPs provide a single, compact description of a program's entire dynamic control flow, including interprocedural calls and loop iteration, which is essential for finding "hot subpaths" for optimization. While techniques for compressing WPP profiles are well-developed, they are inherently limited to analyzing a single profile instance generated from one specific input data set. Our approach, in contrast, attempts to generalize across multiple profiles by using a novel, lossless compression scheme that is consistent across different inputs, enabling the prediction of program behavior for unseen execution scenarios.

\section{Conclusion}
\label{sec:conc}

This paper introduced \textit{Phaedrus}, a novel framework that advances profile-driven software optimization through predictive modeling of dynamic application behavior.  Phaedrus offers a unique hybrid approach that combines \textit{Application Profile Generalization} and \textit{Application Behavior Synthesis}. Our framework uses generative deep learning models and large language models (LLMs) integrated with static compiler analysis to accurately predict function calls and control flow behavior, enabling adaptive optimizations without the need for exhaustive profiling across diverse inputs. We show that LLMs provide a good prediction capability of hot functions in an input specific manner that yield performance that is comparable to and slightly better than traditional PGO and code size reduction comparable to traditional PGO. RNN based framework improves performance over traditional PGO by 2.8\%  although an initial investment of doing profiling and model generation is needed. Due to its near perfect accuracy, a key use of Morpheus could be to perform offline function call tracing for doing audits whereas LLMs yield a zero profile approach resulting in a simplified software process.
\section{Data Availability Statement}
\label{sec:datavail}

This work involves extensive experimentation with large language models (Claude 4 Sonnet, GPT-4, GPT-5) and other machine learning models, combined with static analysis and profiling techniques. We rely on established benchmark suites such as SPEC CPU 2017 and GAPS, and use a variety of profiling tools to establish ground truths for evaluating model predictions. We plan to submit our artifacts for the Artifact Evaluation process. While the artifacts are still being finalized, we intend to make available: (i) the source code for the benchmarks used; (ii) static analysis outputs derived from these codebases; (iii) model predictions; and (iv) evaluation scripts. These materials will enable reviewers and researchers to reproduce and build upon our results. Preliminary versions of the artifacts will be anonymized and linked as part of the artifact submission.



\bibliographystyle{ACM-Reference-Format}
\bibliography{sample-base}

@String{Computing = "Computing" }

@String{Computer = "{IEEE} Computer" }

@String{Springer = "Springer-Verlag" }

@article{ugare2024improving,
  title={Improving llm code generation with grammar augmentation},
  author={Ugare, Shubham and Suresh, Tarun and Kang, Hangoo and Misailovic, Sasa and Singh, Gagandeep},
  journal={arXiv preprint arXiv:2403.01632},
  year={2024},
  url={https://arxiv.org/abs/2403.01632}
}

@inproceedings{gu2023llm,
  title={LLM-based code generation method for golang compiler testing},
  author={Gu, Qiuhan},
  booktitle={Proceedings of the 31st ACM Joint European Software Engineering Conference and Symposium on the Foundations of Software Engineering},
  pages={2201--2203},
  year={2023},
  publisher={ACM},
  address={San Francisco, CA, USA},
  url={https://doi.org/10.1145/3597503.3605166}
}

@article{lin2024llm,
  title={When LLM-based code generation meets the software development process},
  author={Lin, Feng and Kim, Dong Jae and others},
  journal={arXiv preprint arXiv:2403.15852},
  year={2024},
  url={https://arxiv.org/abs/2403.15852}
}

@article{liu2024exploring,
  title={Exploring and evaluating hallucinations in LLM-powered code generation},
  author={Liu, Fang and Liu, Yang and Shi, Lin and Huang, Houkun and Wang, Ruifeng and Yang, Zhen and Zhang, Li},
  journal={arXiv preprint arXiv:2404.00971},
  year={2024},
  url={https://arxiv.org/abs/2404.00971}
}

@inproceedings{verma2020interactive,
  title={Interactive debugging of concurrent programs under relaxed memory models},
  author={Verma, Aakanksha and Kalita, Pankaj Kumar and Pandey, Awanish and Roy, Subhajit},
  booktitle={Proceedings of the 18th ACM/IEEE International Symposium on Code Generation and Optimization},
  pages={68--80},
  year={2020},
  publisher={ACM},
  address={San Diego, CA, USA}
}

@inproceedings{arnold2007stack,
  title={Stack trace analysis for large scale debugging},
  author={Arnold, Dorian C and Ahn, Dong H and De Supinski, Bronis R and Lee, Gregory L and Miller, Barton P and Schulz, Martin},
  booktitle={2007 IEEE International Parallel and Distributed Processing Symposium},
  pages={1--10},
  year={2007},
  organization={IEEE},
  publisher={IEEE},
  address={Long Beach, CA, USA}
}

@inproceedings{ning2017ninja,
  title={Ninja: Towards Transparent Tracing and Debugging on ARM},
  author={Ning, Zhenyu and Zhang, Fengwei},
  booktitle={26th USENIX Security Symposium (USENIX Security 17)},
  pages={33--49},
  year={2017},
  publisher={USENIX Association},
  address={Vancouver, BC, Canada}
}

@article{ahmed2017leveraging,
  title={Leveraging software testing to explore input dependence for approximate computing},
  author={Ahmed, Abdulrahman Mahmoud Radha Venkatagiri Khalique and Adve, Sarita V and Misailovic, Darko Marinov Sasa},
  year={2017},
  publisher={IEEE}
}

@inproceedings{agelastos2015toward,
  title={Toward rapid understanding of production HPC applications and systems},
  author={Agelastos, Anthony and Allan, Benjamin and Brandt, Jim and Gentile, Ann and Lefantzi, Sophia and Monk, Steve and Ogden, Jeff and Rajan, Mahesh and Stevenson, Joel},
  booktitle={2015 IEEE International Conference on Cluster Computing},
  pages={464--473},
  year={2015},
  organization={IEEE},
  publisher = {IEEE},
address={Chicago, IL, USA},
  doi={10.1109/CLUSTER.2015.71}
}

@article{song2009energy,
  title={Energy profiling and analysis of the hpc challenge benchmarks},
  author={Song, Shuaiwen and Ge, Rong and Feng, Xizhou and Cameron, Kirk W},
  journal={The International Journal of High Performance Computing Applications},
  volume={23},
  number={3},
  pages={265--276},
  year={2009},
  publisher={Sage Publications Sage UK: London, England}
}

@inproceedings{viswanathan2011energy,
  title={Energy-aware application-centric vm allocation for hpc workloads},
  author={Viswanathan, Hariharasudhan and Lee, Eun Kyung and Rodero, Ivan and Pompili, Dario and Parashar, Manish and Gamell, Marc},
  booktitle={2011 IEEE International Symposium on Parallel and Distributed Processing Workshops and Phd Forum},
  pages={890--897},
  year={2011},
  organization={IEEE},
  publisher={IEEE}
}

@inproceedings{rashti2015wattprof,
  title={WattProf: A flexible platform for fine-grained HPC power profiling},
  author={Rashti, Mohammad and Sabin, Gerald and Vansickle, David and Norris, Boyana},
  booktitle={2015 IEEE International Conference on Cluster Computing},
  pages={698--705},
  year={2015},
  organization={IEEE},
  doi={10.1109/CLUSTER.2015.121},
  publisher={IEEE}
}

@article{agelastos2016continuous,
  title={Continuous whole-system monitoring toward rapid understanding of production HPC applications and systems},
  author={Agelastos, Anthony and Allan, Benjamin and Brandt, Jim and Gentile, Ann and Lefantzi, Sophia and Monk, Steve and Ogden, Jeff and Rajan, Mahesh and Stevenson, Joel},
  journal={Parallel Computing},
  volume={58},
  pages={90--106},
  year={2016},
  publisher={Elsevier},
  url={https://doi.org/10.1016/j.parco.2016.05.009}
}

@inproceedings{kestor2013enabling,
  author = {Kestor, Gokcen and Gioiosa, Roberto and Kerbyson, Darren J. and Hoisie, Adolfy},
title = {Enabling Accurate Power Profiling of HPC Applications on Exascale Systems},
year = {2013},
isbn = {9781450321464},
publisher = {Association for Computing Machinery},
address = {New York, NY, USA},
url = {https://doi.org/10.1145/2491661.2481429},
doi = {10.1145/2491661.2481429},
booktitle = {Proceedings of the 3rd International Workshop on Runtime and Operating Systems for Supercomputers},
articleno = {4},
numpages = {8},
location = {Eugene, Oregon},
series = {ROSS '13}
}

@article{ning2018hardware,
  title={Hardware-assisted transparent tracing and debugging on ARM},
  author={Ning, Zhenyu and Zhang, Fengwei},
  journal={IEEE Transactions on Information Forensics and Security},
  volume={14},
  number={6},
  pages={1595--1609},
  year={2018},
  publisher={IEEE}
}

@article{nagarajan2012system,
  title={A system for debugging via online tracing and dynamic slicing},
  author={Nagarajan, Vijay and Jeffrey, Dennis and Gupta, Rajiv and Gupta, Neelam},
  journal={Software: Practice and Experience},
  volume={42},
  number={8},
  pages={995--1014},
  year={2012},
  publisher={Wiley Online Library}
}

@article{uhlig1997trace,
  title={Trace-driven memory simulation: A survey},
  author={Uhlig, Richard A and Mudge, Trevor N},
  journal={ACM Computing Surveys (CSUR)},
  volume={29},
  number={2},
  pages={128--170},
  year={1997},
  publisher={ACM New York, NY, USA}
}

@incollection{uhlig2001trace,
  title={Trace-driven memory simulation: A survey},
  author={Uhlig, Richard A and Mudge, Trevor N},
  booktitle={Performance Evaluation: Origins and Directions},
  pages={97--139},
  year={2001},
  publisher={Springer}
}

@inproceedings{rico2011trace,
  title={Trace-driven simulation of multithreaded applications},
  author={Rico, Alejandro and Duran, Alejandro and Cabarcas, Felipe and Etsion, Yoav and Ramirez, Alex and Valero, Mateo},
  booktitle={(IEEE ISPASS) IEEE International Symposium on Performance Analysis of Systems and Software},
  pages={87--96},
  year={2011},
  organization={IEEE}
}

@inproceedings{hassan2007synthetic,
  title={Synthetic trace-driven simulation of cache memory},
  author={Hassan, Rahman and Harris, Antony and Topham, Nigel and Efthymiou, Aris},
  booktitle={21st International Conference on Advanced Information Networking and Applications Workshops (AINAW'07)},
  volume={1},
  pages={764--771},
  year={2007},
  organization={IEEE},
  publisher={IEEE},
  address={Niagara Falls, Canada}
}

@article{cho2009adaptive,
  title={Adaptive scratch pad memory management for dynamic behavior of multimedia applications},
  author={Cho, Doosan and Pasricha, Sudeep and Issenin, Ilya and Dutt, Nikil D and Ahn, Minwook and Paek, Yunheung},
  journal={IEEE Transactions on Computer-Aided Design of Integrated Circuits and Systems},
  volume={28},
  number={4},
  pages={554--567},
  year={2009},
  publisher={IEEE}
}

@inproceedings{ansari2009profiling,
  title={Profiling transactional memory applications},
  author={Ansari, Mohammad and Jarvis, Kim and Kotselidis, Christos and Luj{\'a}n, Mikel and Kirkham, Chris and Watson, Ian},
  booktitle={2009 17th Euromicro International Conference on Parallel, Distributed and Network-based Processing},
  pages={11--20},
  year={2009},
  organization={IEEE},
  publisher={IEEE},
  address={Weimar, Germany}
}

@inproceedings{bruno2017polm2,
  title={POLM2: Automatic profiling for object lifetime-aware memory management for hotspot big data applications},
  author={Bruno, Rodrigo and Ferreira, Paulo},
  booktitle={Proceedings of the 18th ACM/IFIP/USENIX Middleware Conference},
  pages={147--160},
  year={2017},
  publisher={ACM},
  address={Las Vegas, NV, USA}
}

@inproceedings{agarwal2017thermostat,
  title={Thermostat: Application-transparent page management for two-tiered main memory},
  author={Agarwal, Neha and Wenisch, Thomas F},
  booktitle={Proceedings of the Twenty-Second International Conference on Architectural Support for Programming Languages and Operating Systems},
  pages={631--644},
  year={2017},
  publisher={ACM},
  address={Xi'an, China}
}

@article{xu2024prompt,
  title={PROMPT: A Fast and Extensible Memory Profiling Framework},
  author={Xu, Ziyang and Chon, Yebin and Su, Yian and Tan, Zujun and Apostolakis, Sotiris and Campanoni, Simone and August, David I},
  journal={Proceedings of the ACM on Programming Languages},
  volume={8},
  number={OOPSLA1},
  pages={449--473},
  year={2024},
  publisher={ACM New York, NY, USA}
}

@article{chang1991using,
  title={Using profile information to assist classic code optimizations},
  author={Chang, Pohua P and Mahlke, Scott A and Hwu, Wen-Mei W},
  journal={Software: Practice and Experience},
  volume={21},
  number={12},
  pages={1301--1321},
  year={1991},
  publisher={Wiley Online Library}
}

@article{visochan2022method,
  title={Method for Profile-Guided Optimization of Android Applications Using Random Forest},
  author={Visochan, Andrei and Stroganov, Andrey and Titarenko, Ivan and Lonchakov, Sergei and Mologin, Stanislav and Pavlova, Svetlana and Lyupa, Anastasia and Kozlova, Anna},
  journal={IEEE Access},
  volume={10},
  pages={109652--109662},
  year={2022},
  publisher={IEEE}
}

@inproceedings{pettis1990profile,
  title={Profile guided code positioning},
  author={Pettis, Karl and Hansen, Robert C},
  booktitle={Proceedings of the ACM SIGPLAN 1990 conference on Programming language design and implementation},
  pages={16--27},
  year={1990}
}

@inproceedings{sydow2020towards,
  title={Towards profile-guided optimization for safe and efficient parallel stream processing in rust},
  author={Sydow, Stefan and Nabelsee, Mohannad and Glesner, Sabine and Herber, Paula},
  booktitle={2020 IEEE 32nd International Symposium on Computer Architecture and High Performance Computing (SBAC-PAD)},
  pages={289--296},
  year={2020},
  organization={IEEE}
}

@inproceedings{grove1995profile,
  title={Profile-guided receiver class prediction},
  author={Grove, David and Dean, Jeffrey and Garrett, Charles and Chambers, Craig},
  booktitle={Proceedings of the Tenth Annual Conference on Object-Oriented Programming Systems, Languages, and Applications},
  pages={108--123},
  year={1995},
  publisher={ACM},
  address={Austin, TX, USA}
}

@inproceedings{radigan1995integer,
  title={Integer loop code generation for VLIW},
  author={Radigan, James and Chang, Pohua and Banerjee, Utpal},
  booktitle={International Workshop on Languages and Compilers for Parallel Computing},
  pages={318--330},
  year={1995},
  organization={Springer}
}

@inproceedings{joshi2004targeted,
  title={Targeted path profiling: Lower overhead path profiling for staged dynamic optimization systems},
  author={Joshi, Rahul and Bond, Michael D and Zilles, Craig},
  booktitle={International Symposium on Code Generation and Optimization, 2004. CGO 2004},
  pages={239--250},
  year={2004},
  organization={IEEE},
  publisher={IEEE},
  address={San Jose, CA, USA}
}

@inproceedings{gupta1997resource,
  title={Resource-sensitive profile-directed data flow analysis for code optimization},
  author={Gupta, Rajiv and Berson, David A and Fang, Jesse Z},
  booktitle={Proceedings of the 30th Annual International Symposium on Microarchitecture},
  pages={358--368},
  year={1997},
  organization={IEEE},
  publisher={IEEE},
  address={Research Triangle Park, NC, USA}
}

@inproceedings{chatterjee2022cas,
  title={Com-CAS: Effective Cache Apportioning under Compiler Guidance},
  author={Chatterjee, Bodhisatwa and Khan, Sharjeel and Pande, Santosh},
  booktitle={Proceedings of the International Conference on Parallel Architectures and Compilation Techniques},
  pages={14--27},
  year={2022},
  publisher={IEEE},
  address={Chicago, IL, USA}
}

@article{mururu2023beacons,
  title={Beacons: An end-to-end compiler framework for predicting and utilizing dynamic loop characteristics},
  author={Mururu, Girish and Khan, Sharjeel and Chatterjee, Bodhisatwa and Chen, Chao and Porter, Chris and Gavrilovska, Ada and Pande, Santosh},
  journal={Proceedings of the ACM on Programming Languages},
  volume={7},
  number={OOPSLA2},
  pages={173--203},
  year={2023},
  publisher={ACM New York, NY, USA}
}

@article{larus1999whole,
  title={Whole program paths},
  author={Larus, James R},
  journal={ACM SIGPLAN Notices},
  volume={34},
  number={5},
  pages={259--269},
  year={1999},
  publisher={ACM New York, NY, USA}
}

@inproceedings{tallam2005extended,
  title={Extended whole program paths},
  author={Tallam, Sriraman and Gupta, Rajiv and Zhang, Xiangyu},
  booktitle={14th International Conference on Parallel Architectures and Compilation Techniques (PACT'05)},
  pages={17--26},
  year={2005},
  organization={IEEE}
}

@inproceedings{zhang2001timestamped,
  title={Timestamped whole program path representation and its applications},
  author={Zhang, Youtao and Gupta, Rajiv},
  booktitle={Proceedings of the ACM SIGPLAN 2001 conference on Programming language design and implementation},
  pages={180--190},
  year={2001}
}

@article{ball2000using,
  title={Using paths to measure, explain, and enhance program behavior},
  author={Ball, Thomas and Larus, James R},
  journal={Computer},
  volume={33},
  number={7},
  pages={57--65},
  year={2000},
  publisher={IEEE}
}

@article{martonosi1993effectiveness,
  title={Effectiveness of trace sampling for performance debugging tools},
  author={Martonosi, Margaret and Gupta, Anoop and Anderson, Thomas},
  journal={ACM SIGMETRICS Performance Evaluation Review},
  volume={21},
  number={1},
  pages={248--259},
  year={1993},
  publisher={ACM New York, NY, USA}
}

@article{kessler1994comparison,
  title={A comparison of trace-sampling techniques for multi-megabyte caches},
  author={Kessler, Richard E. and Hill, Mark D and Wood, David A},
  journal={IEEE Transactions on Computers},
  volume={43},
  number={6},
  pages={664--675},
  year={1994},
  publisher={IEEE}
}

@inproceedings{novillo2014samplepgo,
  title={Samplepgo-the power of profile guided optimizations without the usability burden},
  author={Novillo, Diego},
  booktitle={2014 LLVM Compiler Infrastructure in HPC},
  pages={22--28},
  year={2014},
  organization={IEEE}
}

@inproceedings{liu2016sample,
  title={A sample profile-based optimization method with better precision},
  author={Liu, XH and Peng, Yuan and Zhang, JY},
  booktitle={Proc. Int. Conf. Artif. Intell. Comput. Sci},
  pages={340--346},
  year={2016}
}

@inproceedings{bucek2018spec,
  title={SPEC CPU2017: Next-generation compute benchmark},
  author={Bucek, James and Lange, Klaus-Dieter and v. Kistowski, J{\'o}akim},
  booktitle={Companion of the 2018 ACM/SPEC International Conference on Performance Engineering},
  pages={41--42},
  year={2018},
    publisher = {ACM},
  address={New York, NY}
}

@inproceedings{burtscher2003compressing,
  title={Compressing extended program traces using value predictors},
  author={Burtscher, Martin and Jeeradit, Metha},
  booktitle={2003 12th International Conference on Parallel Architectures and Compilation Techniques},
  pages={159--169},
  year={2003},
  organization={IEEE},
  publisher={IEEE},
  address={New Orleans, LA, USA}
}

@inproceedings{law2003whole,
  title={Whole program path-based dynamic impact analysis},
  author={Law, James and Rothermel, Gregg},
  booktitle={25th International Conference on Software Engineering, 2003. Proceedings.},
  pages={308--318},
  year={2003},
  organization={IEEE}
}

@inproceedings{zhang2002path,
  title={Path matching in compressed control flow traces},
  author={Zhang, Youtao and Gupta, Rajiv},
  booktitle={Proceedings DCC 2002. Data Compression Conference},
  pages={132--141},
  year={2002},
  organization={IEEE}
}

@inproceedings{zhang2004whole,
  title={Whole execution traces},
  author={Zhang, Xiangyu and Gupta, Rajiv},
  booktitle={37th International Symposium on Microarchitecture (MICRO-37'04)},
  pages={105--116},
  year={2004},
  organization={IEEE}
}

@article{milenkovic2003stream,
  title={Stream-based trace compression},
  author={Milenkovic, Aleksandar and Milenkovic, Milena},
  journal={IEEE Computer Architecture Letters},
  volume={2},
  number={1},
  pages={4--4},
  year={2003},
  publisher={IEEE}
}

@article{lin2007design,
  title={The design and evaluation of path matching schemes on compressed control flow traces},
  author={Lin, Yongjing and Zhang, Youtao and Gupta, Rajiv},
  journal={Journal of Systems and Software},
  volume={80},
  number={3},
  pages={396--409},
  year={2007},
  publisher={Elsevier}
}

@article{yang2023diffusion,
  title={Diffusion models: A comprehensive survey of methods and applications},
  author={Yang, Ling and Zhang, Zhilong and Song, Yang and Hong, Shenda and Xu, Runsheng and Zhao, Yue and Zhang, Wentao and Cui, Bin and Yang, Ming-Hsuan},
  journal={ACM Computing Surveys},
  volume={56},
  number={4},
  pages={1--39},
  year={2023},
  publisher={ACM New York, NY, USA}
}

@article{goodfellow2020generative,
  title={Generative adversarial networks},
  author={Goodfellow, Ian and Pouget-Abadie, Jean and Mirza, Mehdi and Xu, Bing and Warde-Farley, David and Ozair, Sherjil and Courville, Aaron and Bengio, Yoshua},
  journal={Communications of the ACM},
  volume={63},
  number={11},
  pages={139--144},
  year={2020},
  publisher={ACM New York, NY, USA}
}

@misc{fang2024largelanguagemodelscode,
      title={Large Language Models for Code Analysis: Do LLMs Really Do Their Job?}, 
      author={Chongzhou Fang and Ning Miao and Shaurya Srivastav and Jialin Liu and Ruoyu Zhang and Ruijie Fang and Asmita and Ryan Tsang and Najmeh Nazari and Han Wang and Houman Homayoun},
      year={2024},
      eprint={2310.12357},
      archivePrefix={arXiv},
      primaryClass={cs.SE},
      url={https://arxiv.org/abs/2310.12357}, 
}

@misc{nam2024usingllmhelpcode,
      title={Using an LLM to Help With Code Understanding}, 
      author={Daye Nam and Andrew Macvean and Vincent Hellendoorn and Bogdan Vasilescu and Brad Myers},
      year={2024},
      eprint={2307.08177},
      archivePrefix={arXiv},
      primaryClass={cs.SE},
      url={https://arxiv.org/abs/2307.08177}, 
}

@misc{wang2023codet5opencodelarge,
      title={CodeT5+: Open Code Large Language Models for Code Understanding and Generation}, 
      author={Yue Wang and Hung Le and Akhilesh Deepak Gotmare and Nghi D. Q. Bui and Junnan Li and Steven C. H. Hoi},
      year={2023},
      eprint={2305.07922},
      archivePrefix={arXiv},
      primaryClass={cs.CL},
      url={https://arxiv.org/abs/2305.07922}, 
}

@misc{ma2024lmsunderstandingcodesyntax,
      title={LMs: Understanding Code Syntax and Semantics for Code Analysis}, 
      author={Wei Ma and Shangqing Liu and Zhihao Lin and Wenhan Wang and Qiang Hu and Ye Liu and Cen Zhang and Liming Nie and Li Li and Yang Liu},
      year={2024},
      eprint={2305.12138},
      archivePrefix={arXiv},
      primaryClass={cs.SE},
      url={https://arxiv.org/abs/2305.12138}, 
}

@misc{chen2024reasoningruntimebehaviorprogram,
      title={Reasoning Runtime Behavior of a Program with LLM: How Far Are We?}, 
      author={Junkai Chen and Zhiyuan Pan and Xing Hu and Zhenhao Li and Ge Li and Xin Xia},
      year={2024},
      eprint={2403.16437},
      archivePrefix={arXiv},
      primaryClass={cs.SE},
      url={https://arxiv.org/abs/2403.16437}, 
}

@inproceedings{ding2024reasoningplanningllm,
    author = {Ding, Hao and Fan, Ziwei and Guehring, Ingo and Gupta, Gaurav and Ha, Wooseok and Huan, Jun and Liu, Linbo and Omidvar-Tehrani, Behrooz and Wang, Shiqi and Zhou, Hao},
    title = {Reasoning and Planning with Large Language Models in Code Development},
    year = {2024},
    isbn = {9798400704901},
    publisher = {Association for Computing Machinery},
    address = {New York, NY, USA},
    url = {https://doi.org/10.1145/3637528.3671452},
    doi = {10.1145/3637528.3671452},
    booktitle = {Proceedings of the 30th ACM SIGKDD Conference on Knowledge Discovery and Data Mining},
    pages = {6480–6490},
    numpages = {11},
    keywords = {application modernization, code development, code generation, code migration, large language model},
    location = {Barcelona, Spain},
    series = {KDD '24}
}

@inproceedings{yang2024large,
    title={Large Language Models as Optimizers},
    author={Chengrun Yang and Xuezhi Wang and Yifeng Lu and Hanxiao Liu and Quoc V Le and Denny Zhou and Xinyun Chen},
    booktitle={The Twelfth International Conference on Learning Representations},
    year={2024},
    url={https://openreview.net/forum?id=Bb4VGOWELI}
}

@misc{zhou2024surveyefficientinferencelarge,
      title={A Survey on Efficient Inference for Large Language Models}, 
      author={Zixuan Zhou and Xuefei Ning and Ke Hong and Tianyu Fu and Jiaming Xu and Shiyao Li and Yuming Lou and Luning Wang and Zhihang Yuan and Xiuhong Li and Shengen Yan and Guohao Dai and Xiao-Ping Zhang and Yuhan Dong and Yu Wang},
      year={2024},
      eprint={2404.14294},
      archivePrefix={arXiv},
      primaryClass={cs.CL},
      url={https://arxiv.org/abs/2404.14294}, 
}

@misc{gao2024searchbasedllmscodeoptimization,
      title={Search-Based LLMs for Code Optimization}, 
      author={Shuzheng Gao and Cuiyun Gao and Wenchao Gu and Michael Lyu},
      year={2024},
      eprint={2408.12159},
      archivePrefix={arXiv},
      primaryClass={cs.SE},
      url={https://arxiv.org/abs/2408.12159}, 
}

@misc{qiu2024efficientllmgeneratedcoderigorous,
      title={How Efficient is LLM-Generated Code? A Rigorous \& High-Standard Benchmark}, 
      author={Ruizhong Qiu and Weiliang Will Zeng and Hanghang Tong and James Ezick and Christopher Lott},
      year={2024},
      eprint={2406.06647},
      archivePrefix={arXiv},
      primaryClass={cs.SE},
      url={https://arxiv.org/abs/2406.06647}, 
}

@misc{rosas2024aioptimizecodecomparative,
      title={Should AI Optimize Your Code? A Comparative Study of Current Large Language Models Versus Classical Optimizing Compilers}, 
      author={Miguel Romero Rosas and Miguel Torres Sanchez and Rudolf Eigenmann},
      year={2024},
      eprint={2406.12146},
      archivePrefix={arXiv},
      primaryClass={cs.AI},
      url={https://arxiv.org/abs/2406.12146}, 
}

@misc{zhong2024chatgptreplacestackoverflowstudy,
      title={Can ChatGPT replace StackOverflow? A Study on Robustness and Reliability of Large Language Model Code Generation}, 
      author={Li Zhong and Zilong Wang},
      year={2024},
      eprint={2308.10335},
      archivePrefix={arXiv},
      primaryClass={cs.CL},
      url={https://arxiv.org/abs/2308.10335}, 
}

@misc{spiess2024calibrationcorrectnesslanguagemodels,
      title={Calibration and Correctness of Language Models for Code}, 
      author={Claudio Spiess and David Gros and Kunal Suresh Pai and Michael Pradel and Md Rafiqul Islam Rabin and Amin Alipour and Susmit Jha and Prem Devanbu and Toufique Ahmed},
      year={2024},
      eprint={2402.02047},
      archivePrefix={arXiv},
      primaryClass={cs.SE},
      url={https://arxiv.org/abs/2402.02047}, 
}

@misc{ma2024llamocoinstructiontuninglarge,
      title={LLaMoCo: Instruction Tuning of Large Language Models for Optimization Code Generation}, 
      author={Zeyuan Ma and Hongshu Guo and Jiacheng Chen and Guojun Peng and Zhiguang Cao and Yining Ma and Yue-Jiao Gong},
      year={2024},
      eprint={2403.01131},
      archivePrefix={arXiv},
      primaryClass={math.OC},
      url={https://arxiv.org/abs/2403.01131}, 
}

@inproceedings{liu2023,
 author = {Liu, Jiawei and Xia, Chunqiu Steven and Wang, Yuyao and ZHANG, LINGMING},
 booktitle = {Advances in Neural Information Processing Systems},
 editor = {A. Oh and T. Naumann and A. Globerson and K. Saenko and M. Hardt and S. Levine},
 pages = {21558--21572},
 publisher = {Curran Associates, Inc.},
 title = {Is Your Code Generated by ChatGPT Really Correct? Rigorous Evaluation of Large Language Models for Code Generation},
 url = {https://proceedings.neurips.cc/paper_files/paper/2023/file/43e9d647ccd3e4b7b5baab53f0368686-Paper-Conference.pdf},
 volume = {36},
 year = {2023}
}

@INPROCEEDINGS{tang2024,
  author={Tang, Ningzhi},
  booktitle={2024 IEEE Symposium on Visual Languages and Human-Centric Computing (VL/HCC)}, 
  title={Towards Effective Validation and Integration of LLM-Generated Code}, 
  year={2024},
  volume={},
  number={},
  pages={369-370},
  keywords={Productivity;Visualization;Codes;Computational modeling;Large language models;Computer science education;Programming profession;Context modeling;large language model;debugging;developer behavior;code generation},
  doi={10.1109/VL/HCC60511.2024.00051}}

@misc{ngassom2024,
      title={Chain of Targeted Verification Questions to Improve the Reliability of Code Generated by LLMs}, 
      author={Sylvain Kouemo Ngassom and Arghavan Moradi Dakhel and Florian Tambon and Foutse Khomh},
      year={2024},
      eprint={2405.13932},
      archivePrefix={arXiv},
      primaryClass={cs.SE},
      url={https://arxiv.org/abs/2405.13932}, 
}

@misc{parmar2024,
      title={LogicBench: Towards Systematic Evaluation of Logical Reasoning Ability of Large Language Models}, 
      author={Mihir Parmar and Nisarg Patel and Neeraj Varshney and Mutsumi Nakamura and Man Luo and Santosh Mashetty and Arindam Mitra and Chitta Baral},
      year={2024},
      eprint={2404.15522},
      archivePrefix={arXiv},
      primaryClass={cs.CL},
      url={https://arxiv.org/abs/2404.15522}, 
}

@inproceedings{li2024relational,
    author = {Li, Zhiming and Cao, Yushi and Xu, Xiufeng and Jiang, Junzhe and Liu, Xu and Teo, Yon Shin and Lin, Shang-Wei and Liu, Yang},
    title = {LLMs for Relational Reasoning: How Far are We?},
    year = {2024},
    isbn = {9798400705793},
    publisher = {Association for Computing Machinery},
    address = {New York, NY, USA},
    url = {https://doi.org/10.1145/3643795.3648387},
    doi = {10.1145/3643795.3648387},
    booktitle = {Proceedings of the 1st International Workshop on Large Language Models for Code},
    pages = {119–126},
    numpages = {8},
    keywords = {large language models, relational reasoning, program induction},
    location = {Lisbon, Portugal},
    series = {LLM4Code '24}
}

@inproceedings{xue2024,
    author = {Xue, Zhipeng and Gao, Zhipeng and Wang, Shaohua and Hu, Xing and Xia, Xin and Li, Shanping},
    title = {SelfPiCo: Self-Guided Partial Code Execution with LLMs},
    year = {2024},
    isbn = {9798400706127},
    publisher = {Association for Computing Machinery},
    address = {New York, NY, USA},
    url = {https://doi.org/10.1145/3650212.3680368},
    doi = {10.1145/3650212.3680368},
    booktitle = {Proceedings of the 33rd ACM SIGSOFT International Symposium on Software Testing and Analysis},
    pages = {1389–1401},
    numpages = {13},
    keywords = {Dynamic Analysis, Large Language Model, Partial Code Execution, Prompt Engineering},
    location = {Vienna, Austria},
    series = {ISSTA 2024}
}

@misc{yang2024llm,
      title={If LLM Is the Wizard, Then Code Is the Wand: A Survey on How Code Empowers Large Language Models to Serve as Intelligent Agents}, 
      author={Ke Yang and Jiateng Liu and John Wu and Chaoqi Yang and Yi R. Fung and Sha Li and Zixuan Huang and Xu Cao and Xingyao Wang and Yiquan Wang and Heng Ji and Chengxiang Zhai},
      year={2024},
      eprint={2401.00812},
      archivePrefix={arXiv},
      primaryClass={cs.CL},
      url={https://arxiv.org/abs/2401.00812}, 
}

@misc{sun2024llmruntimeerrorhandler,
      title={LLM as Runtime Error Handler: A Promising Pathway to Adaptive Self-Healing of Software Systems}, 
      author={Zhensu Sun and Haotian Zhu and Bowen Xu and Xiaoning Du and Li Li and David Lo},
      year={2024},
      eprint={2408.01055},
      archivePrefix={arXiv},
      primaryClass={cs.SE},
      url={https://arxiv.org/abs/2408.01055}, 
}

@misc{krishna2024codellmdevkitframeworkcontextualizingcode,
      title={Codellm-Devkit: A Framework for Contextualizing Code LLMs with Program Analysis Insights}, 
      author={Rahul Krishna and Rangeet Pan and Raju Pavuluri and Srikanth Tamilselvam and Maja Vukovic and Saurabh Sinha},
      year={2024},
      eprint={2410.13007},
      archivePrefix={arXiv},
      primaryClass={cs.SE},
      url={https://arxiv.org/abs/2410.13007}, 
}

@misc{hadi2024,
author = {Hadi, Muhammad Usman and Al-Tashi, Qasem and Qureshi, Rizwan and Shah, Abbas and Muneer, Amgad and Irfan, Muhammad and Zafar, Anas and Shaikh, Muhammad and Akhtar, Naveed and Hassan, Syed and Shoman, Maged and Wu, Jia and Mirjalili, Seyedali and Shah, Mubarak},
year = {2024},
month = {09},
pages = {},
title = {Large Language Models: A Comprehensive Survey of its Applications, Challenges, Limitations, and Future Prospects},
doi = {10.36227/techrxiv.23589741.v7}
}

@misc{cordeiro2024empiricalstudycoderefactoring,
      title={An Empirical Study on the Code Refactoring Capability of Large Language Models}, 
      author={Jonathan Cordeiro and Shayan Noei and Ying Zou},
      year={2024},
      eprint={2411.02320},
      archivePrefix={arXiv},
      primaryClass={cs.SE},
      url={https://arxiv.org/abs/2411.02320}, 
}

@misc{pomian2024furtherllmsidestatic,
      title={Together We Go Further: LLMs and IDE Static Analysis for Extract Method Refactoring}, 
      author={Dorin Pomian and Abhiram Bellur and Malinda Dilhara and Zarina Kurbatova and Egor Bogomolov and Timofey Bryksin and Danny Dig},
      year={2024},
      eprint={2401.15298},
      archivePrefix={arXiv},
      primaryClass={cs.SE},
      url={https://arxiv.org/abs/2401.15298}, 
}

@misc{liu2023promptingframeworkslargelanguage,
      title={Prompting Frameworks for Large Language Models: A Survey}, 
      author={Xiaoxia Liu and Jingyi Wang and Jun Sun and Xiaohan Yuan and Guoliang Dong and Peng Di and Wenhai Wang and Dongxia Wang},
      year={2023},
      eprint={2311.12785},
      archivePrefix={arXiv},
      primaryClass={cs.SE},
      url={https://arxiv.org/abs/2311.12785}, 
}

@misc{wu2024evolutionarycomputationeralarge,
      title={Evolutionary Computation in the Era of Large Language Model: Survey and Roadmap}, 
      author={Xingyu Wu and Sheng-hao Wu and Jibin Wu and Liang Feng and Kay Chen Tan},
      year={2024},
      eprint={2401.10034},
      archivePrefix={arXiv},
      primaryClass={cs.NE},
      url={https://arxiv.org/abs/2401.10034}, 
}

@article{zhou2024survey,
  title={A survey on efficient inference for large language models},
  author={Zhou, Zixuan and Ning, Xuefei and Hong, Ke and Fu, Tianyu and Xu, Jiaming and Li, Shiyao and Lou, Yuming and Wang, Luning and Yuan, Zhihang and Li, Xiuhong and others},
  journal={arXiv preprint arXiv:2404.14294},
  year={2024}
}

@article{hoag2024reordering,
  title={Reordering Functions in Mobiles Apps for Reduced Size and Faster Start-Up},
  author={Hoag, Ellis and Lee, Kyungwoo and Mestre, Juli{\'a}n and Pupyrev, Sergey and Zhu, YongKang},
  journal={ACM Transactions on Embedded Computing Systems},
  volume={23},
  number={4},
  pages={1--54},
  year={2024},
  publisher={ACM New York, NY}
}

@misc{beamer2017gapbenchmarksuite,
      title={The GAP Benchmark Suite}, 
      author={Scott Beamer and Krste Asanović and David Patterson},
      year={2017},
      eprint={1508.03619},
      archivePrefix={arXiv},
      primaryClass={cs.DC},
      url={https://arxiv.org/abs/1508.03619}, 
}

@article{vespa2021moreira,
    author = {Moreira, Ang\'{e}lica Aparecida and Ottoni, Guilherme and Quint\~{a}o Pereira, Fernando Magno},
    title = {VESPA: static profiling for binary optimization},
    year = {2021},
    issue_date = {October 2021},
    publisher = {Association for Computing Machinery},
    address = {New York, NY, USA},
    volume = {5},
    number = {OOPSLA},
    url = {https://doi.org/10.1145/3485521},
    doi = {10.1145/3485521},
    journal = {Proc. ACM Program. Lang.},
    month = oct,
    articleno = {144},
    numpages = {28},
    keywords = {Profiling, Prediction, Optimization, Compiler}
}

@misc{panchenko2018boltpracticalbinaryoptimizer,
      title={BOLT: A Practical Binary Optimizer for Data Centers and Beyond}, 
      author={Maksim Panchenko and Rafael Auler and Bill Nell and Guilherme Ottoni},
      year={2018},
      eprint={1807.06735},
      archivePrefix={arXiv},
      primaryClass={cs.PL},
      url={https://arxiv.org/abs/1807.06735}, 
}

@INPROCEEDINGS{graalvm2020,
  author={Šipek, M. and Muharemagić, D. and Mihaljević, B. and Radovan, A.},
  booktitle={2020 43rd International Convention on Information, Communication and Electronic Technology (MIPRO)}, 
  title={Enhancing Performance of Cloud-based Software Applications with GraalVM and Quarkus}, 
  year={2020},
  volume={},
  number={},
  pages={1746-1751},
  keywords={Cloud computing;Java;Production;Tools;Virtual machining;Servers;Virtualization;GraalVM;Kubernetes;Quarkus;Cloud computing;native image},
  doi={10.23919/MIPRO48935.2020.9245290}
}

@inproceedings{milikic2025graalnn,
  title={GraalNN: Context-Sensitive Static Profiling with Graph Neural Networks},
  author={Milikic, Lazar and Cugurovic, Milan and Jovanovic, Vojin},
  booktitle={Proceedings of the 23rd ACM/IEEE International Symposium on Code Generation and Optimization},
  pages={123--136},
  year={2025}
}

@misc{cursor_ai_2024,
  title        = {Cursor: AI-Powered Code Editor},
  author       = {{Anysphere, Inc.}},
  year         = {2024},
  howpublished = {\url{https://cursor.com}},
  note         = {Accessed: 2025-03-17}
}

@article{CUGUROVIC2024112058,
title = {GraalSP: Polyglot, efficient, and robust machine learning-based static profiler},
journal = {Journal of Systems and Software},
volume = {213},
pages = {112058},
year = {2024},
issn = {0164-1212},
doi = {https://doi.org/10.1016/j.jss.2024.112058},
url = {https://www.sciencedirect.com/science/article/pii/S0164121224001031},
author = {Milan Čugurović and Milena {Vujošević Janičić} and Vojin Jovanović and Thomas Würthinger},
keywords = {Compilers, GraalVM, Static profiler, Machine learning, Regression, XGBoost ensemble},
abstract = {Compilers use profiles to apply profile-guided optimizations and produce efficient programs. Dynamic profilers collect high-quality profiles but require identifying suitable profile collection workloads, introduce additional complexity to the application build pipeline, and cause significant time and memory overheads. Modern static profilers use machine learning (ML) models to predict profiles and mitigate these issues. However, state-of-the-art ML-based static profilers handcraft features, which are platform-specific and challenging to adapt to other architectures and programming languages. They use computationally expensive deep neural network models, thus increasing application compile time. Furthermore, they can introduce performance degradation in the compiled programs due to inaccurate profile predictions. We present GraalSP, a portable, polyglot, efficient, and robust ML-based static profiler. GraalSP is portable as it defines features on a high-level, graph-based intermediate representation and semi-automates the definition of features. For the same reason, it is also polyglot and can operate on any language that compiles to Java bytecode (such as Java, Scala, and Kotlin). GraalSP is efficient as it uses an XGBoost model based on lightweight decision tree models and robust as it uses branch probability prediction heuristics to ensure the high performance of compiled programs. We integrated GraalSP into the Graal compiler and achieved an execution time speedup of 7.46% geometric mean compared to a default configuration of the Graal compiler.}
}

\appendix

\end{document}